\newcommand{\PPull}{\mbox{\tt PUSH-PULL}}
\newcommand{\Push}{\mbox{\tt PUSH}}
\newcommand{\Pull}{\mbox{\tt PULL}}
\newtheorem{theorem}{Theorem} 
\newtheorem{corollary}[theorem]{Corollary}
\newtheorem{lemma}[theorem]{Lemma}
\newtheorem{claim}[theorem]{Claim}
\newtheorem{definition}[theorem]{Definition}
\DeclareMathOperator{\polylog}{polylog}
\DeclareMathOperator{\diam}{diam}
\DeclareMathOperator{\vol}{vol}
\DeclareMathOperator{\Exp}{\mathbf{E}}
\DeclareMathOperator{\Var}{\mathbf{Var}}
\DeclareMathOperator{\Cov}{\mathbf{Cov}}
\begin{document}

\title{Tight Bounds for Rumor Spreading with Vertex Expansion}

\author{
George Giakkoupis\\
INRIA, France\\
\href{mailto:george.giakkoupis@inria.fr}{george.giakkoupis@inria.fr}
}

\maketitle

\begin{abstract}
We establish a bound for the classic \PPull{} rumor spreading protocol on arbitrary graphs, in terms of the vertex expansion of the graph.
We show that $O(\log^2(n) /\alpha)$ rounds suffice with high probability to spread a rumor from a single node to all $n$ nodes, in any graph with vertex expansion at least $\alpha$.
This bound matches the known lower bound, and settles the question on the relationship between rumor spreading and vertex expansion asked by Chierichetti, Lattanzi, and Panconesi~\cite{Chierichetti2010soda}.
Further, some of the arguments used in the proof may be of independent interest, as they give new insights, for example, on how to choose a small set of nodes in which to plant the rumor initially, to guarantee fast rumor spreading.
\end{abstract}

\section{Introduction}

We study a classic randomized protocol for information dissemination in networks, known as \emph{(randomized) rumor spreading}.
The protocol proceeds in a sequence of synchronous rounds.\footnote{There are also asynchronous versions of rumor spreading (see, e.g.,~\cite{Boyd2006}), but in this paper we focus on synchronous protocols.}
Initially, in round 0, an arbitrary node learns a piece of information,
the \emph{rumor}.
This rumor is then spread iteratively to other nodes:
In each round, every informed node (i.e., every node that learned the rumor in a previous round) chooses a random neighbor and sends the rumor to that neighbor.
This is the \Push{} version of the protocol.
The \Pull{} version is symmetric:
In each round, every uninformed node contacts a random neighbor, and if this neighbor knows the rumor it sends it to the uninformed node.
Finally, the \PPull{} algorithm is the combination of both:
In each round, every node chooses a random neighbor to send the rumor to, if the node knows the rumor, or to request the rumor from, otherwise.

These protocols were proposed almost thirty years ago, and have been the subject of extensive study, especially in the past decade.
The most studied question concerns the number of rounds that these protocols need to spread a rumor in various network topologies.
It has been shown that $O(\log n)$ rounds suffice with high probability (w.h.p.)  for several families for networks, from basic communication networks, such as complete graphs and hypercubes, to more complex structures, such as preferential attachment graphs modeling social networks (see the Related Work Section).

A main motivation for the study of rumor spreading is its application in algorithms for broadcasting in communication networks~\cite{Demers1987podc,Feige1990,Karp2000focs}.
Rumor spreading provides a scalable alternative to the flooding protocol (where each node sends the information to \emph{all} its neighbors in a round), and a simpler and more robust alternative to deterministic solutions.
The advantages of simplicity (each node makes a simple local decision in each round; no knowledge of the global topology is needed; no state is maintained), scalability (each node initiates just one connection per round), and robustness (the protocol tolerates random node/link failures without the use of error recovery mechanisms) make rumor spreading protocols particularly suited for today's distributed networks of massive-scale.
Such networks, e.g., peer-to-peer, mobile ad-hoc, or sensor networks, are highly dynamic, suffer from frequent link and node failures, or nodes have limited computational, communication, and energy resources.

Another motivation for the study of rumor spreading protocols is that they provide intuition on how information spreads in social networks~\cite{Chierichetti2009icalp}.
More generally, understanding these simple rumor spreading protocols may
lead to a better understanding of more realistic epidemic processes on social and other complex networks.

In this paper, our main focus is the connection between rumor spreading and graph expansion properties of networks.
Many of the topologies for which rumor spreading is known to be fast have high expansion.
Further, empirical studies indicate that social networks have good expansion properties as well~\cite{Eubank04nature,Leskovec08www}.


Several works have studied the relationship of rumor spreading with the \emph{conductance} of the network graph~\cite{Mosk-Aoyama2008,Chierichetti2010soda,Chierichetti2010stoc,Giakkoupis2011stacs}.
The conductance $\phi \in (0,1]$ is a standard expansion measure defined roughly as the minimum ratio of the edges leaving a set of nodes over the total number of edges incident to these nodes~(see Section~\ref{sec:preliminaries}).
The main result of the above works is an upper bound of $O(\log(n)/\phi)$ rounds for \PPull{} to inform all nodes w.h.p., for any $n$-node graph with conductance at least $\phi$.
This bound is tight, as there are graphs with diameter $\Omega(\log(n)/\phi)$~\cite{Chierichetti2010stoc}.

The above result has been used in the design of a recent breed of information dissemination protocols~\cite{Censor-Hillel2011soda,Censor-Hillel2012stoc,Haeupler2013soda}.
These protocols rely on the fact that \PPull{} spreads information fast in subgraphs of high conductance, and they combine \PPull{} with more sophisticated rules on how each node chooses the neighbor to contact in each round.
These new protocols achieve fast information spreading in a broader class of networks, and some achieve for all graphs time bounds that are close (within poly-logarithmic factors) to the network diameter, which is the natural lower bound for information dissemination in networks.

More recently, another standard measure of expansion, vertex expansion, has been studied in connection with rumor spreading.
The vertex expansion $\alpha \in (0,1]$ of a graph is, roughly, the minimum ratio of the neighbors that a set of nodes has (that are not in the set) over the size of the set.
In general, vertex expansion is incomparable to conductance, as there are graphs with high vertex expansion but low conductance,
and vice versa.
(For an account of the differences between the two measures see~\cite{GiakkoupisS2012soda}.)
The question of whether high vertex expansion implies fast rumor spreading (similarly to high conductance) was highlighted as an interesting open problem in~\cite{Chierichetti2010soda}.
This problem was studied in~\cite{Sauerwald2011soda,GiakkoupisS2012soda}, and their main result was an upper bound for \PPull{} of $O(\log^{2.5}(n)/\alpha)$ rounds w.h.p., for any graph with vertex expansion at least $\alpha$.
The precise bound is $O(\log n \cdot \log \Delta \cdot \sqrt{ \log (2 \Delta/ \delta)}/\alpha)$, where $\Delta$ and $\delta$ are the maximum and minimum node degrees, respectively.
Further, a lower bound of $\Omega(\log n \cdot \log(\Delta)/\alpha)$ was shown,  assuming $\Delta/\alpha\leq n^{1-\epsilon}$ for some constant $\epsilon > 0$.

\paragraph{Our Contribution.}

Our main result is the following upper bound for \PPull{} in terms of vertex expansion that matches the known lower bound.
\begin{theorem}
    \label{thm:ub-pushpull}
    Let $G=(V,E)$ be a graph with $|V| = n$, maximum degree at most $\Delta$, and vertex expansion at least $\alpha$.
    For any such graph $G$ and constant $\beta>0$, with probability $1-O(n^{-\beta})$ \PPull{} informs all nodes of $G$ in $O\big(\log n \cdot \log (\Delta)/\alpha\big)$ rounds.
\end{theorem}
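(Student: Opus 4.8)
The plan is to track the informed set $I_t$ round by round and show it grows from the single source to all of $V$ within the claimed budget. Since both \PPull{} and the vertex-expansion quantity $\alpha$ are symmetric under complementation, I would first reduce to the \emph{growth regime} $|I_t|\le n/2$: once a strict majority is informed, the uninformed set $U_t$ has $|U_t|\le n/2$, so its boundary $N(U_t)\setminus U_t$ has size at least $\alpha|U_t|$ and consists of informed nodes, and \Pull{} into $U_t$ mirrors \Push{} out of a small informed set. Thus the same bound applied to $U_t$ drives it to $\emptyset$. Within the regime $|I_t|\le n/2$ I would partition the execution into $O(\log n/\alpha)$ \emph{growth phases}: a phase begins with informed set $I$ of size $s$ and \emph{succeeds} once the informed set reaches $(1+c\alpha)s$ for a fixed constant $c>0$. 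Because $(1+c\alpha)^{O(\log n/\alpha)}\ge n$, that many successful phases suffice. Allotting $O(\log\Delta)$ rounds per phase and taking a union bound over all phases would then give the stated running time, \emph{provided} each phase succeeds within its budget with probability $1-n^{-\Omega(\beta)}$.

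The heart of the argument is therefore a per-phase lemma: from informed set $I$ with $|I|=s\le n/2$, within $O(\log\Delta)$ rounds \PPull{} informs $\Omega(\alpha s)$ new nodes with high probability. I would work with the boundary $B=N(I)\setminus I$, which vertex expansion guarantees has $|B|\ge \alpha s$, and with the crossing edges between $I$ and $B$. The key device is to activate each crossing edge $\{u,v\}$ with $u\in I$, $v\in B$ through the \emph{cheaper} of its two directions: in one round \Push{} traverses it with probability $1/\deg(u)$ and \Pull{} with probability $1/\deg(v)$, so its combined per-round success probability is $\Omega\bigl(1/\min(\deg u,\deg v)\bigr)$. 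I would charge each crossing edge to its lower-degree endpoint and bucket the edges into the $O(\log\Delta)$ degree classes $\{2^{j}\le \min(\deg u,\deg v)<2^{j+1}\}$, using \Push{} to exploit low-degree informed endpoints and \Pull{} to exploit low-degree boundary endpoints.

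For the probabilistic part I would model the informing of a fixed boundary vertex across the phase as a sequence of (nearly) independent Bernoulli trials whose per-round success probability is controlled by the charging above, and use Chernoff/Azuma concentration to show that a constant fraction of the targeted degree class becomes informed within the $O(\log\Delta)$-round budget; a union bound over the $O(\log n/\alpha)$ phases and the $O(\log\Delta)$ degree scales then upgrades this to a global high-probability guarantee, and the complementation symmetry above closes the second half of the spread without a separate analysis.

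The hard part, which is exactly where prior work lost factors, is extracting only an $O(\log\Delta)$ overhead from degree heterogeneity rather than $\Omega(\Delta)$: a crossing edge with both endpoints of degree $\Theta(\Delta)$ has per-round success $\Theta(1/\Delta)$, so naive independent activation of such edges would need $\Theta(\Delta)$ rounds, and a naive pigeonhole over the $\log\Delta$ buckets would lose a further $\log\Delta$ factor in the growth rate. Overcoming this is where I expect the paper's new ideas to be decisive: one must amortize progress across the $\log\Delta$ degree scales (so that a single push or pull by a shared high-degree endpoint is accounted as serving the whole phase, not one edge at a time) and, crucially, arrange that no phase is bottlenecked by a high-degree ``star'' whose large boundary is served by only a few informed nodes. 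The latter is precisely the role of the initial seeding advertised in the abstract: by planting the rumor in a small, carefully chosen set that meets enough degree scales and low-degree access points, one ensures each phase has cheap crossing edges to exploit, which is what should remove the $\sqrt{\log(2\Delta/\delta)}$ factor of the previous $O(\log^{2.5}(n)/\alpha)$ bound and yield the tight $O(\log n\cdot\log(\Delta)/\alpha)$.
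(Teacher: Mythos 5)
There is a genuine gap, and it sits exactly at your per-phase lemma: ``from informed set $I$ with $|I|=s\le n/2$, within $O(\log\Delta)$ rounds \PPull{} informs $\Omega(\alpha s)$ new nodes w.h.p.'' This statement is false, not merely hard to prove. Consider the double star: two hubs $h_1,h_2$ joined by an edge, each hub adjacent to $\Delta-1$ leaves, so $n=2\Delta$ and the vertex expansion is $\alpha=\Theta(1/\Delta)$ (witnessed by the leaf set of one star). Take $I$ to be the first star, so $|I|=n/2$ and $\alpha|I|=\Theta(1)$; a successful phase must inform at least one new node, which can only be $h_2$. But $h_2$ is informed with probability at most $2/\Delta$ per round ($h_2$ pulls from $h_1$ with probability $1/\Delta$, and $h_1$ pushes to $h_2$ with probability $1/\Delta$), so within an $O(\log\Delta)$-round budget the phase succeeds with probability $O(\log(\Delta)/\Delta)=o(1)$, and the expected time for this single $(1+c\alpha)$-factor step is $\Theta(\Delta)$, not $O(\log\Delta)$. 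No charging or bucketing of crossing edges can repair this: there is exactly one crossing edge and both its endpoints have degree $\Delta$. Your framework is doubly damaged because the union bound over phases needs \emph{every} phase to succeed within its budget with probability $1-n^{-\Omega(\beta)}$, which is hopeless whenever the boundary is small and high-degree. Your anticipated rescue --- that the ``initial seeding'' from the abstract ensures cheap crossing edges in every phase --- misreads the paper: the dominating-set result is a separate warm-up application (Theorem~\ref{thm:dominating-set}); in Theorem~\ref{thm:ub-pushpull} the rumor starts at an arbitrary single node and no seeding is available.

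The paper's resolution is structurally different, and it is the idea your proposal is missing: it does not try to grow $|I_t|$ at a fixed rate, but instead grows the potential $\Psi_t=|I_t|+|\partial I_t|/2$, which also gives credit for \emph{acquiring boundary}. It introduces the boundary expansion $h(I_t)$ (Definition~\ref{def:bexpansion}) and argues a dichotomy: if $h(I_t)$ is small, a constant fraction of $\partial I_t$ becomes informed in expected $O(\log\Delta)$ rounds (Lemma~\ref{lem:growthIt}, the core argument, proved via ``participating'' nodes and the push--pull reversal symmetry); if $h(I_t)$ is large, then even though $|I_t|$ may stall, many nodes of $\partial(I_t^+)$ acquire an informed neighbor, so $\Psi_t$ still grows (Lemma~\ref{lem:growthItplus}). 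In the double-star example this amortization is exactly what saves the bound: informing $h_2$ takes $\Theta(\Delta)$ rounds, but the moment it happens $\Psi$ jumps by $\Theta(\Delta)$, matching the expected doubling rate $O(\log(\Delta)/\alpha)$ of Claim~\ref{clm:psi-double}. Equally important, the paper needs only \emph{expected}-time bounds per doubling, and obtains the high-probability statement by Markov's inequality plus a Chernoff bound over $\Theta(\log n)$ doubling intervals, so slow doublings are compensated by fast ones --- in contrast to your per-phase w.h.p.\ requirement. (Separately, your reduction for the second half via ``complementation symmetry'' is not valid as stated, since rumor spreading is not symmetric under exchanging informed and uninformed nodes; the paper instead uses the reversal lemma, Lemma~\ref{lem:ppsymmetry}, that $T(V_1,V_2)$ and $T(V_2,V_1)$ are equidistributed.)
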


This result, together with the $O(\log(n)/\phi)$ bound with conductance, resolve completely the natural question asked by Chierichetti, Lattanzi, and Panconesi~\cite{Chierichetti2009icalp,Chierichetti2010soda}, on the relationship between rumor spreading and the two most standard measures of graph expansion.

Our proof of Theorem~\ref{thm:ub-pushpull} can be summarized as follows.
Let $S$ be the set of informed nodes after a given round, and $\partial S$  be the boundary of $S$, i.e., the set of nodes from $V-S$ that have some neighbor in $S$.
The proof defines a new simple measure of the expansion of $S$, called boundary expansion.
If $S$ has low boundary expansion, then we prove that a constant fraction of the boundary $\partial S$ gets informed in an expected number of $O(\log\Delta)$ rounds; this is the core argument of the proof.
If, instead, $S$ has high boundary expansion, then we have that an expected number of $\Omega(|\partial S|)$ nodes from $V-(S\cup\partial S)$ are added to the boundary in a single round.
It follows that a simple potential function $\Psi(S)$ that counts 1 for each informed node and 1/2 for each node in the boundary, increases ``on average" per round by at least $\Omega(|\partial S|/\log\Delta) = \Omega(\alpha\Psi(S)/\log\Delta)$; this is the right increase rate we need to show the $O\big(\log n \cdot \log (\Delta)/\alpha\big)$ time bound.

The above proof borrows some ideas from the analysis in~\cite{GiakkoupisS2012soda}. In particular, both proofs study the growth of essentially the same potential function~$\Psi$.
However, the arguments they use are different.
Moreover, our proof provides a clear intuition for the result not conveyed by earlier proofs.

Some of the arguments in our proof may be of independent interest as tools for the analysis of rumor spreading.
We demonstrated this by reusing those arguments to show the following smaller results.
\begin{enumerate} 
\item \label{res:dominatingset}
Our first result serves as a ``warm-up" for the main proof, as it uses a simpler version of the core argument of our analysis.
We show that if a rumor is initially known by a subset of nodes that is a dominating
set,
then $O(\log n)$ rounds of \PPull{} suffice w.h.p.\ to inform the other nodes.
(In fact we can use just \Pull{} instead of \PPull.)
This result is somewhat relevant to problems in viral marketing~\cite{Domingos2001kdd,Kempe2003kdd}:
It says that if we want to plant a rumor, or ad, in a (small) initial set of nodes in an arbitrary network, so that the remaining nodes get informed quickly by rumor spreading, then it suffices that the set we choose be a dominating set.\footnote{Note that the analysis of the more sophisticated information dissemination algorithm proposed in~\cite{Censor-Hillel2012stoc,Haeupler2013soda}, yields time bounds of $O(\log^3 n)$ and $O(\log^2 n)$, respectively, for this problem.}

\item \label{res:diametertwo}
Our next result uses (in an interesting way) the main argument from the proof of Result~1 above, to show that \PPull{} spreads a rumor in $O(\log n)$ rounds w.h.p.\ in any graph of diameter (at most) 2.
This result can be viewed as an extension to the classic result that rumor spreading takes $O(\log n)$ rounds in graphs of diameter~1, i.e., in complete graphs.
Note that unlike complete graphs, some graphs of diameter 2 have bad expansion.
Also the result does not hold for graphs of diameter~3.


\item \label{res:regular}
For regular graphs, a variant of our analysis yields an upper bound in terms of a new natural expansion measure, and this bound is strictly stronger than the $O(\log(n)/\phi)$ bound with conductance.
The new measure, denoted $\xi$, is defined as the minimum over all sets $S$ with $|S|\leq n/2$, of the product $\alpha(S)\cdot\phi(\partial S)$ of the vertex expansion of $S$ and the conductance of its boundary $\partial S$.
The bound we show is $O(\log(n)/\xi)$ w.h.p.\ for any regular graph.

\end{enumerate}
The proofs of Results~\ref{res:diametertwo} and~\ref{res:regular} can be found in the Appendix, in Sections~\ref{sec:dimeter-two} and~\ref{sec:regular}, respectively.

\paragraph{Related Work.}

The first works on rumor spreading provided a precise analysis of \Push{} on complete graphs~\cite{Frieze1985, Pittel1987}.
Time bounds of $O(\log n)$ rounds were later proved for hypercubes and random graphs~\cite{Feige1990}.
Other symmetric graphs similar to the hypercube in which rumor spreading takes $O(\log n + \diam)$ rounds were studied in~\cite{Elsasset2007stacs}.
A refined analysis for random graphs proving essentially the same time bound as for complete graphs was provided in~\cite{Fountoulakis2010infocom}, and extended to random regular graphs in~\cite{Fountoulakis2010random}.
The authors of~\cite{Chierichetti2009icalp} studied rumor spreading on preferential attachment graphs, which are used as models for social networks, and showed that \Push{} or \Pull{} need polynomially many rounds, whereas \PPull{} needs only $O(\log^2 n)$ rounds.
The last bound was subsequently improved to $\Theta(\log n)$~\cite{Doerr2011stoc}.
Another class of graphs used to model social networks was considered in~\cite{Fountoulakis2012soda}, and it was shown that \PPull{} needs just $\Theta(\log\log n)$ rounds to inform all but an $\epsilon$-fraction of nodes.

For general graphs, it was shown in~\cite{Mosk-Aoyama2008} a bound of $O(\log(n)/\Phi)$ for a version of \PPull{} with non-uniform probabilities for neighbor selection, where $\Phi$ is the conductance of the matrix of selection probabilities (which is different than the conductance $\phi$ of the graph).
This result does not extend to standard \PPull{}.
A comparable bound in terms of the mixing time of an appropriate random walk was shown in~\cite{Boyd2006}.
For \PPull{}, a polynomial bound in $\log(n)/\phi$ was shown in~\cite{Chierichetti2009icalp} via a connection to a spectral sparsification process.
An improved, almost tight bound was shown in~\cite{Chierichetti2010stoc}, and the tight $O(\log(n)/\phi)$ bound was shown in~\cite{Giakkoupis2011stacs}.
For \Push{} or \Pull{} this bound hold for regular graphs, but not for general graphs.


The bound for \PPull{} with conductance has been used in subsequent works~\cite{Censor-Hillel2010podc,Censor-Hillel2011soda,Censor-Hillel2012stoc,Haeupler2013soda}, mainly to argue that rumors spread fast in subgraphs of high conductance.
A refinement of conductance, called weak conductance, which is greater or equal to $\phi$ was introduced in~\cite{Censor-Hillel2010podc} and related to the time for \PPull{} to inform a certain fraction of nodes.
A gossip protocol for the problem in which every node has a rumor initially, and must receive the rumors of all other nodes was proposed in~\cite{Censor-Hillel2011soda}.
The protocol alternates rounds of \PPull{} with rounds of deterministic communication, and guarantees fast information spreading in all graphs with high weak conductance.
In~\cite{Censor-Hillel2012stoc,Haeupler2013soda} protocols for the same gossip problem were proposed that need only $O(\diam\cdot\polylog(n))$ or $O(\diam + \polylog(n))$ rounds.
One of these protocols is even deterministic~\cite{Haeupler2013soda}.
Hopefully, the results presented in the current paper will also help in the design of new protocol for information dissemination.

\section{Notation}
\label{sec:preliminaries}

Throughout the paper we assume that graph $G = (V,E)$ is connected, and $n=|V|$.
For a node $v\in V$, we denote by $N(v)$ the set of $v$'s neighbors in $G$, and $\deg(v) = |N(u)|$ is the degree of $v$.
By $\Delta$ and $\delta$ we denote the maximum and minimum node degrees of $G$.
For a set of nodes $S\subseteq V$, we denote by $\partial S$ the \emph{boundary} of $S$, that is, the set of neighbors of $S$ that are not in $S$; formally, $\partial S = \{ v \in V - S \colon N(v)\cap S\neq \emptyset\}$.
We will write $S^+$ to denote the set $S\cup \partial S$.
Set $S$ is a \emph{dominating set} of $G$ iff each node $v\in V$ either belongs to $S$ or has a neighbor from $S$, i.e., $S^{+} = V$.
The vertex expansion of a non-empty set $S\subseteq V$ is
$
    \alpha(S) = {|\partial S|}/{|S|},
$
and the vertex expansion of $G$ is
\[
    \alpha(G) = \min_{S\subset V,\ 0<|S|\leq n/2} \alpha(S).
\]
The volume of $S$ is $\vol(S) = \sum_{v\in S}\deg(v)$.
By $E(S,V-S)$ we denote the set of edges with one endpoint from $S$ and the other from $V-S$.
The conductance of a non-empty set $S\subseteq V$ is
$
    \phi(S) = {|E(S,V-S)|}/{\vol(S)},
$
and the conductance of $G$ is
\[
    \phi(G) = \min_{S\subset V,\ 0<\vol(S)\leq \vol(V)/2} \phi(S).
\]
We have $0< \phi(G),\alpha(G) \leq 1$, and $(\delta/\Delta)\cdot \phi(G) \leq \alpha(G)\leq \Delta\cdot\phi(G)$~\cite{GiakkoupisS2012soda}.

\section{Warm-up: Rumors Spread Fast from a Dominating Set}

We show the following result in this section.
\begin{theorem}
    \label{thm:dominating-set}
    Let $S\subseteq V$ be a dominating set of $G=(V,E)$, i.e., $S^+:= S\cup\partial S = V$.
    Suppose that all nodes $u\in S$ know of a rumor initially.
    Then \Pull{} informs all remaining nodes  in $O(\log n)$ rounds, with probability $1-O(n^{-\beta})$ for any constant $\beta>0$.
\end{theorem}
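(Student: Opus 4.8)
The plan is to fix an arbitrary target node $v\in\partial S$ and show that it is informed within $T=O(\log n)$ rounds with probability $1-O(n^{-\beta-1})$; a union bound over the at most $n$ nodes of $\partial S$ then yields the theorem. To make the per-node analysis tractable I would first pass to a \emph{withholding} variant of the process in which $v$, even once it learns the rumor, never answers a \Pull{} request. Since this only removes a source of the rumor, the informed set in the modified process is a subset of the true one under the natural coupling, so an upper bound on $v$'s infection time here suffices. The gain is that the set of informed nodes \emph{other than} $v$ now evolves independently of $v$'s own coin flips. Hence, conditioning on the entire environment trajectory $(I_t)_{t\ge0}$ of those nodes, the round-$t$ event ``$v$ pulls an informed neighbor'' has probability $Y_t/\deg(v)$, where $Y_t:=|N(v)\cap I_t|$ is the current number of informed neighbors, and these events are independent across rounds, giving
\[
  \Pr[\,v\notin I_T \mid (I_t)\,] \;=\; \prod_{t=0}^{T-1}\Bigl(1-\tfrac{Y_t}{\deg(v)}\Bigr)\;\le\;\exp\!\Bigl(-\tfrac{1}{\deg(v)}\textstyle\sum_{t=0}^{T-1}Y_t\Bigr).
\]
It therefore suffices to show that $\sum_{t<T}Y_t\ge(\beta+2)\,\deg(v)\ln n$ with high probability over the environment, for some $T=O(\log n)$.

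The dominating-set hypothesis gives $Y_t\ge1$ for every $t$ (the $S$-neighbor of $v$ is always informed), which already settles the bound for nodes of constant degree. The substance of the proof is to control $\sum_t Y_t$ for \emph{high-degree} nodes, and this is where I expect the main difficulty. The guiding picture is geometric (doubling) growth of the informed set: if $v$ has degree $d$ and currently $k$ informed neighbors, then roughly $k$ of its remaining uninformed neighbors get informed in the next round (each uninformed neighbor $w$ is itself dominated and pulls an informed node with probability $\ge Y_t(w)/\deg(w)$), so $Y_t$ tends to double until it reaches $\Theta(d)$. Once $Y_t=\Omega(d)$ it stays so, and the remaining $\Theta(\log n)$ rounds contribute $\Omega(d\log n)$ to $\sum_t Y_t$, as required. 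Thus the heart of the matter is to show that, starting from the single guaranteed informed neighbor, a constant fraction of $N(v)$ becomes informed within $O(\log n)$ rounds.

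Rather than analyze each neighborhood in isolation (which is awkward, since $N(v)$ may itself contain high-degree nodes), I would track a global potential measuring the informed ``mass'' and argue it grows by a constant factor per round. A natural candidate is the expected number of nodes informed in the next round, $\Phi_t:=\sum_{w\in U_t}Y_t(w)/\deg(w)$, where $U_t:=V\setminus I_t$; note $\Exp[\,|I_{t+1}|-|I_t|\mid I_t\,]=\Phi_t$ exactly. The subtlety, which this warm-up must confront, is that the quantity that genuinely doubles is the amount of informing \emph{beyond} the initial seed: in the extreme case where $S$ is a single universal vertex and the rest form a clique, $\Phi_t$ equals the number of currently informed non-seed nodes, which doubles each round and reaches $n$ in $\log_2 n$ rounds, even though the raw count $|I_t|$ need not double when $|S|$ is large. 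The plan is thus to prove $\Phi_t\ge c\cdot(\text{current progress})$ and split into a growth regime (informed set below half) and a completion regime (uninformed set below half), the dominating property guaranteeing a live informed neighbor throughout both, so that progress saturates in $O(\log n)$ rounds.

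The two obstacles I anticipate are concentration and the \emph{slow start} of the doubling. The expected-value doubling must be upgraded to a high-probability statement holding for $\Theta(\log n)$ consecutive rounds; because each round's increment is a sum of independent indicators (one per uninformed node, given the current informed set), I expect standard Chernoff/Azuma bounds to apply, but the early rounds — when the progress measure is only a constant — have large relative variance and a real chance of stalling, exactly as in \Pull{} on the complete graph from a single source. Controlling this initial phase, and doing so uniformly over all high-degree targets $v$ so that the final union bound survives, is the crux. I would handle it by choosing the constant in $T=O(\log n)$ large enough that the accumulated mass $\sum_t Y_t$ exceeds $(\beta+2)\deg(v)\ln n$ even after a logarithmically long slow start, and by absorbing the exponentially small per-node and per-environment failure probabilities into the final $O(n^{-\beta})$ bound.
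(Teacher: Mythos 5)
Your opening reduction is sound: the withholding coupling and the conditional-independence computation correctly reduce the theorem to showing that, for each fixed $v\notin S$, the environment accumulates $\sum_{t<T} Y_t = \Omega(\deg(v)\log n)$ within $T = O(\log n)$ rounds w.h.p. The genuine gap is in the growth claim you invoke to get this. The doubling mechanism you describe for $Y_t$ is not valid: an uninformed neighbor $w$ of $v$ pulls from \emph{its own} neighborhood, which need not intersect $N(v)$ at all, so the number of informed nodes inside $N(v)$ says nothing about the rate at which further nodes of $N(v)$ become informed. For instance, let $S=\{s\}$ with $s$ universal, let $v$ have neighbors $w_1,\dots,w_{\sqrt n}$ each of degree $\approx n$, and attach the $w_i$'s to $\approx n$ bulk vertices of degree $\approx\sqrt n$; here $Y_t$ does grow, but only through a two-step feedback loop (informed $w$'s inform many bulk vertices, bulk vertices inform $w$'s) that no quantity local to $N(v)$ certifies. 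Your fallback, a global potential inequality $\Phi_t \geq c\cdot(\text{progress beyond } S)$ driving per-round doubling, is actually false: take $S=\{s\}$ universal, attach $k$ pendant vertices of degree $1$ and a clique of size $m$, with $k \gg m$. After one round all pendants are informed, so the progress is $\approx k$, yet $\Phi_1 = O(1)$ and the clique still needs $\Theta(\log m)$ further rounds starting from scratch; so no inequality of that shape can hold, in either your growth or your completion regime. In effect your key lemma (``a constant fraction of $N(v)$ is informed within $O(\log n)$ rounds, uniformly over $v$'') is exactly as strong as the theorem itself, and nothing in the proposal proves it non-circularly.

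The missing idea is the time-reversal step the paper uses precisely to escape this trap: by Lemma~\ref{lem:symmetry}, the number of rounds for \Pull{} to carry the rumor from $S$ to $v$ has the same distribution as the number of rounds for \Push{} to carry a rumor from $v$ to \emph{some} node of $S$. For that reversed, single-source \Push{} process the doubling argument becomes clean and purely local (Lemma~\ref{lem:push-to-S}): either the informed set $I^v_t$ grows by a constant factor in expectation, and Claim~\ref{clm:tau} shows $O(\log n)$ such rounds suffice w.h.p., or the expected growth stalls, and then Claim~\ref{clm:degrees} converts the stall into the harmonic bound $\sum_{w\in I^v_t}\deg(w)^{-1} = \Omega(1)$. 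Since $S$ is dominating, every $w\in I^v_t$ has a neighbor in $S$, so this bound yields a constant per-round probability of pushing the rumor into $S$, and $O(\log n)$ further rounds finish the job. That lever --- turning ``growth has stalled'' into ``the informed nodes collectively have small degrees, hence hit $S$ with constant probability'' --- is exactly what your forward \Pull{} analysis lacks, and is where the dominating-set hypothesis actually does its work.
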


We start with an overview of the proof.
From a standard lemma on the symmetry between \Push{} and \Pull{} (Lemma~\ref{lem:symmetry}), it follows that to bound the time for \Pull{} to spread a rumor from $S$ to a given node $u\notin S$, it suffices to bound instead the time for \Push{} to spread a rumor from $u$ to \emph{some} node in $S$.
We  bound the latter time as follows (Lemma~\ref{lem:push-to-S}).
Let $I_t^u$ be the set of informed nodes after $t$ rounds of \Push{}, when the rumor starts from $u$.
We consider the earliest round $\tau$ for which the expected growth of $I_t^u$ in the next round $\tau + 1$ is smaller than by a constant factor $\epsilon>0$, i.e., $\Exp[|I_{\tau+1}^u-I_\tau^u| \mid I_\tau^u] < \epsilon\cdot |I_\tau^u|$.
We argue that $\tau = O(\log n)$ w.h.p.\ (Claim~\ref{clm:tau}), which is intuitively clear, as up to round $\tau$ the number of informed nodes increased by a constant factor in expectation per round.
Next we bound the harmonic mean of the degrees of nodes in $I_\tau^u$, precisely, we show that $\sum_{v\in I_t^u} \deg(v)^{-1} = \Omega(1)$.
If $I_\tau^u\cap S =\emptyset$ then each node from $I_\tau^u$ has some neighbor in $S$, and from the above bound on the degrees it follows that the probability  in one round that some nodes from $I_\tau^u$ pushes the rumor to a neighbor in $S$ is $\Omega(1)$.
Thus, within $O(\log n)$ rounds after round $\tau$ the rumor has reached some node in $S$ w.h.p.

Next we give the detailed proof.
\begin{lemma}
    \label{lem:symmetry}
    Let $T_{push}(V_1,V_2)$, for $V_1,V_2\subseteq V$, be the number of rounds for \Push{} until a rumor that is initially known to all nodes $u\in V_1$ (and only them) spreads to at least one node $v\in V_2$; let also $T_{pull}(V_1,V_2)$ be defined similarly.
    Then, for any  $V_1,V_2\subseteq V$, the random variables $T_{push}(V_1,V_2)$ and $T_{pull}(V_2,V_1)$ have the same distribution.
\end{lemma}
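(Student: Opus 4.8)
For any $V_1, V_2 \subseteq V$, the random variables $T_{push}(V_1, V_2)$ and $T_{pull}(V_2, V_1)$ have the same distribution.

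**Planning the proof.**

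The core idea is a coupling between a PUSH process and a PULL process via time reversal. In any single round, each node $v$ independently selects a uniformly random neighbor; call this choice $c_t(v)$ in round $t$. These random choices are exactly the same raw randomness whether we run PUSH or PULL — the only difference is the *direction* in which the rumor flows across the chosen edge. In PUSH, an informed node $v$ sends along the directed edge $(v, c_t(v))$; in PULL, an uninformed node $v$ receives along the directed edge $(c_t(v), v)$. So the same collection of directed edges $\{(v, c_t(v))\}$ is generated in both processes; PUSH follows them forward, PULL follows them backward.

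Let me think about how to prove this carefully...
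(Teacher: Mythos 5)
Your proposal stops at the planning stage---after announcing the coupling idea you write ``Let me think about how to prove this carefully\ldots'' and no proof follows---so as it stands there is a genuine gap: no distributional identity is ever established. What is missing is the actual construction and verification of the coupling. Concretely, you would need to (i) fix a horizon $t$, let every node $v$ (informed or not) draw an independent uniform choice $c_s(v)$ in every round $s=1,\dots,t$, and observe that both protocols are measurable functions of these choices (in \Push{} the choices of uninformed nodes are ignored, in \Pull{} the choices of informed nodes are ignored---this augmentation is needed so that the two processes can be read off the \emph{same} probability space); (ii) characterize the event $\{T_{push}(V_1,V_2)\leq t\}$ as the existence of rounds $t_1<t_2<\dots<t_k\leq t$ and nodes $v_0\in V_1, v_1,\dots,v_k\in V_2$ with $c_{t_i}(v_{i-1})=v_i$ for all $i$; (iii) map the choices by $c'_s(v)=c_{t-s+1}(v)$ and check that under this measure-preserving bijection such a chain exists if and only if a \Pull{} chain from $V_2$ to $V_1$ within $t$ rounds exists; and (iv) conclude that $\Pr(T_{push}(V_1,V_2)\leq t)=\Pr(T_{pull}(V_2,V_1)\leq t)$ for every $t$.

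Beyond incompleteness, the one substantive sentence in your sketch is imprecise in a way that matters: you say the same directed edges $(v,c_s(v))$ are generated and ``PUSH follows them forward, PULL follows them backward,'' which suggests reversing edge directions \emph{within} each round. That is not sufficient. A rumor traverses a multi-edge path under \Push{} only if the edges are used in \emph{increasing} time order, and the reversed path is traversed under \Pull{} only if the rounds are visited in the \emph{opposite} order; hence the coupling must reverse the indexing of rounds (round $s$ of one process corresponds to round $t-s+1$ of the other), not merely the orientation of the arrows. Your opening phrase ``via time reversal'' indicates you are aware of this, but the proof has to carry it out explicitly, since this is exactly where the symmetry between the two protocols lives. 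For reference, the paper itself omits this proof and points to Lemma~3 of~\cite{Chierichetti2010stoc}, whose argument is precisely the completed version of the route you began; so your approach is the standard one, but it needs to be finished.
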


The proof of Lemma~\ref{lem:symmetry} is essentially the same as that of~\cite[Lemma~3]{Chierichetti2010stoc}, and is therefore omitted.

Next we state our main lemma.
\begin{lemma}
    \label{lem:push-to-S}
    Let $S$ be a dominating set of $G=(V,E)$, and let $u\in V-S$.
    Using \Push, a rumor originated at $u$ spreads to at least one node $v\in S$ in $O(\log n)$ rounds, with probability $1-O(n^{-\beta})$ for any constant $\beta > 0$.
\end{lemma}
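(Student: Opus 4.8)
I would prove the lemma by analysing \Push{} run from the single source $u$, tracking the informed set $I_t^u$ (with $I_0^u=\{u\}$), and introducing a \emph{slow-down time} $\tau$: the first round $t$ at which the expected one-round growth drops below a small constant factor, i.e.\ $\Exp[\,|I_{t+1}^u|-|I_t^u|\mid I_t^u\,]<\epsilon|I_t^u|$ for a fixed constant $\epsilon=(1-e^{-1})/2$. The argument then has two parts: (a) $\tau=O(\log n)$ w.h.p., and (b) at time $\tau$ the reciprocal-degree sum $\Phi_\tau:=\sum_{v\in I_\tau^u}\deg(v)^{-1}$ is $\Omega(1)$. Granting both, the rumor reaches $S$ within $O(\log n)$ further rounds, as I explain last.

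\textbf{The key step (b).} This is the conceptual crux. Writing $I=I_\tau^u$ and $a_w=\sum_{v\in N(w)\cap I}\deg(v)^{-1}$ for $w\notin I$, linearity of expectation gives $\Exp[\,|I_{\tau+1}^u|-|I|\mid I\,]=\sum_{w\notin I}\bigl(1-\prod_{v\in N(w)\cap I}(1-\deg(v)^{-1})\bigr)\ge(1-e^{-1})\sum_{w\notin I}\min(1,a_w)$, using $1-\prod_i(1-x_i)\ge1-e^{-\sum_i x_i}\ge(1-e^{-1})\min(1,\sum_i x_i)$. I would then split on $\max_w a_w$: if some $w$ has $a_w\ge1$, then $\Phi_\tau\ge\sum_{v\in N(w)\cap I}\deg(v)^{-1}=a_w\ge1$ and we are done; otherwise every $a_w<1$, so $\sum_{w\notin I}a_w=\sum_{w\notin I}\min(1,a_w)$, and combining the displayed bound with the defining inequality of $\tau$ gives $\sum_{w\notin I}a_w<\tfrac{\epsilon}{1-e^{-1}}|I|=\tfrac12|I|$. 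Since $\sum_{w\notin I}a_w=\sum_{v\in I}\tfrac{|N(v)\setminus I|}{\deg(v)}=|I|-\sum_{v\in I}\tfrac{|N(v)\cap I|}{\deg(v)}$ and $\sum_{v\in I}\tfrac{|N(v)\cap I|}{\deg(v)}\le|I|\sum_{v\in I}\deg(v)^{-1}=|I|\,\Phi_\tau$ (as $|N(v)\cap I|\le|I|$), I conclude $\Phi_\tau>\tfrac12$ in this case too, so $\Phi_\tau\ge\tfrac12$ always.

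\textbf{Part (a).} For $t<\tau$ the set grows by a factor $1+\epsilon$ in expectation. Because each informed node informs at most one new node per round, $|I_{t+1}^u|-|I_t^u|\le|I_t^u|$, and the growth bound forces $\Pr[\,|I_{t+1}^u|>|I_t^u|\mid I_t^u\,]\ge\epsilon$; so while $|I_t^u|$ is small the set still gains a node with constant probability each round, and a Chernoff bound shows it reaches size $\Theta(\log n)$ within $O(\log n)$ rounds (unless $\tau$ occurs first). Once $|I_t^u|=\Omega(\log n)$, the increment is a function of $|I_t^u|$ independent push choices each changing it by at most one, so a bounded-differences (McDiarmid) inequality gives $|I_{t+1}^u|\ge(1+\epsilon/2)|I_t^u|$ with probability $1-n^{-\Omega(1)}$; iterating, the set multiplies past $n/(1+\epsilon)$ within a further $O(\log n)$ rounds, at which point growth $\le n-|I_t^u|<\epsilon|I_t^u|$ forces $\tau$ to have occurred. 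A union bound over these $O(\log n)$ rounds yields $\tau=O(\log n)$ w.h.p. \emph{The main obstacle is exactly this part:} promoting ``constant-factor growth in expectation'' to a high-probability $O(\log n)$ bound, where the small-$|I_t^u|$ regime gives only additive rather than multiplicative progress and so needs the separate two-regime treatment above.

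\textbf{Conclusion.} If $I_\tau^u\cap S\neq\emptyset$ the rumor has already reached $S$ by round $\tau=O(\log n)$. Otherwise every $v\in I_\tau^u$ lies in $V-S$ and, since $S$ is dominating, has a neighbor in $S$; hence $v$ pushes into $S$ in a given round with probability at least $\deg(v)^{-1}$, independently across $v$. Thus in each round after $\tau$ the rumor misses $S$ with probability at most $\prod_{v\in I_\tau^u}(1-\deg(v)^{-1})\le e^{-\Phi_\tau}\le e^{-1/2}$, a constant bounded away from $1$. Over $c\log n$ such rounds the failure probability is $n^{-c/2}=O(n^{-\beta})$ for $c\ge2\beta$, giving the claimed $O(\log n)$ bound.
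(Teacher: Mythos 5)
Your proof is correct, and its skeleton coincides with the paper's: the same slow-down time $\tau$, the same two sub-claims ($\tau=O(\log n)$ w.h.p., and $\sum_{v\in I_\tau^u}\deg(v)^{-1}=\Omega(1)$ at the slow-down time), and the same endgame using the dominating-set property to push into $S$ with constant probability per round. The differences are in how you prove the two sub-claims, and both differences are worth noting. For the degree-sum bound (the paper's Claim~\ref{clm:degrees}), the paper argues by contradiction, counting uninformed nodes that receive \emph{exactly one} copy of the rumor, and obtains the threshold $1-\sqrt{\epsilon}$; you instead lower-bound the expected increment directly via $1-\prod_i(1-x_i)\geq(1-e^{-1})\min\bigl(1,\sum_i x_i\bigr)$ and an exchange of summation, obtaining the threshold $1/2$. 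Your version is arguably cleaner and equally rigorous. For $\tau=O(\log n)$ (the paper's Claim~\ref{clm:tau}), your two-regime treatment (additive progress with constant probability while $|I_t^u|=O(\log n)$, then McDiarmid for multiplicative growth) is valid, but the obstacle you single out as "the main obstacle" dissolves under a simpler, uniform trick used in the paper: since each informed node informs at most one new node per round, the random variable $2\,|I_{t-1}^u|-|I_t^u|$ is nonnegative, and applying Markov's inequality to \emph{it} gives
\[
\Pr\bigl(|I_t^u|\geq(1+\epsilon/2)\,|I_{t-1}^u|\bigr)\;\geq\;\frac{\epsilon/2}{1-\epsilon/2}
\]
in every round $t\leq\tau$, with no case split on the size of $|I_t^u|$ and no concentration inequality for the one-round increment; since at most $\log_{1+\epsilon/2}n$ such multiplicative-growth rounds can occur before $\tau$, a single Chernoff bound on the number of good rounds among $O(\log n)$ finishes. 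Finally, a cosmetic point: your last step should union-bound the failure event of the endgame with that of part (a), as the paper does; this changes nothing substantive.
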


The proof of Lemma~\ref{lem:push-to-S} is given in Section~\ref{sec:push-to-S}.

From the two lemmata above, Theorem~\ref{thm:dominating-set} follows easily:
From Lemma~\ref{lem:push-to-S} we obtain that a rumor originated at a given $u\in V-S$ spreads to at least one $v\in S$ after $O(\log n)$ rounds of \Push, with probability $1-O(n^{-\beta-1})$.
Lemma~\ref{lem:symmetry} then implies that a rumor known to all $v\in S$ reaches $u$ after $O(\log n)$ rounds of \Pull, with the same probability, $1-O(n^{-\beta-1})$.
Applying now the union bound over all $u\in V-S$ yields the claim.

\subsection{Proof of Lemma~\ref{lem:push-to-S}}

\label{sec:push-to-S}

Let $I_t^u$ be the set of informed nodes after $t$ rounds of \Push{}, when the rumor starts from node $u\in V-S$.
Let $\tau$ be the earliest round such that the expected increase of $I_t^u$ in the next round is smaller than $\epsilon\cdot|I_t^u|$, i.e.,
\[
    \tau
    =
    \min\{t \colon \Exp[|I_{t+1}^u| \mid I_t^u] < (1+\epsilon)\cdot |I_t^u|\},
\]
for some positive constant $\epsilon <  1$.
\begin{claim}
    \label{clm:tau}
    With probability $1-n^{-\beta}$ we have $\tau = O(\log n)$.
\end{claim}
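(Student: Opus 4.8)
The plan is to treat $\tau$ as a stopping time with respect to the natural filtration $(\mathcal F_t)$ generated by the \Push{} choices (note that $\{\tau\le t\}$ is determined by $I_t^u$), and to exploit the one-round growth guarantee built into its definition: for every $t<\tau$ we have $\Exp[|I_{t+1}^u|\mid\mathcal F_t]\ge(1+\epsilon)|I_t^u|$, equivalently the increase $D_t:=|I_{t+1}^u|-|I_t^u|$ satisfies $\Exp[D_t\mid\mathcal F_t]\ge\epsilon|I_t^u|$. The key structural fact I would establish first is that $D_t=\sum_{v\in\partial I_t^u}Y_v$, where $Y_v$ is the indicator that some informed neighbor of $v$ pushes to $v$, and that the family $\{Y_v\}$ is \emph{negatively associated}: each informed node $w$ selects a uniform neighbor, so its one-hot selection vector is negatively associated, the selections of distinct nodes are independent, and each $Y_v$ is an increasing function of a disjoint set of these coordinates. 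Negative association lets me apply the independent-case Chernoff bounds to $D_t$ and to control $\Pr[D_t=0]$.

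With this in hand I would split the growth into two regimes separated by a threshold $h=\Theta(\log n/\epsilon)$, using crucially that $|I_t^u|$ is non-decreasing under \Push{}. In the \emph{small regime} $|I_t^u|<h$, per-round multiplicative concentration is hopeless, but negative association gives $\Pr[D_t=0\mid\mathcal F_t]\le\prod_{v}(1-\Exp[Y_v\mid\mathcal F_t])\le e^{-\epsilon|I_t^u|}\le e^{-\epsilon}$ whenever $t<\tau$; hence in each such round the set grows by at least one node with probability at least a constant $c_0=1-e^{-\epsilon}>0$. The number of such ``success'' rounds among any $m$ rounds stochastically dominates a $\mathrm{Binomial}(m,c_0)$ variable, so a Chernoff bound shows that after $m=O(\log n)$ rounds there are at least $h$ successes whp; consequently, within $O(\log n)$ rounds either $\tau$ occurs or $|I_t^u|\ge h$.

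In the \emph{large regime} $|I_t^u|\ge h$, the mean increase $\Exp[D_t\mid\mathcal F_t]\ge\epsilon|I_t^u|\ge\epsilon h=\Omega(\log n)$ is large enough for the lower-tail Chernoff bound (valid by negative association) to give $D_t\ge(\epsilon/2)|I_t^u|$, i.e.\ multiplicative growth by a factor $1+\epsilon/2$, with probability $1-n^{-\beta-2}$. Taking a union bound over the next $L=O(\log n)$ rounds, whp every round before $\tau$ in this window enjoys this growth; but a factor-$(1+\epsilon/2)$ increase cannot occur more than $\log_{1+\epsilon/2}n=O(\log n)$ times starting from size at least $h$ without the set exceeding $n$ nodes. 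Hence $\tau$ must occur within $O(\log n)$ rounds of entering the large regime. Composing the two regimes and taking a union bound over the $O(\log n)$ relevant failure events yields $\tau=O(\log n)$ with probability $1-n^{-\beta}$ after adjusting constants.

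The main obstacle is the concentration step, and in particular the fact that geometric growth \emph{in expectation} does not by itself imply fast growth (a process that multiplies by a huge factor with tiny probability has large expected growth yet is almost surely stagnant), so the argument genuinely needs the \Push{}-specific distributional structure. Establishing negative association of $\{Y_v\}$ cleanly, and recognizing that the small regime should be handled by additive $+1$-per-round progress with constant success probability rather than by forcing multiplicative growth there, are the two points that make the bound come out to $O(\log n)$ rather than $O(\log^2 n)$.
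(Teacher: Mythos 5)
Your proof is correct, but it takes a genuinely different route from the paper's. The paper's proof never needs concentration of the per-round increment nor negative association: it defines $X_t$ as the indicator that $|I_t^u|\geq(1+\epsilon/2)|I_{t-1}^u|$ (or $t\geq\tau$), and lower-bounds $\Pr(X_t=1)$ by a constant via a \emph{reverse Markov} argument --- since under \Push{} each informed node contacts one neighbor per round, $|I_t^u|\leq 2|I_{t-1}^u|$ deterministically, so Markov's inequality applied to the nonnegative variable $2|I_{t-1}^u|-|I_t^u|$ turns the expectation guarantee $\Exp[|I_t^u|\mid I_{t-1}^u]\geq(1+\epsilon)|I_{t-1}^u|$ into $\Pr(X_t=1)\geq\frac{\epsilon/2}{1-\epsilon/2}$, valid at every set size; stochastic domination of $\sum_t X_t$ by a binomial plus one Chernoff bound then finishes. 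This is exactly how the paper resolves your ``main obstacle'' (expected geometric growth alone proves nothing): the bounded-doubling property of \Push{} substitutes for distributional concentration. Your approach instead establishes negative association of the increments and applies Chernoff bounds to them, splitting into an additive small regime and a multiplicative large regime. What your route buys is robustness: the at-most-doubling property fails for \Pull{}/\PPull{} (many uninformed nodes can pull from one informed node in a round), and indeed the paper is forced to switch to precisely your NA-plus-Chernoff machinery when proving the \PPull{} analogue, Claim~\ref{clm:pptau}. What the paper's route buys is brevity and elementarity for this specific claim. One simplification you could make: the two-regime split is unnecessary, since NA gives \emph{constant}-probability growth by a factor $1+\epsilon/3$ in every round regardless of $|I_t^u|$ (the Chernoff exponent $(\epsilon/3)^2(1+\epsilon)|I_{t-1}^u|/2$ is bounded below by a constant already at $|I_{t-1}^u|=1$, as in Claim~\ref{clm:pptau}), after which the same domination argument as in your small regime concludes; demanding per-round success with probability $1-n^{-\beta-2}$, which is what forces your threshold $h=\Theta(\log n/\epsilon)$, is more than is needed.
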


The proof of Claim~\ref{clm:tau} is by standard probabilistic arguments, and can be found in the Appendix.

The next claim bounds the harmonic mean of the degrees of nodes $v\in I_\tau^u$.
\begin{claim}
    \label{clm:degrees}
    If $\Exp[|I_{t+1}^u| \mid I_t^u] < (1+\epsilon)\cdot |I_t^u|$ then
    $\sum_{v\in I_t^u} \deg(v)^{-1} \geq 1-\sqrt\epsilon$.\footnote{An equivalent statement for $\sum_{v\in I_t^u} \deg(v)^{-1} \geq 1-\sqrt\epsilon$ is that the harmonic mean of $\deg(v)$, over all $v\in I_t^u$, is at most $|I_t^u|/(1-\sqrt\epsilon)$.}
\end{claim}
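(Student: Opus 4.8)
The plan is to fix the informed set $S := I_t^u$ (conditioning on it throughout) and to prove the quantitative one-round growth bound $\Exp[\,|I_{t+1}^u|-|I_t^u| \mid I_t^u\,] \ge |S|\,(1-A)^2$, where $A := \sum_{v\in S}\deg(v)^{-1}$. Granting this, the claim follows immediately: the hypothesis says the expected increase is below $\epsilon|S|$, so $(1-A)^2 < \epsilon$; if $A < 1$ this gives $1-A < \sqrt\epsilon$, i.e.\ $A > 1-\sqrt\epsilon$, and if $A\ge 1$ the conclusion $A \ge 1-\sqrt\epsilon$ is trivial. Thus everything reduces to lower-bounding the expected growth of a single \Push{} step started from $S$.

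To bound the growth I would count the pushes leaving $S$ and subtract collisions. In one round each $v\in S$ sends to a uniform, independently chosen neighbor $t(v)$, and the newly informed nodes are the distinct elements of $\{t(v):v\in S\}\setminus S$. Let $Y=|\{v\in S: t(v)\notin S\}|$ be the number of pushes leaving $S$, and for $w\notin S$ let $R_w$ be the number of pushes landing on $w$. Then the increase equals $Y-\sum_{w\notin S}\max(0,R_w-1)$, so using $\max(0,R_w-1)\le\binom{R_w}{2}$ and taking expectations,
\[ \Exp[\text{increase}] \;\ge\; B-\sum_{w\notin S}\Exp\!\left[\binom{R_w}{2}\right], \qquad B:=\sum_{v\in S}\frac{|N(v)\setminus S|}{\deg(v)}=\Exp[Y]. \]
The crucial step is controlling the collision term. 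Reorganizing the sum by the \emph{pair} of nodes responsible for a collision gives $\sum_{w\notin S}\Exp[\binom{R_w}{2}]=\sum_{\{v,v'\}\subseteq S}\frac{|N(v)\cap N(v')\setminus S|}{\deg(v)\deg(v')}$, and since $N(v)\cap N(v')\setminus S\subseteq N(v')\setminus S$ this is at most $\tfrac12\sum_{v,v'}\frac{1}{\deg(v)}\cdot\frac{|N(v')\setminus S|}{\deg(v')}=\tfrac12 AB$. Hence $\Exp[\text{increase}]\ge B(1-A/2)\ge B(1-A)$.

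It then remains only to bound $B$ from below. Writing $|N(v)\setminus S|=\deg(v)-|N(v)\cap S|$ and using the crude estimate $|N(v)\cap S|\le|S|$ yields $B=|S|-\sum_{v\in S}\frac{|N(v)\cap S|}{\deg(v)}\ge |S|-|S|\,A=|S|(1-A)$. Combining with the previous paragraph gives $\Exp[\text{increase}]\ge |S|(1-A)^2$ whenever $A\le 1$, which is exactly what the reduction requires. I expect the collision term to be the main obstacle: it is where coincidences between the pushes of distinct informed nodes must be charged accurately, and it is precisely this second-order contribution that produces the quadratic factor $(1-A)^2$ — and hence the $\sqrt\epsilon$, rather than $\epsilon$, in the final bound. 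The two one-line estimates $B\ge|S|(1-A)$ and $B(1-A/2)\ge B(1-A)$ are deliberately loose, which is harmless, since they affect only constants and the statement already tolerates the weaker $1-\sqrt\epsilon$ bound.
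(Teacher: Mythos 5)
Your proposal is correct. It shares the overall architecture of the paper's proof --- both reduce the claim to a one-round growth estimate driven by the harmonic sum $A=\sum_{v\in I_t^u}\deg(v)^{-1}$, and both handle the first-order term identically: the expected number of pushes leaving $S$ is $B=\sum_{v\in S}|N(v)\setminus S|/\deg(v)\geq |S|(1-A)$, via the same crude bound $|N(v)\cap S|\leq |S|$ (in the paper this appears as $\sum_i(1-d_i'/d_i)\geq k-\sum_i k/d_i$). Where you genuinely diverge is the collision correction. The paper counts uninformed nodes that receive \emph{exactly one} copy, discounting each outgoing push multiplicatively by the probability $\prod_{j\neq i}(1-1/d_j)\geq 1-A$ that no other informed node hits the same target; this is run inside a proof by contradiction, where the hypothesis $A<1-\sqrt\epsilon$ turns that factor into $\sqrt\epsilon$. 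You instead count nodes hit \emph{at least once} via the Bonferroni-type inequality $\mathbf{1}[R_w\geq 1]\geq R_w-\binom{R_w}{2}$ and subtract the expected number of colliding pairs, which you correctly bound by $\tfrac12 AB$ through the common-neighbor pair count. The two corrections land on the same quantitative bound $\Exp[\text{increase}]\geq |S|(1-A)^2$ (the paper's argument, stripped of its contrapositive framing, gives exactly this too), so neither buys extra strength in the conclusion; but your version is direct rather than by contradiction and isolates a standalone, reusable growth inequality, at the cost of a slightly more involved second-moment computation. It is worth noting that your pair-counting device is not foreign to the paper: it is essentially the same covariance/common-neighbor argument the paper uses later to bound $\Var[f_{T_1}]$ in the proof of its Claim~13(a), just deployed here for a different purpose.
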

\begin{proof}
The proof is by contradiction.
Fix the set $I_t^u$, and let $k=|I_t^u|$.
For $i=1,\ldots, k$, let $u_i$ denote the $i$-th node from $I_t^u$, and let $d_i = \deg(u_i)$.
We will assume that $\sum_{i=1}^k d_i^{-1} < 1-\sqrt\epsilon$ and prove that  $\Exp[|I_{t+1}^u|] \geq (1+\epsilon)\cdot k$.

We count the number of uninformed node that in round $t+1$ receive exactly one copy of the rumor.
This is clearly a lower bound on the number of nodes that get informed in round $t+1$.
Let $d'_i = |N(u_i)\cap I_t^u|$ be the number of neighbors that $u_i$ has in $I_t^u$.
The probability that in round $t+1$ node $u_i$ pushes the rumor to some uninformed node is then $1-d'_i/d_i$.
And if this happens, the probability that the recipient node  does not receive the rumor from any other node in the same round is at least
\[
    \prod_{j\neq i}{(1 - 1/d_j)}
    \geq
    1 - \sum_{j\neq i}(1/d_j)
    \geq
    1 - (1-\sqrt\epsilon)
    =
    \sqrt\epsilon.
\]
Thus the probability that $u_i$ sends the rumor to an uninformed node that does not receive another copy of the rumor is at least $(1-d'_i/d_i)\cdot \sqrt\epsilon$.
Hence, the expected number of uninformed nodes that receive exactly one rumor copy in round $t+1$ is at least $\sum_{i=1}^k\big((1-d'_i/d_i)\cdot \sqrt\epsilon\big)$.
And since
\begin{align*}
    \sum_{i=1}^k(1-d'_i/d_i)
    &=
    k - \sum_{i=1}^k(d'_i/d_i)
    \\&
    \geq
    k - \sum_{i=1}^k (k/d_i)
    \geq
    k - k(1-\sqrt\epsilon)
    =
    k\sqrt\epsilon,
\end{align*}
we obtain a lower bound of $k\sqrt\epsilon\cdot \sqrt\epsilon = k \epsilon$ on the expected number of uninformed nodes that receive exactly one copy of the rumor in round $t+1$.
Hence, the same lower bound holds for the total number on nodes informed in the round, and we conclude that $\Exp[|I_{t+1}^u|] \geq (1+\epsilon)\cdot k$.
\end{proof}

Using Claim~\ref{clm:degrees} it is easy to show an $O(\log n)$ bound w.h.p.\ on the number of additional rounds after round $\tau$, until the rumor spreads to at least one node from $S$.
Suppose that $I_\tau^u = U$ for some set $U\subseteq V-S$.
(If $U\nsubseteq V-S$ then some node from $S$ is already informed.)
Each node $v\in U$ has at least one neighbor in the dominating set $S$, and thus the probability that none of these nodes pushes the rumor to a neighbor in $S$ in a given round $t > \tau$, is upper bounded by
\[
    \prod_{v\in U}\left(1 - \deg(v)^{-1}\right)
    \leq
    e^{-\sum_{v\in U}\deg(v)^{-1}}
    \leq
    e^{-1+\sqrt\epsilon},
\]
by Claim~\ref{clm:degrees}.
This probability bound holds for each round $t>\tau$ independently of the outcome of previous rounds.
It follows that $\ell := (\beta\cdot \ln n)/(1-\sqrt\epsilon) = O(\log n)$ additional rounds after round $\tau$ suffice to spread the rumor to a node in $S$ with probability $1 - (e^{-1+\sqrt\epsilon})^\ell = 1 - n^{-\beta}$.
Combining this with Claim~\ref{clm:tau}, which bounds $\tau$ by $O(\log n)$  with probability $1 - n^{-\beta}$, and applying the union bound gives that the rumor spreads from $u$ to at least one $v\in S$ in $O(\log n)$ rounds with probability $1 - 2n^{-\beta}$.
This completes the proof of Lemma~\ref{lem:push-to-S}.

Another application of the above argument is described in Section~\ref{sec:dimeter-two} of the Appendix.

\section{Bound with Vertex Expansion}
\label{sec:ub-vexpansion}

In this section we prove our main result, Theorem~\ref{thm:ub-pushpull}.

We start with an overview of the proof.
Let $I_t$ denote the set of informed nodes after the first $t$ rounds.
We study the growth of the quantity
$\Psi_t := |I_t| + |\partial I_t|/2 = (|I_t| + |I_t^+|)/2$.
You can think of $\Psi_t$ as a potential function:
each informed node has as a potential of 1, each uninformed node with an informed neighbor has potential 1/2, and the remaining uninformed nodes have potential zero; $\Psi_t$ is then the total potential after round $t$.
We have $1 < \Psi_t \leq n$.
To prove the theorem, we show (as outlined below) that the expected number of rounds needed to double $\Psi_t$  is bounded by $O(\log(\Delta)/\alpha)$, as long as $|I_t| \leq n/2$.
It follows that $O(\log n \cdot \log(\Delta)/\alpha)$ rounds suffice w.h.p.\ to inform $n/2  + 1$ nodes, and by a symmetry argument,  $O(\log n \cdot \log(\Delta)/\alpha)$ additional rounds suffice to inform all remaining nodes w.h.p.

For $\Psi_t$ to double in  $O(\log(\Delta)/\alpha)$ rounds, it suffices that it increases by $\Omega(|\partial I_t|/\log\Delta)$ ``on average" per round, as $|\partial I_t|\geq \alpha\cdot |I_t|$ and thus $|\partial I_t| = \Omega(\alpha\Phi_t)$.
Such an increase can be achieved either by informing $\Omega(|\partial I_t|/\log\Delta)$ nodes from $\partial I_t$, or by informing fewer nodes which however have a total number of $\Omega(|\partial I_t|/\log\Delta)$ neighbors in $\partial(I_t^+)$.
Along this intuition, we distinguish the following two cases, in terms of a simple expansion measure we define for $I_t$, called boundary expansion (Definition~\ref{def:bexpansion}).

The first case is when the boundary expansion of $I_t$ is low (upper-bounded by a constant $\epsilon_h < 1$).
This is the more challenging case, and is the core of our analysis.
Our main lemma in this case is Lemma~\ref{lem:growthIt}, which establishes that a constant fraction of the boundary $\partial I_t$ gets informed in an expected number of $O(\log\Delta)$ rounds.
The proof builds upon and extends the ideas used in the proof of Theorem~\ref{thm:dominating-set}.
We note that in the setting of Theorem~\ref{thm:dominating-set}, the boundary expansion of the set $S$ of informed nodes is zero.

The second case is when the boundary expansion of $I_t$ is high (lower-bounded by a constant $\epsilon_h > 0$).
Then from our definition of boundary expansion it follows that the expected number of nodes from $\partial(I_t^+)$ that have an informed neighbor after the next round is $\Omega(|\partial I_t|)$, i.e., $\Exp[|I_{t+1}^+ - I_{t}^+| \mid I_t] = \Omega(|\partial I_t|)$.
Our main lemma in this case is Lemma~\ref{lem:growthItplus}, which turns the above lower bound on the expected per round growth of $I_{t}^+$ into an upper bound on the expected number of rounds until $I_{t}^+$ grows by some quantity $b$, which depends on the degrees of nodes in $\partial I_t$.

Finally we bound the expected time needed to double $\Psi_t$, in Claim~\ref{clm:psi-double}, by combine the results of the two cases above and using an inductive argument.

The rest of this section is structured as follows.
We define the measure of boundary expansion in Section~\ref{sec:bexpansion}.
In Section~\ref{sec:It} we prove Lemma~\ref{lem:growthIt}, which lower-bounds the growth of $I_t$ when boundary expansion is low.
In Section~\ref{sec:It} we prove Lemma~\ref{lem:growthItplus}, which lower-bounds the growth of $I_t^+$ when boundary expansion is high.
And in Section~\ref{sec:proof-main} we put the pieces together to prove Theorem~\ref{thm:ub-pushpull}.

\subsection{Boundary Expansion}
\label{sec:bexpansion}

\begin{definition}
\label{def:bexpansion}
Let $S\subset V$ be a non-empty set of nodes.
Let $U$ be a random subset of $\partial S$ such that each node $u\in \partial S$ belongs to $U$ with probability  $1/\deg(u)$ independently of the other nodes.
The \emph{boundary expansion $h(S)$ of $S$} is the ratio of the expected number of nodes $v\in \partial(S^+)$ that have some neighbor in $U$, over the size of $\partial S$, i.e.,
\[
    h(S) = \Exp\left[|\{v\in \partial(S^+) \colon N(v)\cap U\neq \emptyset \}|\right]/|\partial S|.
\]
\end{definition}

It follows that
\begin{equation}
    \label{eq:hofS}
    h(S) =
    \frac1{|\partial S|}
    \sum_{v\in \partial(S^+)}
    \bigg(1-\prod_{u\in N(v)\cap\partial S} \left(1-\deg(u)^{-1}\right)\bigg).
\end{equation}
We have $0\leq h(S) < 1$.
The lower bound of 0 is matched iff $S^+ = V$; and
the upper bound holds because the expected number of nodes from $\partial(S^+)$ that have some neighbor in $U$ is upper-bounded by the expected number of the edges between $U$ and $\partial(S^+)$, which is
\[
    \sum_{u\in\partial S} \frac{|N(u)-S^+|}{\deg(u)}
    \leq
    \sum_{u\in\partial S} \frac{\deg(u)-1}{\deg(u)}
    <
    |\partial S|,
\]
and thus $h(S) < 1$.
We note that it is possible to have $h(S)\leq \epsilon_h < 1$, for some constant $\epsilon_h$, even if each node $u\in \partial S$ has $\deg(u)-1$ neighbors in $\partial(S^+)$, if the nodes $u\in \partial S$ have sufficiently many \emph{common} neighbors in $\partial(S^+)$.

We observe that if $I_t=S$, then $h(S)\cdot |\partial S|$ is a lower bound on the expected number of new nodes that have an informed neighbor after round $t+1$, i.e.,
\begin{equation}
    \label{eq:pull-bexpansion}
    \Exp[|I_{t+1}^+ - I_{t}^+| \mid I_t = S]
    \geq
    h(S)\cdot |\partial S|.
\end{equation}
This follows because each node $u\in \partial I_t$ pulls the rumor from $I_t$ in round $t+1$ with probability at least $1/\deg(u)$.

In Section~\ref{sec:Itplus} we will need the following refined definition, which describes the boundary expansion of $S$ contributed by a given subset $T$ of $\partial S$.
\begin{definition}
\label{def:bexpansion2}
Let $S\subset V$ and $T\subseteq \partial S$.
Let $U_T$ be a random subset of $T$ such that each node $u\in T$ belongs to $U_T$ with probability  $1/\deg(u)$ independently of the other nodes.
The \emph{boundary expansion of $S$ due to $T$} is
\[
    h_T(S) = \Exp\left[|\{v\in \partial(S^+) \colon N(v)\cap U_T\neq \emptyset \}|\right]/|\partial S|.
\]
\end{definition}

For $T=\partial S$, the above definition is identical to Definition~\ref{def:bexpansion}, i.e., $h_{\partial S}(S) = h(S)$.

\subsection{The Case of Low Boundary Expansion: Analysis of the Growth of
\texorpdfstring{$\boldsymbol{I_t}$}{It}}
\label{sec:It}

In this section we prove the following result, which is the core lemma of our analysis.
\begin{lemma}
    \label{lem:growthIt}
    Suppose that $I_t = S$ for some set $S \subset V$ with boundary expansion
    $h(S)\leq\epsilon_h$, where $0\leq \epsilon_h < 1$ is an arbitrary constant.
    There is a constant $\epsilon = \epsilon(\epsilon_h) > 0$ such that the expected number of rounds until $\epsilon\cdot |\partial S|$ nodes from $\partial S$ get informed is $O(\log\Delta)$.
\end{lemma}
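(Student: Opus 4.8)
The plan is to reduce the claim, via the \Push/\Pull{} symmetry, to a statement about single-source \Push{} reachability, and then to upgrade an expectation bound into the desired bound on the expected number of rounds. Throughout it suffices to analyze \Pull{} (informing more nodes only helps). For $w\in\partial S$, let $p_w(t)$ be the probability that $w$ is informed within $t$ rounds of \Pull{} started from $S$. By Lemma~\ref{lem:symmetry}, $p_w(t)=\Pr[T_{push}(\{w\},S)\le t]$, the probability that a rumor pushed from $w$ reaches $S$ within $t$ rounds. Hence, writing $A_t$ for the set of informed nodes of $\partial S$ after $t$ rounds, linearity gives $\Exp[|A_t|]=\sum_{w\in\partial S}\Pr[T_{push}(\{w\},S)\le t]$. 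The core goal becomes: for some $t=O(\log\Delta)$ and constant $c>0$, a $c$-fraction of the sources $w\in\partial S$ satisfy $\Pr[T_{push}(\{w\},S)\le t]=\Omega(1)$, so that $\Exp[|A_t|]=\Omega(|\partial S|)$.

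Granting this, I would close the argument by a reverse-Markov/phase argument. Since $|A_t|\le|\partial S|$ and $\Exp[|A_t|]\ge c|\partial S|$, we get $\Pr[|A_t|\ge (c/2)|\partial S|]\ge c/2$; that is, a single block of $t=O(\log\Delta)$ rounds informs at least $(c/2)|\partial S|$ boundary nodes with constant probability. Because the informed set is monotone (starting a new block from a larger informed set only increases the success probability, by a standard coupling), repeating independent blocks succeeds after $O(1/c)=O(1)$ blocks in expectation, for a total of $O(\log\Delta)$ expected rounds. Taking $\epsilon:=c/2$ yields the lemma.

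The heart of the proof — and the step I expect to be the main obstacle — is establishing that a constant fraction of the sources $w\in\partial S$ reach $S$ by \Push{} within $O(\log\Delta)$ rounds with constant probability, and this is exactly where the hypothesis $h(S)\le\epsilon_h<1$ enters. My plan is to extend the stalling-time/harmonic-mean machinery behind Theorem~\ref{thm:dominating-set}. Fixing a source $w$, I would track the \Push{} informed set $I^w_t$ and, as in the proof of Lemma~\ref{lem:push-to-S}, consider the first round $\tau_w$ at which the expected multiplicative growth drops below $1+\epsilon$; Claim~\ref{clm:degrees} then gives $\sum_{v\in I^w_{\tau_w}}\deg(v)^{-1}\ge 1-\sqrt\epsilon$. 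If at that point the informed set lies in $S^+$ (equivalently, every informed node is either in $S$ or adjacent to $S$), then either $S$ is already reached or each informed node pushes into $S$ with probability $\ge\deg(v)^{-1}$, so the harmonic-mean bound forces a push into $S$ with probability $\Omega(1)$ per round, reaching $S$ in $O(1)$ further rounds. Since a single-source \Push{} at most doubles per round, $|I^w_t|\le 2^t$, so after $O(\log\Delta)$ rounds the informed set has size only $\mathrm{poly}(\Delta)$, keeping $\tau_w=O(\log\Delta)$ in the confined regime.

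The difficulty is precisely the confinement hypothesis ``$I^w_t\subseteq S^+$'': \Push{} from $w$ can leak into $\partial(S^+)$, and the definition of $h(S)$ controls such leakage only in expectation, over a single round's random sample $U\subseteq\partial S$ (each $u$ included with probability $\deg(u)^{-1}$), rather than for an individual multi-round source process. I would therefore replace the per-source confinement claim by an aggregate one: summing the one-round leakage bound implicit in~\eqref{eq:hofS} over the relevant random sets and over the $O(\log\Delta)$ rounds, the total expected number of source processes that escape $S^+$ before stalling is at most a small multiple of $\epsilon_h|\partial S|$. Choosing $\epsilon$ (and thus its dependence on $\epsilon_h$) appropriately, a constant fraction of sources then neither escape nor blow up, so for them the harmonic-mean argument applies and they reach $S$ within $O(\log\Delta)$ rounds with constant probability, as required. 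Making this aggregate leakage accounting precise — in particular handling high-degree boundary nodes, whose direct pull from $S$ is slow and which must instead be reached through their informed neighbors — is the technical crux, and is where I expect the bulk of the work to lie.
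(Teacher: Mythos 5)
Your outer scaffolding is sound: the reduction of per-node pull probabilities to single-source push probabilities via Lemma~\ref{lem:symmetry}, the linearity-of-expectation step, and the reverse-Markov/restart argument at the end all work, and the last of these mirrors how the paper itself converts per-node expected hitting times into the lemma. The gap is in what you correctly identify as the crux, and it is not merely technical work left to do --- the accounting you propose cannot work. The hypothesis $h(S)\le\epsilon_h$ bounds the expected number of \emph{distinct} nodes of $\partial(S^+)$ hit by a single one-round random sample $U\subseteq\partial S$; it says nothing about the probability that an individual multi-round push process escapes $S^+$. The paper points this out explicitly right after Definition~\ref{def:bexpansion}: one can have $h(S)\le\epsilon_h<1$ even when \emph{every} node $u\in\partial S$ has $\deg(u)-1$ of its neighbors in $\partial(S^+)$, provided the nodes of $\partial S$ share many common neighbors there (distinctness, not edge mass, is what $h$ counts; as a concrete instance, boundary nodes of degree $\Delta$ with a single neighbor in $S$ give $h(S)\approx 1-e^{-1}$, which is below any $\epsilon_h>1-e^{-1}$). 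In such a configuration every source $w\in\partial S$ leaves $S^+$ on its very first push with probability $1-1/\deg(w)$, so the expected number of escaping sources is $(1-o(1))|\partial S|$: there is no constant fraction of ``confined'' sources, and your aggregate leakage bound of $O(\epsilon_h|\partial S|)$ is false. A second, related problem is that your reduction to pure \Pull{} from $S$ (equivalently pure \Push{} from the sources) discards exactly the mechanism that rescues the escaped processes: the paper stresses that it is \emph{crucial} to use \PPull{} here, because an informed high-degree node $v\in\partial(S^+)$ is useless under push (it contacts $\partial S$ with probability only $|N(v)\cap\partial S|/\deg(v)$) but useful under push-pull, where each edge $uv$ fires with probability about $\deg(u)^{-1}+\deg(v)^{-1}$ and the boundary nodes can pull the rumor back from $v$.

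For contrast, the paper does not prevent escape at all. It defines a restricted \PPull{} process on a set $P$ of \emph{participating} nodes (Definition~\ref{def:participating}): active nodes in $S^+$ initiate connections, passive nodes in $\partial(S^+)$ only accept them, so an escaped rumor sitting at a passive node is returned to $\partial S$ by pulls from active neighbors. The hypothesis $h(S)\le\epsilon_h$ enters only once, in Claim~\ref{clm:active-fraction}, via a potential-function argument showing that a constant fraction of $\partial S$ is participating --- intuitively, low boundary expansion means the probability mass pushed into $\partial(S^+)$ lands on nodes that are themselves well-connected back to active nodes, so it is not wasted. The stalling-time argument is then run on this restricted process, with two further changes you would also need: the stopping rule is capped at $|I_t^u|\ge\Delta^2$ (your remark that $|I_t^w|\le 2^t$ keeps the set of size $\mathrm{poly}(\Delta)$ does not by itself bound the stalling time by $O(\log\Delta)$), and the harmonic-mean bound (Claim~\ref{clm:ppdegrees}) is proved for the informed \emph{active} nodes only, using both inequalities~\eqref{eq:active} and~\eqref{eq:passive} rather than the dominating-set structure. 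If you want to salvage your plan, the part to replace is precisely the confinement step: you need an argument that works \emph{with} escape, and that forces you back to push-pull and to something like the participating-subgraph construction.
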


We start with an overview of the proof.
Similarly to the proof of Theorem~\ref{thm:dominating-set}, to bound the time needed for a given node $u\in\partial S$ to get informed, we bound instead the time needed for a rumor originated at $u$ to spread to some node from $S$ (Lemma~\ref{lem:ppsymmetry}).
Establishing this bound, however, is more difficult in the current setting than in the setting of Theorem~\ref{thm:dominating-set}.
Recall that in the proof of Theorem~\ref{thm:dominating-set}, to bound the time until a rumor originated at $u\in\partial S$ reaches $S$, we first bound the time until the set $I_t^u$ of informed nodes stops doubling (Claim~\ref{clm:tau}), and then bound the harmonic mean of the degrees of nodes in $I_t^u$ at that time (Claim~\ref{clm:degrees});
the bound on the degrees implies that if $I_t^u\cap S=\emptyset$, then with large probability some node from $I_t^u$ will send the rumor to a neighbor in $S$.
This last statement depends critically on the assumption that $S$ is a dominating set, and thus every node from $I_t^u$ has a neighbor in $S$.
This in not true, however, in the current setting, hence the above degree bound does not guarantee with large enough probability that some node from $I_t^u$ will send the rumor to $S$.

To tackle this problem we consider a ``restricted" rumor spreading process, on an induced subgraph of $G$.
We identify a set of nodes \emph{participating} in rumor spreading (Definition~\ref{def:participating}), such that, intuitively, each participating node has at least some constant probability to contact or be contacted by another participating node in a round.
Only nodes from  $S^+$ or $\partial(S^+)$ can be participating.
Participating nodes from $S^+$ are \emph{active}, i.e., they initiate a connection to a random neighbor in each round, while participating nodes from $\partial(S^+)$ are \emph{passive}, i.e., they \emph{accept} connections from active neighbors but do not \emph{initiate} connections to random neighbors.
Using the assumption that $S$ has low boundary expansion we show that at least some constant fraction of $\partial S$ is participating (Claim~\ref{clm:active-fraction}).
For each active node $u\in \partial S$ then, we show that an expected number of $O(\log\Delta)$ rounds suffices for a rumor originated at $u$ to spread to some node in $S$ (Lemma~\ref{lem:ppull-to-S}).
The proof of this result is now similar to that for Theorem~\ref{thm:dominating-set}, although it is crucial that we use \PPull{} rather than just \Push.

An novelty of the argument above is that it exploits the fact that in \PPull{} each edge $uv$ is chosen with probability roughly $\deg(u)^{-1} + \deg(v)^{-1}$, which is more powerful that treating push and pull operations separately.

The rest of this section is structured as follows.
In Section~\ref{sec:participating} we describe the set of participating nodes.
In Section~\ref{sec:rumor-active} we bound the time until a rumor originating from an active node reaches $S$.
And in Section~\ref{sec:finish-mainlemma} we put the pieces together to obtain Lemma~\ref{lem:growthIt}.

\subsubsection{Participating Nodes}
\label{sec:participating}

Below we give the definition of participating, active, and passive nodes, followed by some intuitive explanation.
\begin{definition}
\label{def:participating}
    The set $P$ of \emph{participating} nodes is the largest subset of $V$
    with the property that for every node $u\in P$ and for $A = P \cap S^+$,
    \begin{align}
        \label{eq:active}
        \frac{|N(u)\cap P|}{\deg(u)} + \sum_{v\in N(u)\cap A}\deg(v)^{-1}
        &\geq
        \epsilon_p,
        && \text{if $u\in A$;}
        \\
        \label{eq:passive}
        \sum_{v\in N(u)\cap A}\deg(v)^{-1}
        &\geq
        \epsilon_p,
        && \text{if $u\in P-A$,}
    \end{align}
    where $0 < \epsilon_p < (1-\epsilon_h)/3$ is a constant.\footnote{We will see later that $P$ is unique, although this is not essential for the analysis.}
    Set $A$ is the set of \emph{active} nodes, and $P - A$ is the set of \emph{passive} nodes.
\end{definition}

Note that all participating nodes belong to $S^+\cup \partial(S^+)$; the ones in $S^+$ are active, and those in $\partial(S^+)$ are passive.

Intuitively, $P$ is defined such that each participating node has at least a constant probability to contact or be contacted by another participating node in a round.
The term $|N(u)\cap P|/\deg(u)$ in Equation~\eqref{eq:active} is the probability that active node $u$ chooses a participating neighbor in a round.
In Equation~\eqref{eq:passive} we do not have this term because, as mentioned earlier, passive nodes do not initiate connections.
The sum that is common in both equations adds the probabilities of the events that $u$ is chosen by $v$, for all active neighbors $v$ of $u$.
If this sum is small (bounded by a constant), then it is of the same order as the probability that $u$ is chosen by at least one of its active neighbors, which is $1-\prod_{v\in N(u)\cap A}(1-\deg(v)^{-1})$.
(We elaborate later.)

The set $P$ can be generated by a simple procedure, which recursively removes from $V$ all nodes that do not satisfy~\eqref{eq:active} or~\eqref{eq:passive}.
Formally, we start with set $P_0 = V$.
In the $i$-th step of the procedure we obtain set $P_i$ by removing from $P_{i-1}$ all nodes $u\in A_{i-1} := P_{i-1}\cap S^+$ for which
$|N(u)\cap P_{i-1}|/\deg(u) + \sum_{v\in N(u)\cap A_{i-1}}\deg(v)^{-1}
<
\epsilon_p$, and all nodes $u\in P_{i-1} - A_{i-1}$ for which
$\sum_{v\in N(u)\cap A_{i-1}}\deg(v)^{-1} < \epsilon_p$.
Clearly, the procedure finishes after at most $n$ steps.
Let $P^\ast = P_i$ for the last step $i$.
We argue now that $P^\ast = P$ (this also proves that $P$ is unique).
From the maximality of $P$ it follows that $|P^\ast| \leq |P|$.
Thus, it suffice to show that $P \subseteq P^\ast$:
Suppose, for contradiction, that $P \nsubseteq P^\ast$, and consider the first round $i$ for which $P \nsubseteq P_{i}$.
Then, we have that $P \subseteq P_{i-1}$, and $A \subseteq A_{i-1}$, and some node $v\in P$ is removed from $P_{i-1}$ in step $i$.
If $v\in A \subseteq A_{i-1}$ then it follows from~\eqref{eq:active} that
$|N(u)\cap P_{i-1}|/\deg(u) + \sum_{v\in N(u)\cap A_{i-1}}\deg(v)^{-1}
\geq \epsilon_p$, which contradicts the assumption that $v$ is removed in step $i$.
Similarly, if $v\in P - A$, then it follows from~\eqref{eq:passive} that
$\sum_{v\in N(u)\cap A_{i-1}}\deg(v)^{-1} \geq \epsilon_p$, which again contradicts the removal of  $v$.
Thus, our assumption that  $P \nsubseteq P^\ast$ is false.

In the above procedure, if we use as a starting set $P_0$ a subset of $V$, instead of $P_0=V$, then the resulting set of participating nodes is a subset of  $P$.
We will use this observation in the proof of the claim below.

The next claim says that at least a constant fraction of the nodes
$u\in\partial S$ is participating.
\begin{claim}
    \label{clm:active-fraction}
    $|\partial S\cap P|
    \geq
    \left(1 - \frac{\epsilon_h}{(1-\epsilon_p)(1-2\epsilon_p)}\right)\cdot|\partial S|$.\footnote{From our assumption in Definition~\ref{def:participating} that $\epsilon_p < (1-\epsilon_h)/3$, it follows that $1 - \frac{\epsilon_h}{(1-\epsilon_p)(1-2\epsilon_p)}>0$.}
\end{claim}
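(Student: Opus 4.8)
The plan is to bound the number of removed nodes $B := \partial S \setminus P$ and show that $|B| \le \frac{\epsilon_h}{(1-\epsilon_p)(1-2\epsilon_p)}|\partial S|$, which is equivalent to the claim: since $\partial S \subseteq S^+$, every participating node of $\partial S$ is active, so $\partial S \cap P = \partial S \setminus B$. Because $h(S)|\partial S| \le \epsilon_h|\partial S|$, it suffices to prove the \emph{count} inequality
\[
(1-\epsilon_p)(1-2\epsilon_p)\,|B| \;\le\; h(S)\,|\partial S| \;=\; \sum_{v \in \partial(S^+)}\Big(1 - \prod_{u \in N(v)\cap\partial S}(1-\deg(u)^{-1})\Big).
\]
Both sides are (expected) node counts living on the same scale, and the entire difficulty is to manufacture, out of the removed nodes, enough \emph{distinct} reached nodes in $\partial(S^+)$. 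As a harmless first simplification I would run the removal procedure from the starting set $P_0 = \partial S \cup \partial(S^+)$ rather than from $V$; by the observation following Definition~\ref{def:participating} this only shrinks the participating set (so it can only enlarge $B$), which lets the whole argument take place on the bipartite structure between $\partial S$ and $\partial(S^+)$.

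Next I would pin down what failure of participation gives. Using the maximality of $P$ (equivalently, that the removal procedure has reached a fixed point), every $u \in B$ must fail~\eqref{eq:active} with respect to the final sets $P$ and $A$; otherwise $P \cup \{u\}$ would still satisfy both conditions, contradicting maximality. Hence each $u \in B$ simultaneously satisfies $|N(u)\cap P| < \epsilon_p\deg(u)$ (few participating neighbours, the push term) and $\sum_{v \in N(u)\cap A}\deg(v)^{-1} < \epsilon_p$ (small weighted active mass, the pull term). I would also record the analogous failure condition for the removed passive nodes of $\partial(S^+)$, together with the elementary fact that a neighbour of any $u \in \partial S$ lies in $S \cup \partial S \cup \partial(S^+)$, so only edges into $\partial(S^+)$ can contribute to $h(S)$.

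Then I would set up the charging step. Taking the random set $U \subseteq \partial S$ of Definition~\ref{def:bexpansion} and discarding every selected node outside $B$, the reached set in $\partial(S^+)$ only shrinks, so $\sum_{v}\big(1-\prod_{u \in N(v)\cap B}(1-\deg(u)^{-1})\big) \le h(S)|\partial S|$. The target is to lower-bound this restricted expectation by $(1-\epsilon_p)(1-2\epsilon_p)|B|$. I expect the two constant factors to come from two conversions applied at the two thresholds: the push failure $|N(u)\cap P|<\epsilon_p\deg(u)$ should force a $(1-\epsilon_p)$ fraction of each removed node's incidences to fall on non-participating vertices, and an inclusion–exclusion estimate of the form $1-\prod(1-x_i)\ge (1-\sum x_i)\sum x_i$ should turn the surviving edge mass into distinct reached nodes at a loss governed by the pull threshold, yielding the $(1-2\epsilon_p)$ factor; the precise constant $(1-2\epsilon_p)$ is what lets the footnote's positivity condition $\epsilon_p<(1-\epsilon_h)/3$ close the argument.

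I expect the main obstacle to be precisely the bookkeeping of \emph{wasted} incidences. A removed node's non-participating neighbours may lie in $S$ or be other removed nodes of $\partial S$, neither of which is seen by $h(S)$, and distinct removed nodes may share neighbours in $\partial(S^+)$, producing collisions; both effects threaten to make the reached count in $\partial(S^+)$ much smaller than $|B|$. The crux is therefore to show that the participating-set thresholds leave enough of each removed node's mass reaching \emph{fresh} vertices of $\partial(S^+)$, which is exactly the point where \emph{both} parts of the failure condition (not just one) are needed, and where I expect the constant $(1-\epsilon_p)(1-2\epsilon_p)$ to be forced rather than cosmetic. I would dispose of the $S$-edges by noting they cannot help $u$ satisfy~\eqref{eq:active}, and control the collisions by aggregating the inclusion–exclusion bound above over all of $B$.
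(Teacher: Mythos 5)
There are two genuine gaps here, and each on its own breaks the plan.

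First, your ``harmless first simplification'' of running the removal procedure from $P_0=\partial S\cup\partial(S^+)$ is not harmless. The monotonicity observation only gives that the resulting set $\tilde P$ satisfies $\tilde P\subseteq P$, so you have set yourself the \emph{harder} task of lower-bounding $|\partial S\cap\tilde P|$ --- and that harder task is impossible, because the edges into $S$ are precisely what keeps nodes of $\partial S$ participating. Take $S$ to be a dominating set whose boundary $\partial S$ is an independent set (e.g., a star with its center in $S$). Then $\partial(S^+)=\emptyset$, so $h(S)=0\le\epsilon_h$ and the claim demands $|\partial S\cap P|=|\partial S|$, which indeed holds (here $P=V$: every node always has push term $1$). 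But in your bipartite procedure every $u\in\partial S$ has all its neighbors in the discarded set $S$, so both terms of \eqref{eq:active} vanish and all of $\partial S$ is deleted in the first step, i.e.\ $\partial S\cap\tilde P=\emptyset$.

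Second, and more fundamentally, the per-node charging inequality $(1-\epsilon_p)(1-2\epsilon_p)|B|\le\sum_{v\in\partial(S^+)}\big(1-\prod_{u\in N(v)\cap B}(1-\deg(u)^{-1})\big)$ cannot be obtained by a one-shot charge, because removal is a \emph{cascade}. Your step ``I would dispose of the $S$-edges by noting they cannot help $u$ satisfy \eqref{eq:active}'' is false: a still-participating neighbor $w\in S\cap P$ is active, hence contributes $1/\deg(u)$ to $u$'s push term and $1/\deg(w)$ to $u$'s pull term. What is true is that edges to \emph{already removed} nodes of $S^+$ do not help $u$ --- but those edges are also invisible to $h(S)$, and that is exactly the hole: a node $u\in\partial S$ whose neighbors are, say, one high-degree participating node of $S$ plus many nodes of $S\cup\partial S$ removed in earlier steps fails \eqref{eq:active} and joins $B$ while having \emph{zero} edges into $\partial(S^+)$, so it contributes nothing to your restricted sum. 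Bounding how many such invisible removals a cascade can create requires amortizing over the whole removal sequence, which is what the paper does: it introduces a potential $\Phi_i$ (wasted connection probability between participating and non-participating nodes), proves $\Phi_0\le\frac{\epsilon_h}{1-\epsilon_p}|\partial S|$ using $h(S)\le\epsilon_h$ together with the threshold-$2\epsilon_p$ pruning \eqref{eq:P0} of $\partial(S^+)$, and shows that removing an active node releases less than $\epsilon_p$ of new waste while destroying at least $1-\epsilon_p$, a net decrease of $1-2\epsilon_p$; non-negativity of $\Phi$ then caps the total number of removals by $\Phi_0/(1-2\epsilon_p)$. Your two constants do appear in the true proof, but as the initial-potential loss and the per-removal cost in this amortized ledger, not as per-node charging factors; without some such global accounting the cascade removals go unaccounted for, and the inequality you aim at is exactly the one that fails on them.
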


\begin{proof}
The proof is based on a potential function argument.
We consider the procedure described above for generating $P$, but  use a smaller starting set $P_0$ as described later.
We observed earlier that such a modification yields a set of participating nodes that is a subset of $P$, thus it can only strengthen our lower bound.
The reason for this modification is that it makes the potential function we will use non-increasing.
We denote by $\Phi_i$ the potential after step $i$ of the procedure.
This potential is non-negative and is defined later.
We will show that the potential $\Phi_0$ before the first step is $\Phi_0\leq \frac{\epsilon_h}{1-\epsilon_p}\cdot |\partial S|$.
Further, we will show that for each node $u\in A_{i-1}$ that is removed from $P_{i-1}$ in step $i$, the potential decreases by at least $1-2\epsilon_p$, and the removal of a node $u\notin A_{i-1}$ decreases the potential by zero or more.
Since the potential function is non-negative, it follows that the total number of nodes $u\in S^+$ removed in all steps is at most $\Phi_0/(1-2\epsilon_p)$.
Thus, the same bound holds for the number of nodes $u\in \partial S$ removed, i.e., $|\partial S| - |\partial S\cap P|\leq \Phi_0/(1-2\epsilon_p)$.
Rearranging and using that $\Phi_0\leq \frac{\epsilon_h}{1-\epsilon_p}\cdot |\partial S|$ yields $|\partial S\cap P| \geq \big(1 - \frac{\epsilon_h}{(1-\epsilon_p)(1-2\epsilon_p)}\big)\cdot|\partial S|$.

Next we fill in the pieces omitted from the above description.
We start with the definition of the potential function.
Intuitively, the potential $\Phi_{i}$ after round $i$ measures the probability ``wasted'' in connections between participating and non-participating nodes.
Its first component, $\Phi_{i,1}$, is the sum over all $u\in A_i$ of the probability that $u$ chooses a neighbor $v\notin P_i$; the second component, $\Phi_{i,2}$, is the sum over all $u\in S^+-A_i$ of the probability that $u$ would choose a neighbor $v\in P_i$ if $u$ were active.
We give two equivalent expressions for each of $\Phi_{i,1}, \Phi_{i,2}$, to be used later on.
\begin{align*}
    \Phi_{i,1}
    &=
    \sum_{u\in A_i} \sum_{v\in N(u) - P_i} \deg(u)^{-1}
    =
    \sum_{u \notin P_i} \sum_{v\in N(u)\cap A_i} \deg(v)^{-1};
    \\
    \Phi_{i,2}
    &=
    \sum_{u\in S^+ - A_i} \sum_{v\in N(u)\cap P_i} \mspace{-18mu}\deg(u)^{-1}
    =
    \sum_{u \in P_i} \sum_{v\in N(u)\cap (S^+ - A_i)} \mspace{-42mu}\deg(v)^{-1}.
\end{align*}
Then, $\Phi_i = \Phi_{i,1} + \Phi_{i,2}$.

The new starting set $P_0$ we use consists of all $u\in S^+$, plus those $u\in \partial(S^+)$ for which
\begin{equation}
    \label{eq:P0}
    \sum_{v\in N(u)\cap \partial S}\deg(v)^{-1} \geq 2 \epsilon_p.
\end{equation}
The above condition is similar to~\eqref{eq:passive}, but the threshold is twice that in~\eqref{eq:passive}.

We can now prove that
\begin{equation}
    \label{eq:Psi0}
    \Phi_0\leq \frac{\epsilon_h}{1-\epsilon_p}\cdot |\partial S|
\end{equation}
The proof can be found in the Appendix.

It remains to show that for each node $u\in A_{i-1}$ that is removed in step $i$, the potential decreases by at least $1-2\epsilon_p$, and for each node $u\in P_{i-1} - A_{i-1}$ removed the potential does not increase.
W.l.o.g., we assume that only one node  $u\in P_{i-1}$ is removed in step $i$.
(If $k>1$ nodes should be removed we just break step $i$ into $k$ sub-steps.)

First, we consider the case in which a node $u\in P_{i-1}-A_{i-1}$ is removed in step $i$.
Then $P_i = P_{i-1} - \{u\}$ and $A_i = A_{i-1}$.
From the second expressions for $\Phi_{i,1}$ and  $\Phi_{i,2}$ we obtain
\begin{align*}
    \Phi_{i,1} - \Phi_{i-1,1}
    &=
    \sum_{v\in N(u)\cap A_i} \deg(v)^{-1}
    =:\phi_{inc},
    \\
    \Phi_{i,2} - \Phi_{i-1,2}
    &=
    - \sum_{v\in N(u)\cap (S^+ - A_{i-1})} \deg(v)^{-1}
    =: - \phi_{dcr}.
\end{align*}
Since $A_i = A_{i-1}$, we have that
$\phi_{inc} + \phi_{dcr} = \sum_{v\in N(u)\cap S^+} \deg(v)^{-1} \geq 2\epsilon_p$, from~\eqref{eq:P0}.
Further, since $u$ is removed in round $i$ it satisfies the condition $\sum_{v\in N(u)\cap A_{i-1}}\deg(v)^{-1} < \epsilon_p$, which yields $\phi_{inc} < \epsilon_p$.
From these two inequalities on $\phi_{inc}$ and $\phi_{dcr}$, it follows $\phi_{inc} < \epsilon_p < \phi_{dcr}$, and thus,
$\Phi_i - \Phi_{i-1} = \phi_{inc} - \phi_{dcr} < 0$.

Next, we consider the case in which a node $u\in A_{i-1}$ is removed.
Then $P_i = P_{i-1} - \{u\}$ and $A_i = A_{i-1}  - \{u\}$, and
\begin{align*}
    \Phi_{i,1} - \Phi_{i-1,1}
    &=
    -\underbrace{\sum_{v\in N(u) - P_{i-1}} \mspace{-5mu}\deg(u)^{-1}}_{\phi_1}
    +
    \underbrace{\sum_{v\in N(u)\cap A_i} \mspace{-5mu}\deg(v)^{-1}}_{\phi_2}
    \\&
    =
    -\phi_1 + \phi_2
    , 
    \\
    \Phi_{i,2} - \Phi_{i-1,2}
    &=
    \underbrace{\sum_{v\in N(u)\cap P_i} \mspace{-10mu}\deg(u)^{-1}}_{\phi_3}
    -
    \sum_{v\in N(u)\cap (S^+ - A_{i-1})} \mspace{-35mu}\deg(v)^{-1}
    \leq
    \phi_3
    .
\end{align*}
In the expression for $\phi_2$ we  can replace $A_i$ by $A_{i-1}$, as $u\notin N(u)$ and thus $N(u)\cap A_i = N(u)\cap A_{i-1}$.
Similarly, in the expression for $\phi_3$ we  can replace $P_i$ by $P_{i-1}$.
It follows that $\phi_1 + \phi_3 = \sum_{v\in N(u)} \deg(u)^{-1} = 1$, and also $\phi_2 + \phi_3 < \epsilon_p$ because otherwise $u$ would not be removed in round $i$.
Thus,
\begin{align*}
    \Phi_i - \Phi_{i-1}
    &=
    (\Phi_{i,1} - \Phi_{i-1,1}) + (\Phi_{i,2} - \Phi_{i-1,2})
    \\&
    \leq
    -\phi_1 + \phi_2 + \phi_3
    \leq
    -\phi_1 + 2\phi_2 + \phi_3
    \\&
    =
    -(\phi_1 +\phi_3) + 2(\phi_2 + \phi_3)
    < -1 + 2\epsilon_p
    .
\end{align*}
This completes the proof of Claim~\ref{clm:active-fraction}.
\end{proof}

\subsubsection{Spreading a Rumor from an Active Node}
\label{sec:rumor-active}

In this section we prove the following lemma, which bounds the expected time until a rumor originated at some active node $u\in \partial S$ reaches $S$.
\begin{lemma}
    \label{lem:ppull-to-S}
    Let $u\in \partial S \cap P$.
    Using \PPull, a rumor originated at $u$ spreads to at least one node $v\in S$ in an expected number of  $O(\log \Delta)$ rounds.
\end{lemma}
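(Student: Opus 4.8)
The plan is to mimic the structure of the proof of Lemma~\ref{lem:push-to-S}, but run the argument on the \PPull{} process restricted to the participating set $P$. Let $J_t^u$ denote the set of informed participating nodes after $t$ rounds, where the rumor starts at $u \in \partial S \cap P$ and we only count contacts that occur along edges between participating nodes (active nodes pushing or pulling, passive nodes only receiving pushes / being pulled from). Since restricting attention to $P$ only slows the process down, any bound we prove for this restricted process is an upper bound for the true \PPull{} time. As before, I would define $\tau$ to be the earliest round where the expected growth of $J_t^u$ drops below a $(1+\epsilon)$ factor, i.e.\ $\tau = \min\{t : \Exp[|J_{t+1}^u| \mid J_t^u] < (1+\epsilon)|J_t^u|\}$, and argue via the same martingale/standard-concentration argument behind Claim~\ref{clm:tau} that $\tau = O(\log n)$ with high probability — but now I want an \emph{expectation} bound, so I would instead show $\Exp[\tau] = O(\log \Delta)$, using that the process grows geometrically until it saturates the participating component, whose size is at most $\Delta$-related (more precisely, the relevant "reach" of $u$ is within $S^+ \cup \partial(S^+)$, and geometric growth means only $O(\log(\text{size}))$ expected rounds to saturate).

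The core step is the analogue of Claim~\ref{clm:degrees}: I must show that once the expected single-round growth falls below $(1+\epsilon)$, the set $J_\tau^u$ of informed participating nodes has the property that \emph{the probability, per round, that some node of $J_\tau^u$ contacts $S$ (via push or pull) is $\Omega(1)$}. This is where the definition of participating nodes does the work. For each active node $w \in J_\tau^u \cap S^+$, the defining inequality~\eqref{eq:active} guarantees that $w$ has constant probability $\ge \epsilon_p$ to contact or be contacted by a participating neighbor; and crucially, because \PPull{} chooses each incident edge $wv$ with probability essentially $\deg(w)^{-1} + \deg(v)^{-1}$, the ``wasted'' probability on already-informed neighbors can be controlled exactly as in Claim~\ref{clm:degrees} by a harmonic-mean-of-degrees bound. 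Summing these constant per-node probabilities and applying inclusion–exclusion (the $1 - \prod(1-\deg(v)^{-1}) \ge 1 - e^{-\sum \deg(v)^{-1}}$ trick already used after Claim~\ref{clm:degrees}) yields that a rumor sitting on $J_\tau^u$ reaches a fresh participating node, and ultimately reaches $S$, with constant probability per round.

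The main obstacle I anticipate is handling the distinction between active and passive participating nodes correctly in the growth argument: passive nodes (those in $\partial(S^+)$) never initiate connections, so the rumor can only \emph{enter} them via an active neighbor's push or be \emph{pulled out} of them by an active neighbor, and I must verify that the participating structure still forces geometric growth despite this asymmetry. Concretely, the subtlety is that when I bound $\Exp[|J_{t+1}^u| \mid J_t^u]$ from below I need every informed active node to have constant contact probability with the participating set — which is exactly~\eqref{eq:active} — while informed passive nodes contribute to growth only through being pulled from or serving as relays, and I must make sure the edge-selection probability $\deg(u)^{-1}+\deg(v)^{-1}$ of \PPull{} is genuinely exploited (not just push or just pull) so that every participating edge is activated with constant probability. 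Once that bookkeeping is in place, combining the $O(\log\Delta)$ bound on $\Exp[\tau]$ with the constant per-round hitting probability of $S$ after round $\tau$ gives the claimed $O(\log\Delta)$ expected bound; I would close by noting that the restriction to $P$ and the use of \PPull{} rather than \Push{} are both essential, as flagged in the section overview.
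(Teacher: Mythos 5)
Your overall plan does follow the paper's architecture (restrict the process to $P$, define a stopping time $\tau$, prove a harmonic-degree bound at time $\tau$, then get a constant per-round probability of pushing into $S$), but there is a genuine gap precisely at the step that turns $O(\log n)$ into $O(\log\Delta)$. You justify $\Exp[\tau]=O(\log\Delta)$ by claiming that the rumor's ``reach'' is confined to $S^+\cup\partial(S^+)$ and that this set has size related to $\Delta$. That is false: $S=I_t$ is an arbitrary informed set (up to $n/2$ nodes), and $S^+\cup\partial(S^+)$ can contain $\Theta(n)$ participating nodes, so ``geometric growth until saturation'' only yields $\Exp[\tau]=O(\log n)$, which is not what the lemma claims and is too weak for the intended use in Lemma~\ref{lem:growthIt}. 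The paper closes this gap by a device absent from your proposal: the cutoff is built into the definition of $\tau$ itself, which stops as soon as $|I_t^u|\geq\Delta^2$, and the first part of Claim~\ref{clm:ppdegrees} shows this cutoff is harmless --- every informed node of $\partial(S^+)$ has an informed neighbor in $\partial S$ (the one that pushed to it), each node of $\partial S$ has at most $\Delta-1$ neighbors in $\partial(S^+)$, hence $|I_t^u|\geq\Delta^2$ forces $|I_t^u\cap S^+|\geq\Delta$ and therefore $\sum_{v\in I_t^u\cap S^+}\deg(v)^{-1}\geq 1$, which is exactly the quantity needed after round $\tau$. Without this (or an equivalent truncation) the growth phase cannot be cut off after $O(\log\Delta)$ doublings.

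A second, smaller gap: you propose to reuse ``the same martingale/standard-concentration argument behind Claim~\ref{clm:tau}'', but the paper explicitly warns that this argument does not extend to \PPull. Claim~\ref{clm:tau} bounds $\Pr(X_t=0)$ by applying Markov's inequality to $2\,|I_{t-1}^u|-|I_t^u|$, which is nonnegative under \Push{} (each informed node informs at most one new node per round) but can be negative under \PPull, where many uninformed nodes may pull simultaneously from the informed set. The paper's Claim~\ref{clm:pptau} instead uses negative association of the indicators together with a Chernoff bound to get a constant per-round growth probability, and only then the geometric-waiting-time argument. Finally, note that the harmonic-mean bound must be stated for informed \emph{active} nodes $I_\tau^u\cap S^+$ only (passive nodes have no neighbors in $S$ and never initiate contact); the paper handles this through the $\beta(v)$/$\gamma(v)$ accounting against~\eqref{eq:active} and~\eqref{eq:passive}, whereas your sketch gestures at the active/passive asymmetry but leaves that bookkeeping unresolved.
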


This result is similar to Lemma~\ref{lem:push-to-S}, but holds only for {active} nodes rather than all $u\in \partial S$, and assumes \PPull{} rather than \Push.


The proof analyzes the spread of $u$'s rumor on the subgraph induced by the set $P$ of participating nodes.
We assume that each active node (from the set $A = P\cap S^+$) chooses a random neighbor in each round, and contacts that neighbor if it is participating.
Passive nodes (from $P - A = P\cap \partial(S^+)$) do not choose neighbors; they communicate only with the active nodes that choose them in each round.

Let $I_t^u$ denote the set of informed nodes after $t$ rounds.
Similarly to the proof of Lemma~\ref{lem:push-to-S}, we define
\[
    \tau
    =
    \min\{t \colon \Exp[|I_{t+1}^u| \mid I_t^u] < (1+\varepsilon)\cdot |I_t^u| \ \vee\  |I_t^u| \geq \Delta^2 \},
\]
where $0 < \varepsilon < \epsilon_p/2$ is a constant.
The condition $|I_t^u| \geq \Delta^2$ is added because we want to show a bound of $O(\log\Delta)$, instead of $O(\log n)$ as in Lemma~\ref{lem:push-to-S}.
The square in $\Delta$ is to ensure that if at least $\Delta^2$ nodes are informed then at least $O(\Delta)$ of them are active. (We elaborate later.)
The next result is an analogue of Claim~\ref{clm:tau}.
\begin{claim}
    \label{clm:pptau}
    $\Exp[\tau] = O(\log \Delta)$.
\end{claim}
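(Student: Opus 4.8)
The plan is to mirror the argument for Claim~\ref{clm:tau}, but with the doubling threshold $n$ replaced by $\Delta^2$, so that the resulting bound is $O(\log\Delta)$ rather than $O(\log n)$. Recall that $\tau$ is the first round at which \emph{either} the expected one-round growth of $I_t^u$ drops below a factor $(1+\varepsilon)$, \emph{or} $|I_t^u|$ reaches $\Delta^2$. So before round $\tau$ we are guaranteed that $\Exp[|I_{t+1}^u|\mid I_t^u]\geq(1+\varepsilon)|I_t^u|$, i.e.\ the informed set grows by a constant factor in expectation each round. The goal is to turn this multiplicative drift into an $O(\log\Delta)$ bound on $\Exp[\tau]$.

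First I would set up a supermartingale (or, more elementarily, a geometric-growth argument) from the drift condition. A clean way is to consider the stopped process $X_t := |I_{t\wedge\tau}^u|$ and observe that $\Exp[X_{t+1}\mid \mathcal F_t]\geq(1+\varepsilon)X_t$ on the event $\{t<\tau\}$. Since $X_0=1$ and $X_t$ is bounded above by the absorbing value $\Delta^2$ (the process is stopped as soon as it reaches $\Delta^2$), one can argue that $\Exp[X_t]\geq(1+\varepsilon)^t\Pr[\tau>t]$, or more carefully track $\Exp[(1+\varepsilon)^{-(t\wedge\tau)}X_{t\wedge\tau}]$ to show it is nondecreasing. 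Because $X_{t\wedge\tau}\leq\Delta^2$ always, this forces $\Pr[\tau>t]$ to decay geometrically once $(1+\varepsilon)^t$ exceeds $\Delta^2$, i.e.\ once $t\gtrsim\log_{1+\varepsilon}(\Delta^2)=O(\log\Delta)$. Summing the tail $\sum_t\Pr[\tau>t]$ then yields $\Exp[\tau]=O(\log\Delta)$.

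More concretely, I expect the cleanest route is to show a tail bound of the form: for $t=c\log\Delta$ with $c$ a suitable constant, $\Pr[\tau>t]$ is bounded by a constant strictly below $1$, and then iterate this over blocks of length $O(\log\Delta)$ using the Markov property. In each block of $O(\log\Delta)$ rounds, conditioned on $\tau$ not having occurred, the informed set must grow multiplicatively, and since it cannot exceed $\Delta^2$ it must \emph{reach} the cap $\Delta^2$ (hence trigger $\tau$) with at least constant probability; otherwise it would need to keep growing past $\Delta^2$, a contradiction. This gives $\Pr[\tau>k\cdot O(\log\Delta)]\leq q^k$ for some constant $q<1$, whence $\Exp[\tau]\leq O(\log\Delta)\cdot\sum_{k\geq0}q^k=O(\log\Delta)$.

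The main obstacle I anticipate is controlling the lower tail of the multiplicative growth: the condition only bounds the \emph{expected} increase, so a single round could in principle see very little growth even when the expectation is large, and I must rule out that $\tau$ stays large merely because growth is erratic. Handling this requires either a martingale/optional-stopping argument applied to $(1+\varepsilon)^{-(t\wedge\tau)}X_{t\wedge\tau}$ (exploiting the hard upper cap $\Delta^2$ to keep the stopped process bounded, so dominated convergence / optional stopping applies cleanly), or a Chernoff-type concentration on the block level. I would favor the optional-stopping approach since the cap at $\Delta^2$ makes boundedness immediate and sidesteps delicate concentration estimates on heterogeneous-degree push steps. A secondary technical point is verifying that the definition of $\tau$ (a minimum over two conditions) is compatible with the supermartingale bookkeeping, but since reaching $\Delta^2$ only makes $\tau$ occur \emph{sooner}, it can only help the upper bound on $\Exp[\tau]$, so it causes no real difficulty.
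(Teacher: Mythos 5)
Your proposal has a genuine gap, and it sits exactly at the point you flag as the ``main obstacle'': upward multiplicative drift in expectation, even combined with the hard cap at $\Delta^2$, does \emph{not} imply an $O(\log\Delta)$ bound on $\Exp[\tau]$, and no optional-stopping bookkeeping can extract one. Consider the following process, which satisfies every hypothesis your argument uses: $X_0=1$, and while $X_t<\Delta^2$, set $X_{t+1}=\Delta^2$ with probability $(1+\varepsilon)X_t/\Delta^2$ and $X_{t+1}=X_t$ otherwise. Before stopping we have $\Exp[X_{t+1}\mid X_t]\geq(1+\varepsilon)X_t$, the process never exceeds $\Delta^2$, yet the hitting time of $\Delta^2$ is geometric with success probability $(1+\varepsilon)/\Delta^2$, so $\Exp[\tau]=\Theta(\Delta^2)$. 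The submartingale $M_t=(1+\varepsilon)^{-(t\wedge\tau)}X_{t\wedge\tau}$ does satisfy $\Exp[M_t]\geq 1$, but this expectation is carried entirely by the rare early-stopping paths, on which $M_\tau=(1+\varepsilon)^{-\tau}\Delta^2$ is huge; the decomposition $\Exp[M_t]=\Exp[M_t;\tau\leq t]+\Exp[M_t;\tau>t]$ therefore yields no upper bound on $\Pr(\tau>t)$, and your intermediate claim $\Exp[X_t]\geq(1+\varepsilon)^t\Pr(\tau>t)$ is false for this process once $t$ is large. The block version of your argument fails for the same reason: ``conditioned on $\tau$ not having occurred, the informed set must grow multiplicatively'' conflates the conditional expectation with the realized trajectory — the realization may stay put for many rounds while the promised expectation is concentrated on a rare large jump, and no contradiction with the cap arises. (A secondary slip: the stopped process is not bounded by $\Delta^2$, since pull transmissions can overshoot far above $\Delta^2$ in the stopping round; but the argument fails even without overshoot, as the example shows.)

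What is missing — and what the paper's proof supplies — is precisely the per-round concentration you hoped to sidestep: a lower bound, uniform over histories, on the \emph{probability} that $|I_t^u|$ grows by a constant factor. This is the real content of Claim~\ref{clm:pptau} and is nontrivial because, unlike in Claim~\ref{clm:tau} where \Push{} guarantees $|I_t^u|\leq 2|I_{t-1}^u|$ almost surely (so Markov's inequality applied to the nonnegative variable $2|I_{t-1}^u|-|I_t^u|$ works), \PPull{} admits no almost-sure per-round growth cap. The paper writes $|I_t^u|=\sum_v Y_v$ with $Y_v$ the indicator that $v$ is informed, observes that these indicators are negatively associated (citing~\cite{Dubhashi2009book}), and applies a Chernoff bound to get $\Pr\bigl(|I_{t}^u|<(1+\varepsilon/3)|I_{t-1}^u|\bigr)\leq e^{-\varepsilon^2/18}$ whenever the drift condition holds. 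With that in hand the counting is easy: at most $2\log_{1+\varepsilon/3}\Delta$ growth rounds can occur before $|I_t^u|\geq\Delta^2$, and each round is a growth round (or has $t\geq\tau$) with probability at least $1-e^{-\varepsilon^2/18}$ regardless of the past, giving $\Exp[\tau]\leq(2\log_{1+\varepsilon/3}\Delta)/(1-e^{-\varepsilon^2/18})=O(\log\Delta)$. To repair your write-up you must prove such a concentration statement; the cap at $\Delta^2$ cannot substitute for it.
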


\begin{proof}
The proof is along the same lines as the proof of Claim~\ref{clm:tau}.
The main difference is that we use a new argument to lower-bound the probability that $|I_{t}^u|$ grows by a constant factor in a given round $t\leq \tau$, as the argument used in  Claim~\ref{clm:tau} for \Push{} does not extend to \PPull.

Let $X_t$, for $t\geq1$, be the 0/1 random variable that is 1 iff either
$|I_{t}^u| \geq (1+\varepsilon/3)\cdot|I_{t-1}^u|$ or $t \geq \tau$.
Further, let
\[
    \tau' = \min\bigg\{i \colon \sum_{t=1}^i X_t\geq 2\log_{1+\varepsilon/3}\Delta\bigg\}.
\]
We have
\[
    \Exp[\tau]\leq \Exp[\tau'],
\]
because: For any k, if $\tau > k$, then $|I_{k}^u| \geq (1+\varepsilon/3)^{\sum_{t=1}^k X_t}$ and $|I_t^u| < \Delta^2 $, and thus,
\[
    \sum_{t=1}^k X_t
    \leq
    \log_{1+\varepsilon/3} (|I_{k}^u|)
    < \log_{1+\varepsilon/3} (\Delta^2)
    =2\log_{1+\varepsilon/3}\Delta,
\]
which implies $\tau' > k$.
Hence, for any $k$, we have that $\tau > k$ implies $\tau' > k$, and thus $\Exp[\tau] \leq \Exp[\tau']$.

Next we establish a lower-bound on the probability that $X_t = 1$, which holds independently of the past.
We fix the outcome of the first $t-1$ rounds, and show that
\[
    \Pr(X_t = 1) \geq  1-e^{-\varepsilon^2/18}.
\]
Suppose that $\tau > t-1$.
(Otherwise, $X_t = 1$ and the inequality above  holds trivially.)
We have
\begin{align*}
    \Pr(X_t = 0)
    &=
    \Pr(|I_{t}^u| < (1+\varepsilon/3)\cdot|I_{t-1}^u|  \ \wedge\  t < \tau)
    \\&
    \leq
    \Pr(|I_{t}^u| < (1+\varepsilon/3)\cdot|I_{t-1}^u|).
\end{align*}
Also, from $\tau$'s definition and the assumption that $\tau > t-1$ it follows that
$\Exp[|I_{t}^u|] \geq (1+\varepsilon)\cdot|I_{t-1}^u|$.
If we could express $|I_{t}^u|$ as a sum of independent 0/1 random variables, then we could bound $\Pr(|I_{t}^u| < (1+\varepsilon/3)\cdot|I_{t-1}^u|)$ using the above lower bound on $\Exp[|I_{t}^u|]$ and Chernoff bounds.
We have $|I_{t}^u|=\sum_{v\in V} Y_v$, where $Y_v$ is the indicator variable of the event that $v\in I_{t}^u$.
But the random variables $Y_v$ are not independent: if two nodes $v,v'\notin I_{t-1}$ have a common neighbor $w\in I_{t-1}$, then the events that $w$ pushes the rumor to $v$ or to $v'$ in round $t$ are correlated.
However, the random variables $Y_u$ are \emph{negatively associated}~\cite{Dubhashi2009book}.
This follows from~\cite[Example 4.5]{Dubhashi2009book}.\footnote{The example cited refers to a general balls and bins model.
In our case: bins are the nodes; for each informed node $u$ we place a ball to a random neighbor of $u$; and for each uninformed node $v$ we place a ball to $v$ iff a randomly chosen neighbor of $v$ is informed. The example then shows that the number of balls in the bins are negatively associated.}
Negative association allows us to use standard Chernoff bounds,
thus we have
\begin{align*}
    \Pr(|I_{t}^u| < (1+\varepsilon/3)&\cdot|I_{t-1}^u|)
    \leq
    \Pr\!\left(|I_{t}^u| < (1+\varepsilon/3)\cdot\frac{\Exp[|I_{t}^u|]}{1+\varepsilon} \right)
    \\&
    \leq
    \Pr(|I_{t}^u| < (1-\varepsilon/3)\cdot\Exp[|I_{t}^u|])
    \\&
    \leq
    e^{-(\varepsilon/3)^2\cdot \Exp[|I_{t}^u|/2}
    \leq
    e^{-(\varepsilon/3)^2\cdot (1+\varepsilon) \cdot|I_{t-1}^u|/2}
    \\&
    \leq
    e^{-\varepsilon^2/18}.
\end{align*}
Combing the above gives $\Pr(X_t = 1) = 1 - \Pr(X_t = 0) \geq 1 - e^{-\varepsilon^2/18}$.

It follows that for any $t$, $\Pr(X_t= 1 \mid X_1\ldots X_{t-1})\geq 1-e^{-\varepsilon^2/18}$.
Thus, in the 0/1 sequence $X_1,X_2,\ldots,$ the distance between the $i$-th `1' and the $(i+1)$-th `1' is stochastically dominated by a geometric random variable with expectation $1/(1-e^{-\varepsilon^2/18})$.
From the linearity of expectation then it follows that $\Exp[\tau'] \leq (2\log_{1+\varepsilon/3}\Delta)/(1 - e^{-\varepsilon^2/18})=O(\log\Delta)$.
This completes the proof of Claim~\ref{clm:pptau}.
\end{proof}

The next result is an analogue of Claim~\ref{clm:degrees}.
It bounds the harmonic mean of the degrees of \emph{active} nodes informed in the first $\tau$ rounds.
\begin{claim}
    \label{clm:ppdegrees}
    If $\Exp[|I_{t+1}^u| \mid I_t^u] < (1+\varepsilon)\cdot |I_t^u|$
    or $|I_t^u| \geq \Delta^2$, then
    $\sum_{v\in I_t^u\cap S^+} \deg(v)^{-1} \geq  \zeta := (\epsilon_p - 2\varepsilon)/3$.
\end{claim}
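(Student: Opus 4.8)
The plan is to prove the two disjunctive hypotheses by separate arguments. Throughout write $I:=I_t^u$; note that in the restricted process the rumor never leaves the participating subgraph, so $I\subseteq P$, and $I\cap S^+ = I\cap A$ is exactly the set of informed \emph{active} nodes.

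First I would dispatch the easy case $|I|\geq\Delta^2$ by a counting argument that explains the square in the threshold. A passive node is informed only when an active node already holding the rumor selects it (passive nodes never initiate a contact), so every informed passive node has an informed active neighbor. Consequently the informed passive nodes are covered by the neighborhoods of $I\cap A$, giving $|I|-|I\cap A|\leq \sum_{w\in I\cap A}\deg(w)\leq \Delta\cdot|I\cap A|$ and hence $|I\cap A|\geq |I|/(1+\Delta)\geq \Delta/2$. Since each active node has degree at most $\Delta$, this already yields $\sum_{v\in I\cap A}\deg(v)^{-1}\geq(\Delta/2)\cdot\Delta^{-1}=1/2\geq\zeta$.

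The substantive case is slow growth: assume $\Exp[|I_{t+1}^u|\mid I]<(1+\varepsilon)|I|$ and, for contradiction, $\sum_{v\in I\cap A}\deg(v)^{-1}<\zeta$. Summing the participating inequality~\eqref{eq:active} over the informed active nodes and~\eqref{eq:passive} over the informed passive nodes gives
\[
    \sum_{w\in I\cap A}\frac{|N(w)\cap P|}{\deg(w)}
    \;+\;
    \sum_{w\in I}\ \sum_{v\in N(w)\cap A}\deg(v)^{-1}
    \;\geq\;
    \epsilon_p\,|I|.
\]
The left-hand side measures the total expected probability mass of \PPull{} connections incident to $I$: the first sum counts pushes from informed active nodes onto participating neighbors, and the second counts pulls by active nodes toward informed neighbors. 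I would split this mass into transmissions reaching an uninformed participating node (\emph{useful}) and transmissions landing on an already informed node (\emph{wasted}). The wasted mass is a sum of two terms, each of the form $\sum_{w\in I\cap A}|N(w)\cap I|/\deg(w)$; exchanging the order of summation bounds each by $|I|\cdot\sum_{w\in I\cap A}\deg(w)^{-1}<\zeta|I|$. Subtracting, the expected number of useful transmission incidences is at least $(\epsilon_p-2\zeta)|I|$, which, by the choice $\zeta=(\epsilon_p-2\varepsilon)/3$ together with $\varepsilon<\epsilon_p/2$, exceeds $2\varepsilon|I|$.

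Finally I would convert incidences into \emph{distinct} newly informed nodes to obtain $\Exp[|I_{t+1}^u|-|I|]>\varepsilon|I|$ and contradict the slow-growth hypothesis. For an uninformed participating node $v$ the copies it may receive are the pushes from its active informed neighbors together with (if $v$ is active) its single pull, so by a Bonferroni/inclusion--exclusion lower bound the expected number of nodes receiving at least one copy is at least the useful incidence mass minus the expected number of coincident pairs of copies. Here lies the conceptual crux, and the step I expect to be the main obstacle: \emph{all} interference at $v$ comes from its active informed neighbors, whose total push probability is exactly $\sum_{w\in N(v)\cap I\cap A}\deg(w)^{-1}<\zeta$—this is precisely why the claim need only control the harmonic mean over active nodes. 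Bounding the pairwise-collision mass by $\zeta\cdot(\text{useful incidences})$ then leaves an expected number of newly informed nodes of at least $(\epsilon_p-2\zeta)(1-\zeta)|I|>2\varepsilon\cdot(8/9)\cdot|I|>\varepsilon|I|$, as $\zeta<1/9$. This contradiction completes the slow-growth case and, with the first case, establishes the claim. The delicate bookkeeping will be to verify the collision estimate cleanly for the mixed push/pull copies without double counting, much as in the exactly-one-copy argument of Claim~\ref{clm:degrees}.
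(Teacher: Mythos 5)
Your proposal is correct and takes essentially the same approach as the paper: the same neighborhood-covering argument for the $|I_t^u|\geq\Delta^2$ case, and for the slow-growth case the same summation of~\eqref{eq:active} and~\eqref{eq:passive} over the informed nodes, with the ``wasted'' informed-to-informed mass bounded by $2\zeta|I_t^u|$ via the contradiction hypothesis. The only divergence is in the final conversion step, and it is cosmetic: where the paper counts uninformed nodes receiving exactly one pushed copy (picking up the factor $1-\zeta$ from a product over informed active nodes) plus nodes that pull, and halves the sum to correct double counting, you use a Bonferroni bound charging pairwise collisions at rate $\zeta$ against the useful mass --- both devices rest on the same observation that all interference at a node comes from its informed active neighbors, whose total harmonic weight is below $\zeta$.
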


\begin{proof}
First we show that $|I_t^u| \geq \Delta^2$ implies $\sum_{v\in I_t^u\cap S^+} \deg(v)^{-1} \geq \zeta$; this is the easier part.
Each informed node $w\in \partial (S^+)$ has at least one informed neighbor $v\in \partial S$, the one that pushed the rumor to $w$.
Further each node $v\in \partial S$ has at most $\deg(v) - 1\leq \Delta-1$ neighbors in $\partial (S^+)$ (and at least one in $S$).
It follows that if $|I_t^u| \geq \Delta^2$, then at least $\Delta$ of the nodes in $I_t^u$ belong to $S^+$, i.e., $|I_t^u\cap S^+|\geq\Delta$, because otherwise, we have $|I_t^u| \leq  |I_t^u \cap S^+| + |I_t^u\cap S^+|\cdot (\Delta-1) < \Delta^2$.
From this it follows that $\sum_{v\in I_t^u\cap S^+} \deg(v)^{-1}\geq |I_t^u\cap S^+|\cdot \Delta^{-1}\geq \Delta\cdot \Delta^{-1} = 1 > \zeta$.

It remains to show that $\Exp[|I_{t+1}^u| \mid I_t^u] < (1+\varepsilon)\cdot |I_t^u|$ implies $\sum_{v\in I_t^u\cap S^+} \deg(v)^{-1} \geq \zeta$.
We assume that $\sum_{v\in I_t^u\cap S^+} \deg(v)^{-1} < \zeta$, and will show that $\Exp[|I_{t+1}^u| \mid I_t^u] \geq (1+\varepsilon)\cdot |I_t^u|$.

The proof builds upon the ideas used to prove Claim~\ref{clm:degrees}.
Similarly to Claim~\ref{clm:degrees}, we count the number of uninformed nodes to which exactly one copy of the rumor is \emph{pushed} in round $t+1$ (these nodes may also receive a second copy via pull).
In addition, we also count the number of uninformed nodes that \emph{pull} the rumor in this round.
The sum of those two numbers is then a lower-bound on twice the total number of nodes informed in round $t+1$.
We lower-bound the expectation of this sum using the definition of active and passive nodes (Definition~\ref{def:participating}).

Fix $I_t^u$.
For each informed active node $v\in I_t^u\cap S^+$, let $\beta(v) = |N(v) \cap (P-I_t^u)|/ \deg(v)$ be the fraction of $v$'s neighbors that are participating and uninformed.
Then the probability that $v$ pushes the rumor to such a neighbor and no other nodes pushes the rumor to the same neighbor is lower-bounded by
\begin{align*}
    \beta(v)\cdot\!\!\! \prod_{v'\in I_t^u \cap S^+}\mspace{-15mu}
    \left(1 - \deg(v')^{-1}\right)
    &\geq
    \beta(v)\cdot \left(1 - \sum_{v'\in I_t^u \cap S^+} \mspace{-15mu} \deg(v')^{-1}\right)
    \\&
    \geq
    \beta(v)\cdot (1-\zeta).
\end{align*}
Further, for each informed node $v\in I^u_t$, let $\gamma(v) = \sum_{v'\in N(v)\cap (A - I_t^u)}\deg(v')^{-1}$ be the expected number of uninformed active nodes that pull the rumor from $v$.
It follows that the expected total number of nodes that get informed in round $t+1$ is
\begin{equation}
    \label{eq:expI-beta-gamma}
    \Exp\left[|I_{t+1}^u| - |I_{t}^u|\right]
    \geq
    \frac12\sum_{v\in I_t^u\cap S^+} \beta(v)\cdot (1-\zeta)
    +
    \frac12\sum_{v\in I_t^u} \gamma(v).
\end{equation}

Next we bound each of the above sums of $\beta(v)$ and $\gamma(v)$ using Definition~\ref{def:participating}.
For each informed active node $v\in I_t^u\cap S^+$ we apply~\eqref{eq:active}: we have
\[
    \frac{|N(v)\cap P|}{\deg(v)}
    =
    \frac{|N(v)\cap I_t^u|}{\deg(v)} +  \beta(v)
    \leq
    \frac{|I_t^u|}{\deg(v)} + \beta(v),
\]
and
\begin{align*}
    \sum_{v'\in N(v)\cap A}\deg(v')^{-1}
    &=
    \sum_{v'\in N(v)\cap (A\cap I_t^u)}\deg(v')^{-1} +  \gamma(v)
    \\&
    \leq
    \sum_{v'\in A\cap I_t^u}\deg(v')^{-1} +  \gamma(v)
    <
    \zeta +  \gamma(v).
\end{align*}
From~\eqref{eq:active} then it follows
\[
    |I_t^u|/\deg(v) + \beta(v) + \zeta +  \gamma(v) \geq \epsilon_p.
\]
Summing now over all $v\in I_t^u\cap S^+$ and rearranging yields
\begin{align*}
    \sum_{v\in I_t^u\cap S^+} (\beta(v) + \gamma(v))
    &\geq
    |I_t^u\cap S^+|\cdot (\epsilon_p-\zeta) - \sum_{v\in I_t^u\cap S^+}\frac{|I_t^u|}{\deg(v)}
    \\&
    \geq
    |I_t^u\cap S^+|\cdot (\epsilon_p-\zeta) - |I_t^u|\cdot \zeta.
\end{align*}
Next, for each informed passive node $v\in I_t^u - S^+$, we obtain similarly using~\eqref{eq:passive} that $\zeta +  \gamma(v) \geq \epsilon_p$, and summing over all such $v$ gives
\[
    \sum_{v\in I_t^u- S^+} \gamma(v)
    \geq
    |I_t^u- S^+|\cdot (\epsilon_p-\zeta).
\]
Adding the last two inequalities above yields
\begin{align*}
    \sum_{v\in I_t^u\cap S^+} \beta(v) + \sum_{v\in I_t^u}\gamma(v)
    &\geq
    |I_t^u|\cdot (\epsilon_p-\zeta) - |I_t^u|\cdot \zeta
    \\&
    =
    |I_t^u|\cdot (\epsilon_p-2\zeta).
\end{align*}
From this and~\eqref{eq:expI-beta-gamma} it follows
\[
    \Exp[|I_{t+1}^u| - |I_{t}^u|]
    \geq
    ((\zeta-1)/2)\cdot |I_t^u|\cdot (\epsilon_p-2\zeta)
    \geq
    \varepsilon\cdot |I_t^u|,
\]
where for the last inequality we used that $\zeta = (\epsilon_p - 2\varepsilon)/3$.
This completes the proof of Claim~\ref{clm:ppdegrees}.
\end{proof}

Using Claim~\ref{clm:ppdegrees} it is easy to show an $O(1)$ bound on the expected number of additional rounds after round $\tau$, until some node in $S$ gets informed.
Fix $I_\tau^u$ and suppose that $I_\tau^u\cap S=\emptyset$ (otherwise some node from $S$ is already informed).
Then $I_\tau^u\cap S^+ = I_\tau^u\cap \partial S$, and Claim~\ref{clm:ppdegrees} gives $\sum_{v\in I_t^u\cap \partial S} \deg(v)^{-1} \geq \zeta$.
Since each node $v\in I_t^u\cap \partial S$ has at least one neighbor in $S$, the probability that none of these nodes pushes the rumor to a neighbor in $S$ in a given round $t > \tau$, is at most
\[
    \prod_{v\in I_t^u\cap \partial S}\left(1 - \deg(v)^{-1}\right)
    \leq
     e^{-\sum_{v\in I_t^u\cap \partial S}\deg(v)^{-1}}
    \leq
    e^{-\zeta}.
\]
It follows that the expected number of rounds until some of the nodes $v\in I_\tau^u\cap \partial S$ pushes the rumor to a node in $S$ is upper bounded by $1/(1 - e^{-\zeta}) = O(1)$.
Combining this with Claim~\ref{clm:pptau}, which says $\Exp[\tau] = O(\log\Delta)$, proves Lemma~\ref{lem:ppull-to-S}.

\subsubsection{Finishing the Proof of Lemma~\ref{lem:growthIt}}
\label{sec:finish-mainlemma}

We will use the next standard lemma, which is the analogue of Lemma~\ref{lem:symmetry} for \PPull{}.
Versions of this result can be found, e.g.,
in~\cite{Chierichetti2010stoc,GiakkoupisS2012soda,Censor-Hillel2012stoc}.
\begin{lemma}
    \label{lem:ppsymmetry}
    Let $T(V_1,V_2)$, for $V_1,V_2\subseteq V$, be the number of rounds for \PPull{} until a rumor that is initially known to all nodes $u\in V_1$ (and only them) spreads to at least one node $v\in V_2$.
    Then, for any  $V_1,V_2\subseteq V$, random variables $T(V_1,V_2)$ and $T(V_2,V_1)$ have the same distribution.
\end{lemma}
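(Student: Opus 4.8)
The plan is to establish a coupling between the two push-pull processes so that each realization of one is matched to an equally-likely realization of the other with the source and target roles swapped. The key observation is that in \PPull{}, the communication pattern in each round is captured by the set of \emph{directed contacts}: each node $w$ selects one random neighbor $c_t(w)\in N(w)$ in round $t$, and this selection is made independently across nodes and rounds, \emph{regardless of whether $w$ is informed}. Once the full table of selections $\{c_t(w) : w\in V,\ t\geq 1\}$ is fixed, the rumor propagates deterministically: in each round, the rumor passes along every selected edge $(w,c_t(w))$ in whichever direction carries it from an informed endpoint to an uninformed one (for \PPull{}, an edge transmits the rumor if either endpoint selected the other and at least one endpoint is informed).

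First I would make this precise by defining, for a fixed selection table, the notion of an \emph{active edge in round $t$}: an unordered pair $\{w, c_t(w)\}$, i.e., an edge traversed because one of its endpoints selected the other. The crucial symmetry is that whether the pair $\{x,y\}$ is active in round $t$ depends only on the selections $c_t(x)$ and $c_t(y)$, and this activation event is \emph{independent of which nodes are informed} and of the direction of information flow. This is exactly where \PPull{} differs from pure \Push{} or \Pull{}: because every node selects a neighbor in every round regardless of its state, the set of active edges in each round is a function of the random selections alone, and the rumor crosses an active edge bidirectionally (from whichever endpoint is informed). Therefore, for a fixed selection table $\mathcal{C}$, a rumor started at $V_1$ reaches $V_2$ within $t$ rounds if and only if there is a path $v_0, v_1, \ldots, v_k$ with $v_0\in V_1$, $v_k\in V_2$, and each consecutive pair $\{v_{i-1},v_i\}$ active in some round, with the activation rounds strictly increasing along the path. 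This reachability condition is manifestly \emph{symmetric} in the roles of $V_1$ and $V_2$: reversing the path shows that $V_2$ reaches $V_1$ in exactly the same number of rounds under the same selection table.

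The main step, then, is to argue that $T(V_1,V_2)\leq t$ and $T(V_2,V_1)\leq t$ are the \emph{same event} on the underlying probability space (the space of selection tables), which immediately gives equality of distributions. Concretely, for a path witnessing that $V_1$ reaches $V_2$ by round $t$ using active edges in rounds $t_1 < t_2 < \cdots < t_k \leq t$, the reversed path witnesses that $V_2$ reaches $V_1$ using the \emph{same} active edges in the same rounds, read in reverse order $t_k > \cdots > t_1$; since activation of each edge is independent of direction, this reversed sequence is also a valid rumor-spreading path. Hence $\{T(V_1,V_2)\leq t\} = \{T(V_2,V_1)\leq t\}$ as events, so the two random variables are equal pointwise on this space and in particular have the same distribution.

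I expect the main obstacle to be the careful verification that the rumor truly crosses each active edge bidirectionally in \PPull{}, and that ``reachability via a time-increasing sequence of active edges'' is genuinely the right characterization of $T(\cdot,\cdot)$. One must check that an edge $\{x,y\}$ activated in round $t$ transmits the rumor in round $t$ from $x$ to $y$ whenever $x$ is informed at the start of round $t$ (via $x$'s push or $y$'s pull), and symmetrically from $y$ to $x$; this is where the combined push-and-pull nature is essential, since with push-only an edge selected by the uninformed endpoint would not transmit. Since the paper states this is a standard result with versions in~\cite{Chierichetti2010stoc,GiakkoupisS2012soda,Censor-Hillel2012stoc}, I would keep the coupling argument brief, emphasizing that the time-reversal of the selection-table reachability relation is the sole source of the claimed symmetry, and defer the routine case-checking of edge transmission to those references.
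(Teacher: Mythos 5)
Your setup is exactly the right one: fixing the full table of selections, observing that in \PPull{} every node selects a neighbor in every round independently of its state, defining active edges, and characterizing the informed set at time $t$ as the set of nodes reachable from the source set by a path whose edges are active in \emph{strictly increasing} rounds. The gap is in the last step. You claim that the reversed path, with the same activation rounds read in decreasing order $t_k > \cdots > t_1$, ``is also a valid rumor-spreading path,'' and conclude that $\{T(V_1,V_2)\le t\}$ and $\{T(V_2,V_1)\le t\}$ are the same event. This is false: a rumor-spreading path must have increasing activation rounds, and the reversed path does not. Concretely, when the rumor starts in $V_2$, the edge $\{v_{k-1},v_{k-2}\}$ was active at round $t_{k-1}$, which is \emph{before} $v_{k-1}$ becomes informed (at round $t_k$), so that activation transmits nothing. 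A counterexample to pointwise equality: take the path $a\!-\!b\!-\!c\!-\!d\!-\!e$ with $V_1=\{a\}$, $V_2=\{e\}$; the edges $\{a,b\}$ and $\{d,e\}$ are active in every round (the endpoints $a$ and $e$ have a single neighbor), and on the event that $\{b,c\}$ is active only in round $2$ and $\{c,d\}$ only in round $3$ during the first ten rounds, one has $T(\{a\},\{e\})=4$ while $T(\{e\},\{a\})>10$. So the two random variables are \emph{not} equal on the underlying space; only their distributions coincide, and your argument as written proves too much.

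The repair is short and uses an ingredient you already have but never invoke: the selections in different rounds are i.i.d. For each fixed horizon $t$, the map that reverses the time order of the first $t$ columns of the selection table, $c_s \mapsto c_{t+1-s}$ for $1\le s\le t$, is a measure-preserving involution of the probability space. Under this map, a path from $V_1$ to $V_2$ with activation rounds $t_1<\cdots<t_k\le t$ becomes the reversed path from $V_2$ to $V_1$ with activation rounds $t+1-t_k<\cdots<t+1-t_1$, which are again increasing and lie in $\{1,\dots,t\}$ (here your observation that activation is direction-blind is what makes the reversed path usable). Hence the reversal maps $\{T(V_1,V_2)\le t\}$ bijectively onto $\{T(V_2,V_1)\le t\}$, giving $\Pr(T(V_1,V_2)\le t)=\Pr(T(V_2,V_1)\le t)$ for every $t$, i.e., equality in distribution --- which is all the lemma claims, and all that is true. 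This round-reversal coupling is precisely the standard argument in the references the paper cites (the paper itself omits the proof); your write-up has all the pieces of it except the one that carries the symmetry, namely exchangeability of the rounds.
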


From this lemma and Lemma~\ref{lem:ppull-to-S}, it follows that the rumor spreads from $S$ to a given node $u\in \partial S \cap P$ in an expected number of at most $\ell = O(\log\Delta)$ rounds.
Markov's Inequality then gives that in $2\ell$ rounds $u$ is informed with probability at least 1/2, and from the linearity of expectation, in $2\ell$ rounds at least 1/2 of the nodes from $\partial S \cap P$ are informed in expectation.
Using Markov's again we obtain that in $2\ell$ rounds more than $3/4$ of the nodes from $\partial S \cap P$ are still uninformed with probability at most $(1/2)/(3/4)=2/3$.
Thus, 1/4 of the nodes from $\partial S \cap P$ are informed in an expected number of most $2\ell/(1/3) = 6\ell$ steps.
From this and Claim~\ref{clm:active-fraction}, that says $|\partial S \cap P| = \Theta(|\partial S|)$, we obtain Lemma~\ref{lem:growthIt}.

\subsection{The Case of High Boundary Expansion: Analysis of the  Growth of \texorpdfstring{$\boldsymbol{I_t^+}$}{It plus}}
\label{sec:Itplus}

In this section we prove the following result.
\begin{lemma}
    \label{lem:growthItplus}
    Suppose that $I_t = S$ for some set $S \subset V$ with boundary expansion $h(S)\geq\epsilon_h$, where $ \epsilon_h > 0$ is an arbitrary constant.
    There is a positive integer $b = b(S) \leq |\partial S|/\alpha$ such that the expected number of rounds until $b$ nodes from $\partial(S^+)$ have some informed neighbor is  $O\left((b/|\partial S|)\cdot \log\Delta \right)$.
\end{lemma}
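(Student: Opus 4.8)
The plan is to lower-bound the real \PPull{} process by the simpler process in which we only let still-uninformed nodes of $\partial S$ pull the rumor from $S$, and to track the random variable $C_k$ counting the nodes of $\partial(S^+)$ that have an informed neighbor in $\partial S$ after $k$ rounds. Since every neighbor of a node $v\in\partial(S^+)$ lying in $S^+$ in fact lies in $\partial S$, ignoring pushes and all spreading outside $\partial S$ can only slow the growth of $C_k$; hence an upper bound on $T_b:=\min\{k:C_k\ge b\}$ for this restricted process is also an upper bound for the true first time that $b$ nodes of $\partial(S^+)$ acquire an informed neighbor. For each $v\in\partial(S^+)$ set $w_v=1-\prod_{u\in N(v)\cap\partial S}(1-\deg(u)^{-1})$, the probability that $v$ gains an informed $\partial S$-neighbor in one round when none of its $\partial S$-neighbors is yet informed. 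By~\eqref{eq:hofS} we have $\sum_{v\in\partial(S^+)}w_v=h(S)\,|\partial S|\ge\epsilon_h|\partial S|$, and $w_v\ge 1/\Delta$ since every $v$ has a $\partial S$-neighbor of degree at most $\Delta$.

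Next I would localize this weight on a single scale. Partition $\partial(S^+)$ into the $O(\log\Delta)$ buckets $B_j=\{v:2^{-(j+1)}\le w_v<2^{-j}\}$, $0\le j\le\lceil\log_2\Delta\rceil$. By averaging, some bucket $B_{j^*}$ carries weight $\sum_{v\in B_{j^*}}w_v=\Omega(\epsilon_h|\partial S|/\log\Delta)$; since every $w_v<2^{-j^*}$ there, this forces $|B_{j^*}|=\Omega(2^{j^*}|\partial S|/\log\Delta)$, while every $v\in B_{j^*}$ still has $w_v\ge 2^{-(j^*+1)}$. I would then set $b=\min\big(\lfloor\tfrac14|B_{j^*}|\rfloor,\ \lfloor|\partial S|/\alpha\rfloor\big)$, a positive integer bounded by $|\partial S|/\alpha$ as required (positivity of $|\partial S|/\alpha\ge|\partial S|\ge 1$ is immediate since $\alpha\le1$ and $\partial S\neq\emptyset$, and that of $\tfrac14|B_{j^*}|$ follows from the lower bound on $|B_{j^*}|$, up to adjusting constants for small instances).

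The heart of the argument is a per-round drift bound for $C_k$ that holds uniformly while $C_k<b$. Fix any history with $C_k<b$. Because $b\le\tfrac14|B_{j^*}|$, fewer than $\tfrac14|B_{j^*}|$ nodes of $B_{j^*}$ are currently covered, so at least $\tfrac34|B_{j^*}|$ nodes of $B_{j^*}$ are uncovered; for each such $v$, none of its $\partial S$-neighbors is yet informed, and since each such neighbor becomes informed in the next round with probability at least its inverse degree (it can pull its neighbor in $S$), independently across neighbors, $v$ gains an informed $\partial S$-neighbor with probability at least $w_v\ge 2^{-(j^*+1)}$. Summing and using the lower bound on $|B_{j^*}|$,
\[
\Exp[C_{k+1}-C_k\mid\mathcal F_k]\ \ge\ \tfrac34|B_{j^*}|\cdot 2^{-(j^*+1)}\ =\ \Omega\!\big(|\partial S|/\log\Delta\big)=:\delta .
\]
Since $C_k$ is non-decreasing and $C_0=0$, a standard additive-drift / optional-stopping argument then gives $\Exp[T_b]\le b/\delta=O\big((b/|\partial S|)\log\Delta\big)$, which is the claim.

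The main obstacle I anticipate is precisely the uniformity of this drift. A naive single-round application of~\eqref{eq:pull-bexpansion} only guarantees $\Omega(|\partial S|)$ newly covered nodes with probability $\Omega(1/\Delta)$ (by reverse Markov, as the maximal coverage can be as large as $\Delta|\partial S|$), which is far too weak to convert into an $O(\log\Delta)$-type hitting time. The bucketing is exactly what turns this low single-round success probability into a guaranteed expected drift of $\Omega(|\partial S|/\log\Delta)$ that persists throughout the whole horizon $\{C_k<b\}$. The two points needing care are: that ``$v$ uncovered'' genuinely means \emph{all} of its $\partial S$-neighbors are uninformed, so that the independent pull events yield the clean factor $w_v$; and that capping $b$ at $\lfloor|\partial S|/\alpha\rfloor$ (needed for the stated bound on $b$) cannot violate the invariant $b\le\tfrac14|B_{j^*}|$ on which the drift relies—which holds since the cap only decreases $b$.
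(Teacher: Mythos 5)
There is a genuine gap, and it appears at your very first step: the reduction to the pull-only process in which only uninformed nodes of $\partial S$ may pull from $S$. The direction of the reduction (discarding transmissions can only slow coverage) is fine, but the restricted process simply does not satisfy the lemma, so no argument about it can prove the lemma. Concretely, take $S=\{s\}$ with $N(s)=\{u_1,u_2\}$; let $u_1$ be adjacent to $s$, to $u_2$, and to $\Delta-2$ nodes $v_1,\dots,v_{\Delta-2}$ forming a bounded-degree expander among themselves; let $u_2$ be adjacent only to $s$ and $u_1$. Then $\alpha(G)=\Theta(1)$, $|\partial S|=2$, every $v_i\in\partial(S^+)$ has $w_{v_i}=1/\Delta$, and $h(S)=\tfrac{\Delta-2}{2\Delta}\approx 1/2$, so the hypothesis holds. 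Your (unique) heavy bucket is $B_{j^*}$ with $j^*\approx\log_2\Delta$ and $|B_{j^*}|=\Delta-2$, your cap forces $b=O(1)$, and your drift bound is in fact correct: $\Exp[C_{k+1}-C_k\mid\mathcal F_k]\approx(\Delta-2)\cdot\Delta^{-1}=\Theta(1)$. Yet in the restricted process the only way \emph{any} node of $\partial(S^+)$ ever gets covered is that $u_1$ pulls the rumor from $s$, which happens with probability $1/\Delta$ per round; hence the expected time to cover even one node is $\Theta(\Delta)$, not the claimed $O(\log\Delta)$. The lemma is true on this graph only because of \Push{}: $s$ pushes the rumor to $u_1$ with probability $1/2$ per round. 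This is exactly why the paper's proof cannot and does not discard pushes: it stratifies $\partial S$ (not $\partial(S^+)$) by degree, uses pull-based arguments for boundary nodes of degree at most $\max\{|S|,c|\partial S|\}$ (a Chebyshev second-moment bound for degrees $O(|\partial S|)$, a dyadic pigeonhole for degrees up to $|S|$), and switches to push transmissions from $S$ for boundary nodes of larger degree, whose pull probability $1/\deg(u)$ is useless but whose push probability is at least $1/|S^+|$.

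The same example exposes a second, independent flaw: the step ``drift $\geq\delta$ while $C_k<b$, hence $\Exp[T_b]\le b/\delta$'' is not a standard additive-drift fact and is false in general. Optional stopping gives $\delta\,\Exp[T_b]\le\Exp[C_{T_b}]$, and the overshoot $C_{T_b}$ is only $O(b)$ if one-round increments are $O(b)$. Here increments arrive in correlated chunks: informing a single $u\in\partial S$ covers up to $\deg(u)$ nodes of $B_{j^*}$ at once, and when your cap $b=\lfloor|\partial S|/\alpha\rfloor$ binds (which is possible, since $|\partial(S^+)|$ can far exceed $|\partial S|/\alpha$), those chunks have size $\gg b$. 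A drift of $\delta$ realized as a jump of size $|B_{j^*}|$ occurring with probability $\delta/|B_{j^*}|$ gives $\Exp[T_b]=|B_{j^*}|/\delta$, which exceeds $b/\delta$ by an unbounded factor---by $\Theta(\Delta/\log\Delta)$ in the example above. The paper's proof avoids this by choosing $b$ to \emph{match} the coverage-jump scale of the degree class of $\partial S$ that dominates $h(S)$ (namely $b=\Theta(k)$ when that class has degrees about $k$), so that a single success already yields $b$ covered nodes and the time bound is just the reciprocal of a per-round success probability. Your bucketing of $\partial(S^+)$ by the covering probability $w_v$ gives no control over jump sizes, so the drift-to-hitting-time conversion cannot be repaired within your framework.
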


Observe that the expected number of nodes from $\partial(S^+)$ that have some informed neighbor after round $t+1$ is at least $h(S)\cdot |\partial S| = \Omega(|\partial S|)$ (from Equation~\eqref{eq:pull-bexpansion} in Section~\ref{def:bexpansion}).
To prove Lemma~\ref{lem:growthItplus} we need to bound also the variance of the number of those nodes.
Intuitively, the variance will be larger when the degrees of nodes in $\partial S$ are larger, and then larger values for $b$ are needed for the lemma to hold.
We note that the condition  $b \leq |\partial S|/\alpha$ is  to ensure that the time bound is at most $O(\log(\Delta)/\alpha)$, as this is necessary for the intended use of the lemma (in the proof of Claim~\ref{clm:psi-double}).

We give now an overview of the proof of Lemma~\ref{lem:growthItplus}.
We distinguish three cases.

The first case is when the larger contribution to the boundary expansion of $S$ is from nodes $u\in \partial S$ of degree $\deg(u) \leq c\cdot|\partial S|$, for some constant $c$.
Formally, we have  $h_{\{u\in \partial S \colon \deg (u)\leq c\cdot|\partial S| \}}(S)\geq \epsilon_h/3$ (see Definition~\ref{def:bexpansion2}).
Using the second moment method, we show that after one round of \Pull{}, we have with probability $p=\Omega(1)$ that $b = \Omega(|\partial S|)$ nodes from $\partial(S^+)$ have some informed neighbor.
It follows that $b$ nodes from $\partial(S^+)$ have some informed neighbor after an expected number of $1/p = O(1)$ rounds.


The next case is when the larger contribution to $h(S)$ comes from nodes $u\in \partial S$ of degree between $c\cdot|\partial S|$ and $|S|$.
We argue that for some $k$ from this range of degrees, the number of nodes $u\in \partial S$ with degree $k\leq \deg(u)\leq 2k$ and $\Theta(k)$ neighbors in $\partial(S^+)$ is at least $\Omega(|\partial S|/l)$, where $l = \log\frac{2\min\{|S|,\Delta\}}{\max\{c|\partial S|,\delta\}}$.
Then the probability of informing at least one such $u$ in a round of \Pull{} is $p = \Omega (|\partial S|/(kl))$.
Hence, at least $b = \Theta(k)$ nodes from $\partial(S^+)$ have some informed neighbor after an expected number of $1/p = O(bl/|\partial S|)$ rounds.

The last case is when the largest contribution to $h(S)$ is from nodes $u\in \partial S$ of degree $\deg(u) \geq d^\ast$, where $ d^\ast =\max\{|S|,c|\partial S|\}$.
We argue that $\Omega(|\partial S|)$ nodes $u\in \partial S$ have $\Omega(d^\ast)$ neighbors in $\partial(S^+)$.
As the degree of those nodes $u$ may be very large, we rely on push transmissions to inform them.
Since the degree of any node from $S$ is at most $|S^+|-1$, we argue that in one round of \Push{}, at least one of those $u$ is informed with probability $p = \Omega (|\partial S|/|S^+|)$.
Hence, at least $b = \Theta(d^\ast) = \Theta(|S^+|)$ nodes from $\partial(S^+)$ have some informed neighbor after an expected number of $1/p = O(b/|\partial S|)$ rounds.

The results we prove in the first and last cases are stronger than the statement of Lemma~\ref{lem:growthItplus}, as the time bounds shown are $O(b/|\partial S|)$ rather than $O((b/|\partial S|)\cdot \log\Delta)$.
For the second case the time bound is $O((b/|\partial S|)\cdot l)$, and $l$ can be as large as $\Theta(\log\Delta)$; there are examples for which  this bound cannot be improved to $O(b/|\partial S|)$.

The complete proof of Lemma~\ref{lem:growthItplus} is in the Appendix.

\subsection{Proof of Theorem~\ref{thm:ub-pushpull}}

\label{sec:proof-main}

We combine Lemmata~\ref{lem:growthIt} and~\ref{lem:growthItplus} to show that the expected number of rounds needed to double
$
    \Psi_t = |I_t| + |\partial I_t|/2
$
or increase $|I_t|$ above $n/2$ (whichever occurs first) is bounded by $O(\log(\Delta)/\alpha)$.
\begin{claim}
    \label{clm:psi-double}
    Let
    $
        T_t
        =
        \min\{i\colon  \Psi_{t+i} \geq 2\Psi_{t} \,\vee\, |I_{t+i}| > n/2\}.
    $
    Then
    $
        \Exp[T_t \mid I_t]
        \leq
        s\cdot \log(\Delta)/\alpha,
    $
    for some constant $s>0$.
\end{claim}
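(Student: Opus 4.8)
The plan is to bound $\Exp[T_t\mid I_t]$ by analyzing how $\Psi$ grows through repeated applications of Lemmata~\ref{lem:growthIt} and~\ref{lem:growthItplus}, using the vertex-expansion guarantee $|\partial I_t|\geq \alpha\cdot|I_t|$ (valid while $|I_t|\leq n/2$) to translate boundary size into potential. The key observation is that $\Psi_t = (|I_t| + |I_t^+|)/2$, so any event that informs nodes of $\partial I_t$ or enlarges $I_t^+$ increases $\Psi$. Since $|\partial I_t|\geq \alpha|I_t| \geq (\alpha/2)\Psi_t$ (as $|I_t|\geq \Psi_t/2$), it suffices to show that $\Psi$ increases by a total of $\Omega(\Psi_t)$ over $O(\log(\Delta)/\alpha)$ rounds in expectation; equivalently, that the ``rate'' of potential increase is $\Omega(\alpha\Psi_t/\log\Delta)$ per round on average.

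First I would set up an inductive/iterative scheme. Starting from the configuration $I_t=S$, I split on the boundary expansion $h(S)$ relative to the constant $\epsilon_h$. In the \emph{low} case ($h(S)\leq\epsilon_h$), Lemma~\ref{lem:growthIt} gives that $\epsilon\cdot|\partial S|$ nodes of $\partial S$ become informed in expected $O(\log\Delta)$ rounds; each such node moves from potential $1/2$ to potential $1$, so $\Psi$ increases by at least $\epsilon|\partial S|/2 = \Omega(\alpha\Psi_t)$ over $O(\log\Delta)$ rounds, giving the desired average rate. In the \emph{high} case ($h(S)\geq\epsilon_h$), Lemma~\ref{lem:growthItplus} gives an integer $b\leq|\partial S|/\alpha$ such that $b$ nodes of $\partial(S^+)$ acquire an informed neighbor in expected $O((b/|\partial S|)\log\Delta)$ rounds; each such node gains potential at least $1/2$, so $\Psi$ increases by $\Omega(b)$ over expected $O((b/|\partial S|)\log\Delta)$ rounds, and the per-round rate is again $\Omega(|\partial S|/\log\Delta)=\Omega(\alpha\Psi_t/\log\Delta)$. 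Crucially, the bound $b\leq|\partial S|/\alpha$ ensures the time spent in a single high-expansion step is $O(\log(\Delta)/\alpha)$, which keeps the bookkeeping clean.

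The main technical obstacle is turning these per-step expected increases into a bound on the \emph{total} expected time to double $\Psi$, because the state $I_t$ (hence which case applies, and the value of $b$) changes adaptively after each step, and the increments are random. I would handle this with a potential/stopping-time argument: define a sequence of stopping times at which a ``batch'' of new nodes (as guaranteed by the applicable lemma) has been informed or acquired an informed neighbor, and argue that each batch raises $\Psi$ by at least a fixed fraction of the current $|\partial I|$ while consuming expected time $O((\Delta\text{-factor})/\alpha)$ proportional to that increase. Summing over batches until $\Psi$ has doubled (or $|I|$ exceeds $n/2$), the total expected $\Psi$-increase is $\Theta(\Psi_t)$, and since each unit of potential increase ``costs'' expected $O(\log(\Delta)/\alpha)$ rounds amortized, the total expected time is $O(\log(\Delta)/\alpha)$. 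Care is needed because Lemma~\ref{lem:growthItplus} only guarantees $\Omega(b)$ potential per $O((b/|\partial S|)\log\Delta)$ rounds with $b$ itself variable, so I would phrase the amortization as: expected rounds per unit of $\Psi$-increase is $O(\log(\Delta)/(\alpha\Psi_t)\cdot \Psi_t)=O(\log(\Delta)/\alpha)$ per doubling, formalized via an inductive claim on $\Exp[T_t\mid I_t]$ that unrolls one batch and applies the induction hypothesis to the resulting state, using that $|\partial I|$ can only shrink by a bounded factor relative to $\Psi$ as informing proceeds.

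Finally I would verify the boundary condition and constants: the vertex-expansion bound $|\partial I_t|\geq\alpha|I_t|$ is exactly where the hypothesis $|I_t|\leq n/2$ enters, so the stopping time $T_t$ correctly terminates either at doubling or at crossing $n/2$; and I would choose the constant $s$ in the statement to absorb the hidden constants from the $O(\log\Delta)$ bounds of both lemmata, the factor $1/\epsilon$ from Lemma~\ref{lem:growthIt}, and the amortization overhead. The induction on (say) a suitable monovariant such as $\lceil\log_2(2\Psi_t)-\log_2\Psi_t\rceil$ steps, or more directly an argument that the expected number of batches to double is $O(1/\epsilon)$ with each batch costing expected $O(\log(\Delta)/\alpha)$ rounds, closes the proof.
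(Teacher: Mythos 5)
Your proposal is essentially the paper's own argument: you fold the two lemmata into a single per-batch guarantee (an increase of $\Psi$ by some $b\leq|\partial S|/\alpha$ at expected cost $O((b/|\partial S|)\log\Delta)$ rounds, which is precisely the paper's Corollary~\ref{cor:growthPsi}), and you then amortize over the adaptive sequence of batches by an induction that unrolls one batch and applies the hypothesis to the resulting state. The paper makes this induction close by proving the explicit bound $\Exp[T(S,k)]\leq c\,\frac{4k-3\Psi_0^S}{\alpha k}\cdot\log\Delta$ for $\Psi_0^S<k\leq 2\Psi_0^S$, whose base case is your ``last batch'' (cost at most $c\log(\Delta)/\alpha$, since $b_S\leq|\partial S|/\alpha$) and whose inductive step uses exactly your observation that $|\partial I|=\Omega(\alpha\Psi)$ while $|I|\leq n/2$, in the form $k\leq 3|\partial S|/\alpha$.

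Two local slips you should fix. First, your inequality $|I_t|\geq\Psi_t/2$ is false in general (a single informed node of degree $\Delta>2$ already has $\Psi_t=1+\Delta/2>2|I_t|$); the correct derivation of the fact you actually need, $|\partial I_t|=\Omega(\alpha\Psi_t)$, is $\Psi_t=|I_t|+|\partial I_t|/2\leq|\partial I_t|/\alpha+|\partial I_t|/2\leq(3/2)\,|\partial I_t|/\alpha$. Second, the alternative closing you offer at the end --- that ``the expected number of batches to double is $O(1/\epsilon)$, each costing expected $O(\log(\Delta)/\alpha)$ rounds'' --- does not work: Lemma~\ref{lem:growthItplus} only promises \emph{some} integer $b\geq 1$, which may be far smaller than $|\partial S|$, so the number of batches need not be bounded by a constant, and pairing a constant batch count with the worst-case per-batch cost is unjustified. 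Only the amortized induction (your primary formalization, and the paper's) closes the proof.
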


\begin{proof}
The proof is by an  inductive argument.
It relies on the following direct corollary of Lemmata~\ref{lem:growthIt} and~\ref{lem:growthItplus} (its proof is in the Appendix).
We write $I_i^S$ and $\Psi_i^S$ to denote respectively $I_{t+i}$ and $\Psi_{t+i}$
given that $I_t = S$.
\begin{corollary}
    \label{cor:growthPsi}
    For any non-empty set $S\subset V$, there is a  positive integer $b_S \leq |\partial S|/\alpha$ such that for
    \[
        \tau_S = \min\{i\colon  \Psi_{i}^S \geq \Psi_{0}^S + b_S\}
    \]
    we have
    $\Exp[\tau_S] \leq  c\cdot(b_S/|\partial S|)\cdot\log\Delta$, for some constant $c>0$.
\end{corollary}

Let
\[
    T(S,k)
    =
    \min\{i\colon  \Psi_{i}^S \geq k \,\vee\, |I_{i}^S| > n/2\}.
\]
Note that the quantity $T_t$ we are interested in can be expresses as  $T_t = T(I_t,2\Psi_t)$.
We will prove that if $|S|\leq n/2$ and $\Psi_0^S < k\leq 2\Psi_0^S$, then
\begin{equation}
    \label{eq:expTsk}
    \Exp[T(S,k)]
    \leq
    c\cdot
    \frac{4k - 3\Psi_0^S}{\alpha k}\cdot \log\Delta,
\end{equation}
where $c$ is the constant of Corollary~\ref{cor:growthPsi}.
Setting $k = 2\Psi_0^S$ yields
$\Exp[T(S,2\Psi_0^S)] \leq 5c \log(\Delta) / (2\alpha)$,
which is equivalent to  the desired inequality
$ \Exp[T_t \mid I_t = S] \leq s \log(\Delta)/\alpha$,
for $s = 5c/2$.

It remains to show~\eqref{eq:expTsk}.
The proof is by induction on $k - \Psi_0^S$.
We distinguish two cases.

\smallskip
\textbf{Case 1: $k \leq \Psi_0^S +  b_S$}.
This is the base case of the induction.
Since $k \leq \Psi_0^S +  b_S$ we have $T(S,k)\leq T(S,\Psi_0^S +  b_S) \leq \tau_S$. Also,  from Corollary~\ref{cor:growthPsi} we have $\Exp[\tau_s] \leq c\cdot(b_S/|\partial S|)\cdot\log\Delta \leq c\cdot(1/\alpha)\cdot\log\Delta$, as $b_S \leq |\partial S|/\alpha$.
Combining these two results yields
$$\Exp[T(S,k)] \leq  c\cdot(1/\alpha)\cdot \log \Delta.$$
Hence, to prove~\eqref{eq:expTsk} it suffices to show that $\frac{4k - 3\Psi_0^S}{ k}\geq1$: we have
$
    \frac{4k - 3\Psi_0^S}{ k}
    =
    4 - 3\Psi_0^S/{k}
    \geq
    1,
$
because $k > \Psi_0^S$.

\smallskip
\textbf{Case 2: $k > \Psi_0^S + b_S$}.
Let $S' = (I_{t+\tau_S}\mid (I_t = S))$.
We have $T(S,k) = T(S, \Psi_0^S + b_S) + T(S',k) = \tau_S + T(S',k)$.
Taking the expectation gives
\[
    \Exp[T(S,k)] = \Exp[\tau_S] + \Exp[T(S',k)].
\]
From Corollary~\ref{cor:growthPsi}, we have
\[
    \Exp[\tau_S] \leq c\cdot\frac{b_S}{|\partial S|}\cdot\log\Delta
\]
To bound $\Exp[T(S',k)] = \Exp[T(I_{t+\tau_S},k)\mid I_t = S]$, we observe that $\Psi_0^{S'} = (\Psi_{t+\tau_S} \mid(I_t = S)) \geq \Psi_0^{S} + b_S$.
Further, from the induction hypothesis, Inequality~\eqref{eq:expTsk} holds if we replace $S$ by $S'$ as  $k - \Psi_0^{S'} < k - \Psi_0^{S}$.
It follows
\[
    \Exp[T(S',k)]
    \leq
    c\cdot
    \frac{4k - 3(\Psi_0^S+b_S)}{\alpha k}\cdot \log\Delta.
\]
Combining the three inequalities above yields
\[
    \Exp[T(S,k)]
    \leq
    c\cdot
    \left(\frac{b_S}{|\partial S|}
    +\frac{4k - 3\Psi_0^S}{\alpha k}
    -\frac{3b_S}{\alpha k}\right)
    \cdot \log\Delta.
\]
Thus to prove~\eqref{eq:expTsk} it suffices to show that $\frac{b_S}{|\partial S|} - \frac{3b_S}{\alpha k}\leq 0$, or equivalently, that $k \leq 3 |\partial S|/\alpha$:
We use the assumption of~\eqref{eq:expTsk} that $k\leq 2\Psi_0^S$, and the fact that $|S|\leq |\partial S|/\alpha$ which follows from the definition of $\alpha$.
We have
\begin{align*}
    k
    \leq
    2\Psi_0^S
    =
    2(|S| + |\partial S|/2)
    &\leq
    2(|\partial S|/\alpha + |\partial S|/2)
    \\&
    =
    (2+\alpha)|\partial S|/{\alpha}
    \leq
    3|\partial S|/{\alpha}.
\end{align*}
This completes the proof of~\eqref{eq:expTsk}, and the proof of Claim~\ref{clm:psi-double}.
\end{proof}

Let $y = 2s \log(\Delta)/\alpha$.
From Claim~\ref{clm:psi-double} we have
$
    \Exp[T_t \mid I_t]
    \leq
    y/2,
$
and Markov's Inequality yields
$
    \Pr\left(T_t \leq y \mid I_t \right)
    \geq
    1/2.
$

We partition time into intervals of $y$ rounds, and count the number of intervals in which $\Psi_t$ doubles or $|I_t|$ exceeds $n/2$.
Formally, let $X_i$, for $i\geq 1$, be the 0/1 random variable that is 1 iff  $T_{(i-1) y} \leq y$.
From  inequality $\Pr\left(T_t \leq y \mid I_t \right) \geq 1/2$ above it follows that $\Pr(X_i = 1 \mid X_1\ldots X_{i-1}) \geq 1/2$.
Hence, $\sum _{j\leq i} X_j$ stochastically dominates the sum  of $i$ {independent} Bernoulli trials with expectation 1/2.
From Chernoff bounds then we obtain for $i^\ast = 2(\beta + 3) \log n$ that
\begin{align*}
    \Pr\bigg(\sum _{j \leq i^\ast} X_j  < \log n \bigg)
    \leq
    e^{-2(i^\ast/2 - \log n)^2/i^\ast}
    <
    n^{-\beta - 1}.
\end{align*}
This implies that $i^\ast y$ rounds suffice to inform $n/2 + 1$ nodes with probability at least $1-n^{-\beta - 1}$:
If $|I_{i^\ast y}| \leq n/2$ then fewer than $\log n$ among the $X_1\ldots X_{i^\ast}$ are 1, because $|I_{i^\ast y}| \leq n/2$ implies $\Psi_{i^\ast y} \geq 2^{\sum _{j \leq i^\ast} X_j}$, and $\Psi_t< n$ if $|I_t| < n$.
Hence, $\Pr(|I_{i^\ast y}| \leq n/2) \leq \Pr(\sum _{j \leq i^\ast} X_j  < \log n) < n^{-\beta - 1}$.

Once a set $V_1\subseteq V$ of size $|V_1|\geq n/2+1$ has been informed, any given node $u\notin V_1$ gets informed within $i^\ast y$ additional rounds with probability at least $1-n^{-\beta - 1}$:
From Lemma~\ref{lem:ppsymmetry}, the probability that $u$ learns the rumor from $V_1$ within $i^\ast y$ rounds, is the same as the probability that some node from $V_1$ learns a rumor originated at $u$ within $i^\ast y$ rounds.
And the latter probability is at least equal to the probability that $n/2+1$ nodes learn $u$'s rumor within $i^\ast y$ rounds, because then at least one of these nodes belongs to $V_1$, as $|V_1| > n/2$.

From the above and the union bound, all nodes get informed in at most $2i^\ast y = O(\log n\cdot \log(\Delta) / \alpha)$ rounds with probability at least $1-n^{-\beta}$.


\subsection*{Acknowledgements}

I would like to thank Keren Censor-Hillel and Thomas Sauerwald for helpful discussions on the motivation for this work.

\bibliography{vtight}

\clearpage
\appendix

\section{Omitted Proofs}

\subsection{Proof of Claim~\ref{clm:tau}}

Let $X_t$, for $t\geq1$, be the 0/1 random variable that is 1 iff either $|I_{t}^u| \geq (1+\epsilon/2)\cdot|I_{t-1}^u|$ or $t \geq \tau$.
Then for any $k$, we have
\[
    \Pr(\tau \leq k)
    \geq
    \Pr\left(\sum_{t=1}^k X_t \geq \log_{1+\epsilon/2} n\right),
\]
because: if $\tau > k$, then $|I_{k}^u| \geq (1+\epsilon/2)^{\sum_{t=1}^k X_t}$ and thus
$\sum_{t=1}^k X_t \leq \log_{1+\epsilon/2} (|I_{k}^u|) < \log_{1+\epsilon/2} n$; and taking the contrapositive gives that if $\sum_{t=1}^k X_t \geq \log_{1+\epsilon/2} n$ then $\tau \leq k$, which implies the desired inequality above.

Next we establish a lower-bound on the probability that $X_t = 1$, which holds independently of the past.
We fix the outcome of the first $t-1$ rounds, and show that
\[
    \Pr(X_t = 1) \geq \frac{\epsilon/2}{1-\epsilon/2}.
\]
Suppose that $\tau > t-1$.
(Otherwise, we have $X_t = 1$ and the inequality above  holds trivially.)
We bound $\Pr(X_t = 0)$ using Markov's Inequality:
\begin{align*}
    \Pr(X_t = 0)
    &
    =
    \Pr(|I_{t}^u| < (1+\epsilon/2)\cdot|I_{t-1}^u| \ \wedge\  t < \tau)
    \\&
    \leq
    \Pr(|I_{t}^u| < (1+\epsilon/2)\cdot|I_{t-1}^u|)
    \\&
    =
    \Pr(|I_{t}^u|-2\,|I_{t-1}^u| < (1+\epsilon/2)\cdot|I_{t-1}^u| - 2\,|I_{t-1}^u|)
    \\&
    =
    \Pr(2\,|I_{t-1}^u| - |I_{t}^u| > (1-\epsilon/2)\cdot|I_{t-1}^u|)
    \\&
    \leq
    \frac{2\,|I_{t-1}^u| - \Exp[|I_{t}^u|]}
    {(1-\epsilon/2)\cdot|I_{t-1}^u|}.
\end{align*}
Further, since $\tau > t-1$ we have from $\tau$'s definition that $\Exp[|I_{t}^u|] \geq (1+\epsilon)\cdot|I_{t-1}^u|$.
Hence,
\[
    \Pr(X_t = 0)
    \leq
    \frac{2\,|I_{t-1}^u| - (1+\epsilon)\cdot |I_{t-1}^u|}
        {(1-\epsilon/2)\cdot|I_{t-1}^u|}
    =
    \frac{1-\epsilon}{1-\epsilon/2},
\]
and thus $\Pr(X_t = 1) \geq \frac{\epsilon/2}{1-\epsilon/2}$.

It follows that for any $t$, $\Pr(X_t= 1 \mid X_1\ldots X_{t-1})\geq \frac{\epsilon/2}{1-\epsilon/2}$, hence, $\sum_{t=1}^k X_t$ stochastically dominates the sum of $k$ \emph{independent} 0/1 random variables with expectation $\frac{\epsilon/2}{1-\epsilon/2} = \Omega(1)$.
Thus, we can use standard Chernoff bounds to obtain $\Pr\big(\sum_{t=1}^k X_t \geq \log_{1+\epsilon/2} n\big) \geq 1-n^{-\beta}$ for $k = c\cdot \log n$, for a large enough constant $c$.
And since we saw at the beginning of the proof that $\Pr(\tau \leq k)\geq \Pr\big(\sum_{t=1}^k X_t \geq \log_{1+\epsilon/2} n\big)$, the claim follows.


\subsection{Proof of Inequality~(\ref{eq:Psi0}) from the Proof of Claim~\ref{clm:active-fraction}}

We prove that $\Phi_0\leq \frac{\epsilon_h}{1-\epsilon_p}\cdot |\partial S|$.

Let $B$ be the set of nodes from $\partial(S^+)$ that do not satisfy~\eqref{eq:P0}, i.e.,
\[
    B
    =
    \bigg\{u\in \partial(S^+)\colon \sum_{v\in N(u)\cap \partial S}\deg(v)^{-1} < 2 \epsilon_p\bigg\}.
\]
We have $P_0 = S^+\cup \left(\partial(S^+) - B\right)$ and $A_0 = S^+$.
Since $S^+- A_0=\emptyset$ it is $\Phi_{i,2} = 0$, and
\begin{align*}
    \Phi_0 = \Phi_{0,1}
    &=
    \sum_{u \notin P_0} \sum_{v\in N(u)\cap A_0} \deg(v)^{-1}
    \\&
    =
    \sum_{u\in B} \sum_{v\in N(u)\cap \partial S} \deg(v)^{-1}.
\end{align*}
From the assumption that $h(S)\leq \epsilon_h$ and Equation~\eqref{eq:hofS} (right after Definition~\ref{def:bexpansion}), we have
\begin{align*}
    \epsilon_h  |\partial S|
    \geq
    h(S)\cdot |\partial S|
    &
    =
    \sum_{u\in \partial(S^+)}\!
    \left(1-\!\prod_{v\in N(u)\cap\partial S} \left(1-\deg(v)^{-1}\right)\right)
    \\&
    \geq
    \sum_{u\in B}
    \left(1-\prod_{v\in N(u)\cap\partial S} \left(1-\deg(v)^{-1}\right)\right).
\end{align*}
For any $u\in B$,
\begin{align*}
    &\prod_{v\in N(u)\cap\partial S} \left(1-\deg(v)^{-1}\right)
    \leq
    e^{-\sum_{v\in N(u)\cap\partial S} \deg(v)^{-1}}
    \\&\qquad
    \leq
    1 -  \sum_{v\in N(u)\cap\partial S} \deg(v)^{-1}
    +  \frac12\left(\sum_{v\in N(u)\cap\partial S} \deg(v)^{-1}\right)^2
    \\&\qquad
    \leq
    1 -  \left(\sum_{v\in N(u)\cap\partial S} \deg(v)^{-1}\right)
    \cdot(1-\epsilon_p),
\end{align*}
where in the last inequality we use that for $u\in B$,  $\sum_{v\in N(u)\cap\partial S} \deg(v)^{-1}\leq 2\epsilon_p$.
Combining the above gives
\[
    \epsilon_h\cdot |\partial S|
    \geq
    (1-\epsilon_p)\cdot \sum_{u\in B}
        \left(
        \sum_{v\in N(u)\cap\partial S} \deg(v)^{-1}
        \right)
    =
    (1-\epsilon_p)\cdot\Phi_0.
\]
Thus $\Phi_0\leq \frac{\epsilon_h}{1-\epsilon_p}\cdot |\partial S|$.

\subsection{Proof of Lemma~\ref{lem:growthItplus}}

We partition $\partial S$ into three sets,
\begin{align*}
    T_1 &= \{u\in \partial S \colon \deg (u)\leq c\cdot|\partial S| \},
    \\
    T_2 &= \{u\in \partial S \colon c\cdot |\partial S| < \deg (u) \leq |S|\} \},
    \\
    T_3 &= \{u\in \partial S \colon \deg (u) > \max\{|S|,c\cdot|\partial S|\}\},
\end{align*}
where $c = (\epsilon_h/3)^2/8$.
From Definitions~\ref{def:bexpansion} and~\ref{def:bexpansion2}, it follows $h_{T_1}(S)+h_{T_2}(S)+h_{T_2}(S) \geq h(S)\geq \epsilon_h$, and thus, $h_{T_i}(S)\geq \epsilon_h/3$ for at least one $i\in\{1,2,3\}$.
The next result lower-bounds $|I_{t+1}^+ - I_{t}^+|$, i.e., the number of nodes from $\partial(S^+)$ that have an informed neighbor after one round.
\begin{claim}
    \label{clm:bexpansion}
    Let $\varepsilon = \epsilon_h/3$.
    \begin{enumerate}[(a)]
    \item
    If $h_{T_1}(S)\geq \varepsilon$ then
    $\Pr\left(|I_{t+1}^+ - I_{t}^+| \geq \varepsilon\cdot |\partial S|/2\right) \geq 1/2$.

    \item
    If $h_{T_2}(S)\geq \varepsilon$ then there is some $k\in\{c\cdot |\partial S|,\dots, |S|\}$ such that for
    $l = \log\frac{2\min\{|S|,\,\Delta\}}{\max\{c|\partial S|,\,\delta\}}\leq \log(2\Delta)$,
    \[
        \Pr\left(|I_{t+1}^+ - I_{t}^+| \geq \varepsilon  k/2 \right)
        \geq
        1 - e^{-\varepsilon\cdot |\partial S|/(4kl)}
        = \Omega\!\left(\frac{|\partial S|}{kl}\right).
    \]

    \item
    If $h_{T_3}(S)\geq \varepsilon$ then for $k = \max\{|S|,c\cdot|\partial S|\}$,
    \[
        \Pr\left(|I_{t+1}^+ - I_{t}^+| \geq \varepsilon  k/2 \right)
        \geq
        1 - e^{-\varepsilon c\cdot |\partial S|/(4k)}
        = \Omega\!\left(\frac{|\partial S|}k\right).
    \]
    \end{enumerate}
\end{claim}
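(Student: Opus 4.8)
The plan is to prove each of the three parts by analyzing a \emph{single} round of rumor spreading — part~(a) using one round of \Pull{}, part~(b) using one round of \Pull{} over a dyadic degree class, and part~(c) using one round of \Push{}. In each case I would lower-bound the number $|I_{t+1}^+ - I_t^+|$ of boundary nodes in $\partial(S^+)$ that acquire an informed neighbor. The common starting point is Equation~\eqref{eq:pull-bexpansion} together with Definition~\ref{def:bexpansion2}: if node $u\in T_i$ pulls from $S$ (which happens with probability at least $1/\deg(u)$), then every neighbor of $u$ in $\partial(S^+)$ gets a newly informed neighbor. Thus the expected number of such ``covered'' nodes from $\partial(S^+)$ is exactly $h_{T_i}(S)\cdot|\partial S|\geq \varepsilon|\partial S|$ under the respective hypothesis. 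The heart of each part is to convert this expectation bound into a high-probability statement; the degree regime of $T_i$ dictates which concentration tool applies.

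For part~(a), the degrees in $T_1$ are all at most $c\,|\partial S|$, so each node $u\in T_1$ covers at most $\deg(u)\le c|\partial S|$ nodes of $\partial(S^+)$. I would let $Z = |\{v\in\partial(S^+): N(v)\cap U_{T_1}\neq\emptyset\}|$ where $U_{T_1}$ is the random pull-set of Definition~\ref{def:bexpansion2}. Then $\Exp[Z]\ge \varepsilon|\partial S|$, and because each pulling node contributes a bounded number of covered vertices, I can bound the variance $\Var[Z]$ by something like $c|\partial S|\cdot\Exp[Z]$ and apply the second moment method (Paley--Zygmund or Chebyshev) to conclude $\Pr(Z\ge \varepsilon|\partial S|/2)\ge 1/2$ after choosing $c=(\varepsilon)^2/8$, as stipulated. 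The bounded-degree hypothesis is precisely what makes the variance small relative to $(\Exp Z)^2$, giving a constant success probability.

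For parts~(b) and~(c) the degrees are large, so the second-moment approach fails and I instead aim to inform just \emph{one} high-degree node that covers many boundary vertices. In part~(b) I would partition the degree range $(c|\partial S|,|S|]$ into the $l=\log\frac{2\min\{|S|,\Delta\}}{\max\{c|\partial S|,\delta\}}$ dyadic buckets $[k,2k)$; by averaging, one bucket carries at least an $\varepsilon/l$ fraction of the total boundary-expansion contribution, and within it $\Omega(|\partial S|/(kl))$ nodes each have $\Theta(k)$ neighbors in $\partial(S^+)$. The probability that \Pull{} informs at least one such node in a round is then $1 - \prod(1-1/\deg(u)) \ge 1 - e^{-\Omega(|\partial S|/(kl))}$, and informing even one yields $b=\Theta(\varepsilon k)$ covered boundary nodes. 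Part~(c) is analogous but uses \Push{}: since every node of $S$ has degree at most $|S^+|-1$, a push succeeds in reaching a given high-degree $u\in T_3$ with probability at least $1/|S^+|$, and with $k=\max\{|S|,c|\partial S|\}=\Theta(|S^+|)$ the union over the $\Omega(|\partial S|)$ eligible nodes gives success probability $1-e^{-\Omega(|\partial S|/k)}$.

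The main obstacle I anticipate is part~(b): it requires the pigeonhole/dyadic-bucketing step to correctly extract a single degree scale $k$ carrying an $\Omega(1/l)$ share of the expansion, and then arguing that the nodes in that bucket have $\Theta(k)$ (not merely $O(k)$) neighbors in $\partial(S^+)$ — one must discard nodes whose boundary degree is too small without losing more than a constant fraction of the contribution. Getting the $\log(2\Delta)$ bound on $l$ and the bookkeeping that ties the per-round success probability to the final $\Omega(|\partial S|/(kl))$ rate is where the careful counting lives; parts~(a) and~(c) are comparatively routine once the variance bound and the $|S^+|$ degree bound, respectively, are in place.
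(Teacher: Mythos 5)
Your parts (a) and (c) follow the paper's proof essentially verbatim: (a) is the paper's second-moment argument, where the covariance computation over common neighbors gives $\Var \leq \sum_{u\in T_1}\deg(u)\leq c|\partial S|^2 = \varepsilon^2|\partial S|^2/8$ and Chebyshev then yields the constant success probability; (c) is the paper's push argument, using the bound $\deg(v)\leq |S^+|-1 \leq 2k/c$ for $v\in S$, with a designated pusher $v_u\in S$ fixed for each eligible node $u$ (the paper also checks that collisions, i.e.\ several $u$ sharing the same $v_u$, only help).

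Part (b), however, has a genuine gap: your order of operations (bucket by contribution first, then discard low-boundary-degree nodes inside the chosen bucket) does not deliver the claimed parameters. The bucket maximizing the expansion contribution may contain \emph{no} node with a constant fraction of its neighbors in $\partial(S^+)$. Concretely, one degree scale $k_1$ can hold $\approx(1-\varepsilon)|\partial S|$ nodes, each with only an $(\varepsilon/\sqrt{l})$-fraction of neighbors in $\partial(S^+)$ (bucket contribution $\approx \varepsilon|\partial S|/\sqrt{l}$), while the $\approx\varepsilon|\partial S|$ nodes with fraction $\approx 1$ are spread evenly over the other $l-1$ buckets, so each of those buckets carries only $\approx\varepsilon|\partial S|/(l-1)$; for large $l$ the max-contribution bucket is the first one, and filtering inside it at any constant threshold leaves nothing. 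Informing a node of that bucket yields only $\varepsilon k_1/\sqrt{l}$ new boundary vertices rather than $\varepsilon k_1/2$, and salvaging the claim from inside the bucket requires a second dyadic decomposition (over boundary-neighbor counts), costing an extra $\log\Delta$ factor that would propagate through Lemma~\ref{lem:growthItplus} and weaken the main theorem to $O(\log n\cdot\log^2(\Delta)/\alpha)$. The paper's fix is to reverse the order: filter \emph{globally} first, setting $T_2'=\{u\in T_2\colon |N(u)\cap\partial(S^+)|/\deg(u)\geq\varepsilon/2\}$. This loses at most $(\varepsilon/2)|T_2|\leq(\varepsilon/2)|\partial S|$ of the contribution — the key point being that the discarding is measured against the \emph{global} node count $|T_2|\leq|\partial S|$, which is exactly what fails bucket-by-bucket — and since each node contributes at most $1$, this gives $|T_2'|\geq\varepsilon|\partial S|/2$. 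Only then does one pigeonhole, on \emph{node counts} rather than contributions, to find a scale $k$ with at least $|T_2'|/l\geq \varepsilon|\partial S|/(2l)$ nodes of $T_2'$, each informed with probability $\geq 1/(2k)$ and each yielding $\geq\varepsilon k/2$ boundary vertices if informed. A related slip in your sketch: you claim $\Omega(|\partial S|/(kl))$ eligible nodes; it must be $\Omega(|\partial S|/l)$ nodes, since with only $|\partial S|/(kl)$ nodes of degree $\approx k$ the failure exponent would be $|\partial S|/(k^2l)$, not the claimed $|\partial S|/(kl)$.
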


\begin{proof}
\mbox{(a)$\;\,$Before} round $t+1$ starts, we fix for each node $u\in T_1$ a neighbor $v_u$ that belongs to $S$.
Let $U_{T_1}$ be the set of nodes $u\in {T_1}$ that get informed in round $t+1$ by pulling the rumor from their neighbor $u_v$.
Further, let $f_{T_1}$ be the number of nodes from $\partial(S^+)$ that have a neighbor in $U_{T_1}$.
Clearly, $|I_{t+1}^+ - I_{t}^+| \geq f_{T_1}$, thus to prove~(a) it suffices to show that
$
    \Pr\left(f_{T_1} \geq \varepsilon\cdot |\partial S|/2\right) \geq 1/2.
$
We prove this below using Chebyshev's Inequality.

Each node $u\in T_1$ belongs to $U_{T_1}$ with probability  $1/\deg(u)$ independently of the other nodes, similarly to $U_T$ in Definition~\ref{def:bexpansion2}.
It follows that $h_{T_1}(S) = \Exp[f_{T_1}]/ |\partial S|$, thus
\[
    \Exp[f_{T_1}]
    =
    h_{T_1}(S)\cdot |\partial S|
    \geq
    \varepsilon\cdot |\partial S|.
\]
For the variance of $f_{T_1}$, we will prove below that $\Var[f_{T_1}] \leq \sum_{u\in T_1}\deg(u)$.
From this it follows
\[
    \Var[f_{T_1}]
    \leq
    |T_1|\cdot (c\cdot|\partial S|)
    \leq
    c\cdot|\partial S|^2
    =
    \varepsilon^2\cdot|\partial S|^2/8,
\]
as $c = \varepsilon^2/8$.
From Chebyshev's Inequality then we obtain
\begin{align*}
    \Pr\left(f_{T_1} \leq \varepsilon\cdot |\partial S|/2\right)
    &\leq
    \Pr\big(|f_{T_1} - \Exp[f_{T_1}]| \geq \varepsilon\cdot |\partial S|/2\big)
    \\&
    \leq
    \frac{\Var[f_{T_1}]}{(\varepsilon\cdot |\partial S|/2)^2}
    \leq
    1/2,
\end{align*}
as desired.

It remains to prove  $\Var[f_{T_1}] \leq \sum_{u\in T_1}\deg(u)$.
For each $v\in \partial(S^+)$, let $X_v$ be the 0/1 random variable that is 1 iff $v$ has a neighbor in $U_{T_1}$.
Then, $f_{T_1} = \sum_{v\in \partial(S^+)} X_v$, and thus
\[
    \Var[f_{T_1}]
    =
    \sum_{(v_1,v_2)\in (\partial(S^+))^2}\Cov[X_{v_1},X_{v_2}].
\]
We have
\begin{align*}
    &\Cov[X_{v_1},X_{v_2}]
    =
    \Exp[X_{v_1}\cdot X_{v_2}] - \Exp[X_{v_1}]\cdot \Exp[X_{v_2}]
    \\&
    \qquad=
    \Pr(X_{v_1} = X_{v_2} = 1)
    - \Pr(X_{v_1} = 1)\cdot \Pr(X_{v_2} = 1).
\end{align*}
We can express $\Pr(X_{v_1}= X_{v_2}=1)$ as the sum of the following two terms:
1)~the probability that $v_1$ and $v_2$ have a common neighbor in $U_{T_1}$; by the union bound, this is at most equal to $\sum_{u\in N(v_1)\cap N(v_2)\cap T_1} (1/\deg(u))$; and 2)~the probability that each of $v_1$ and $v_2$ has a neighbor in $U_{T_1}$ but they have no common neighbors in $U_{T_1}$; this is at most equal to $\Pr(X_{v_1} = 1)\cdot \Pr(X_{v_2} = 1)$.
It follows that
$
    \Cov[X_{v_1},X_{v_2}]
    \leq
    \sum_{u\in N(v_1)\cap N(v_2)\cap T_1} (1/\deg(u)),
$
and thus
\[
    \Var[f_{T_1}]
    \leq
    \sum_{(v_1,v_2)\in (\partial(S^+))^2}
    \sum_{u\in N(v_1)\cap N(v_2)\cap T_1} (1/\deg(u)).
\]
For each node $u\in T_1$, the term $1/\deg(u)$ appears in the double sum above exactly $|N(u)\cap \partial(S^+)|^2$ times:
once for each pair $(v_1,v_2)\in \left(N(u)\cap \partial(S^+)\right)^2$.
Thus,
\begin{align*}
    \Var[f_{T_1}]
    &\leq
    \sum_{u\in T_1} (|N(u)\cap \partial(S^+)|^2/\deg(u))
    \\&
    \leq
    \sum_{u\in T_1} (\deg(u)^2/\deg(u))
    =
    \sum_{u\in T_1} \deg(u).
\end{align*}

\medskip
\noindent
\mbox{(b)$\;\,$Let} $T_2' = \{u\in T_2\colon |N(u)\cap\partial(S^+)|/\deg(u) \geq \varepsilon/2\}$ be the set of nodes $u\in T_2$ with the property that an $(\varepsilon/2)$-fraction of $u$'s neighbors belongs to $\partial(S^+)$.
We have $|T_2'|\geq \varepsilon\cdot |\partial S|/2$, for otherwise, the assumption $h_{T_2}(S) \geq \varepsilon$ is contradicted:
\begin{align*}
    h_{T_2}(S)\cdot |\partial S|
    &\leq
    \sum_{u\in T_2}\frac{|N(u)\cap\partial(S^+)|}{\deg(u)}
    \\&
    \leq
    |T_2'|\cdot 1
    +
    (|T_2| - |T_2'|)\cdot (\varepsilon/2)
    \\&
    <
    \varepsilon\cdot |\partial S|/2
    +
    |\partial S|\cdot (\varepsilon/2)
    \\&
    =
    \varepsilon\cdot |\partial S|.
\end{align*}
Since nodes in $T_2'$ have degrees in the range between $\max\{c|\partial S|,\,\delta\}$ and $\min\{|S|,\,\Delta\}$, it follows that for some $k$ in this range, at least $|T_2'|/l$ nodes $u\in T_2'$ have degree $k\leq \deg(u)\leq 2k$.
If at least one of these nodes gets informed in round $t+1$ then  $|I_{t+1}^+ - I_{t}^+| \geq \varepsilon  k/2$, and the probability that this happens is at least
\[
    1 - (1 - 1/(2k))^{|T_2'|/l}
    \geq
    1 - e^{-|T_2'|/(2lk)}
    \geq
    1 - e^{-\varepsilon\cdot |\partial S|/(4lk)}.
\]

\medskip
\noindent
\mbox{(c)$\;\,$Similarly} to~(b), we let
\[
    T_3' = \{u\in T_3\colon |N(u)\cap\partial(S^+)|/\deg(u) \geq \varepsilon/2\},
\]
and we have $|T_3'|\geq \varepsilon\cdot |\partial S|/2$.
If a node $u\in T_3'$ gets informed in round $t+1$, then we have $|I_{t+1}^+ - I_{t}^+| \geq \varepsilon k/2$, where $k = \max\{|S|,c\cdot|\partial S|\}$.
Thus, to prove the claim it suffices to show that the probability of informing at least one node $u\in T_3'$ in round $t+1$ is lower-bounded by $1 - e^{-\varepsilon c\cdot |\partial S|/(4k)}$.
Unlike the proofs for (a) and (b) which rely on pull transmissions of the rumor, we will use push transmissions here.
For each $u\in T_3'$, we fix a neighbor $v_u\in S$ of $u$, before round $t+1$ starts.
The probability that the rumor is pushed from $v_u$ to $u$ is $1/\deg(v_u)\geq 1/|S^+|$.
Further, if $i>1$ nodes $u\in T_3'$ have the same $v_u$, the probability that none of them receives the rumor via a push from $v_u$ is
$1-i/\deg(v_u) \leq (1-1/\deg(v_u))^i$, i.e., it is smaller than if the nodes had distinct neighbors $v_u$.
It follows that the probability at least one node $u\in T_3'$ receives the rumor via a push from its neighbor $v_u$ is lower-bounded by
\[
    1 - (1-1/|S^+|)^{|T_3'|}
    \geq
    1-e^{-|T_3'|/|S^+|}
    \geq
    1-e^{-(\varepsilon\cdot |\partial S|/2)/(2k/c)},
\]
where for the last inequality we used that
\[
    |S^+| = |S| + |\partial S| \leq k + k/c \leq 2k/c,
\]
as $c\leq 1$.
This completes the proof of Claim~\ref{clm:bexpansion}.
\end{proof}

From Claim~\ref{clm:bexpansion}, Lemma~\ref{lem:growthItplus} follows easily:
We will assume that before each round, all informed nodes $u\notin S$ become uninformed.
This can only decrease the number of nodes $v\in \partial(S^+)$ that have some informed neighbor after a given round $i>t$.
Further, this number becomes independent of the outcome of the previous rounds $t+1,\dots,i-1$.
From Claim~\ref{clm:bexpansion}(a) then we obtain that if $h_{T_1}(S) \geq \epsilon_h/3$, the probability that at least $\varepsilon\cdot |\partial S|/2$ nodes $v\in \partial(S^+)$ have an informed neighbor after a given round $i>t$ is at least 1/2.
It follows that the expected number of rounds until $b=\varepsilon\cdot |\partial S|/2$ nodes $v\in \partial(S^+)$ have an informed neighbor is $2 = O(b/|\partial S|)$.
Similarly, Claim~\ref{clm:bexpansion}(b) yields that if $h_{T_2}(S)\geq \epsilon_h/3$, then for some $k\in\{c\cdot |\partial S|,\dots, |S|\}$, the expected number of rounds until $b=\varepsilon k/2\leq |\partial S|/\alpha$ nodes $v\in \partial(S^+)$ have an informed neighbor is $1/\Omega(|\partial S|/kl) = O(bl/|\partial S|)$.
Finally, Claim~\ref{clm:bexpansion}(c) gives that if $h_{T_3}(S)\geq \epsilon_h/3$, then for $k = \max\{|S|,c\cdot|\partial S|\}$, the expected number of rounds until $b = \varepsilon k/2 \leq |\partial S|/\alpha$ nodes $v\in \partial(S^+)$ have an informed neighbor is $1/\Omega(|\partial S|/k) = O(b/|\partial S|)$.
This completes the proof of Lemma~\ref{lem:growthItplus}.

\subsection{Proof of Corollary~\ref{cor:growthPsi}}

If $h(S)\leq\epsilon_h < 1$ it follows from Lemma~\ref{lem:growthIt} that $b = \epsilon\cdot |\partial S|$ nodes from $\partial S$ get informed in $O((b/|\partial S|)\log\Delta)$ expected time.
If $h(S)\geq\epsilon_h > 0$ it follows from Lemma~\ref{lem:growthItplus} that $b \leq |\partial S|/\alpha$ nodes from $\partial(S^+)$ have an informed neighbor in $O((b/|\partial S|)\log\Delta)$ expected time.
In both cases $\Psi_t$ increases by at least $b_S = b/2$ in $O((b_S/|\partial S|)\log\Delta)$ expected time.

\section{Rumors Spread Fast in Graphs of Diameter 2}
\label{sec:dimeter-two}

In this section we give another example of a new result that can be proved using the machinery developed for the proof of our main result.
We show that \PPull{} completes in a logarithmic number of rounds in any graph of diameter (at most) 2.
This result can be viewed as an extension to the classic result that rumor spreading takes logarithmic time in graphs of diameter 1, i.e., in complete graphs.
Note that unlike complete graphs, some graphs of diameter 2 have very bad expansion, e.g., two cliques of the same size with one common vertex.
Further, the result does not extend to graphs of diameter 3, e.g., rumor spreading takes linear time in the dumbbell graph, which consist of two cliques of the same size and a single edge between one node from each clique.
%
\begin{theorem}
    \label{thm:degree-two}
    For any graph $G=(V,E)$ of diameter 2, \PPull{} informs all nodes of $G$ in $O(\log n)$ rounds with probability $1-O(n^{-\beta})$, for any constant $\beta>0$.
\end{theorem}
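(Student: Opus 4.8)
The plan is to fix an arbitrary target node $u$, prove that \PPull{} informs $u$ within $O(\log n)$ rounds with probability $1-O(n^{-\beta-1})$, and then take a union bound over all $u\in V$. The one structural fact I would extract from the diameter-$2$ hypothesis is that for every node $x$ the closed neighborhood $N[x]:=\{x\}\cup N(x)$ is a dominating set: any node $w$ is at distance at most $2$ from $x$, so $w\in N[x]$ or $w$ has a neighbor in $N(x)\subseteq N[x]$. This is exactly what fails at diameter $3$, which matches the stated dumbbell counterexample, where the single bridge edge is covered by no such neighborhood. Crucially, having a dominating set is precisely the hypothesis needed to reuse the engine behind Lemma~\ref{lem:push-to-S}.

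\emph{Phase 1 (reaching $N[u]$).} Since the rumor originates at the source $s$ and $N[u]$ is a dominating set, I would apply Lemma~\ref{lem:push-to-S} with dominating set $S=N[u]$ and originating node $s$ to conclude that \Push{}, and hence \PPull{} (which informs a superset of the \Push{}-informed nodes under the natural coupling, so is no slower), reaches some node of $N[u]$ within $O(\log n)$ rounds w.h.p.; if $s\in N[u]$ this phase is vacuous. If the node reached is $u$ we are done, so the interesting case is when it is a neighbor of $u$. What I really want to retain from the proof of Lemma~\ref{lem:push-to-S} is its core output: by Claims~\ref{clm:tau} and~\ref{clm:degrees}, within $O(\log n)$ rounds the informed set $I$ enters a ``heavy'' state with $\sum_{v\in I}\deg(v)^{-1}=\Omega(1)$, a quantity that only grows thereafter.

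\emph{Phase 2 (the last hop into $u$).} Here I would combine heaviness with diameter $2$. Because $N[u]$ is dominating, every informed node has a neighbor in $N[u]$, so in any round the probability that some informed node pushes into $N[u]$ is at least $1-\prod_{v\in I}(1-\deg(v)^{-1})\ge 1-e^{-\sum_{v\in I}\deg(v)^{-1}}=\Omega(1)$, and each such push lands on $u$ or on a neighbor of $u$. I would convert this steady $\Omega(1)$ influx at $N[u]$ into $u$ itself by tracking the harmonic weight $\sum_{v\in N(u)\cap I}\deg(v)^{-1}$ of $u$'s informed neighbors together with the fraction $|N(u)\cap I|/\deg(u)$ of its neighbors that are informed: once either is $\Omega(1)$, a single further round informs $u$ (by a push from an informed neighbor, respectively by $u$'s own pull) with constant probability, so $O(\log n)$ rounds finish the job w.h.p. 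The symmetry Lemma~\ref{lem:ppsymmetry} is the tool that lets me run this last hop from whichever endpoint has the more favorable degrees.

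The hard part will be Phase 2 for a high-degree target $u$: a single informed neighbor does not suffice, push mass that merely enters $N[u]$ may keep landing on ever-new high-degree neighbors without hitting $u$, and the pull route needs a constant fraction of $\deg(u)$ neighbors informed, which naively costs $\Theta(\deg(u))\gg\log n$ rounds. Overcoming this is where diameter $2$ must be used a second time: every informed node is within distance $2$ of $u$, so the distance-$2$ informed nodes share common neighbors with $u$, and informing those common neighbors—into which the heavy set pushes with $\Omega(1)$ probability per round—is what builds the required harmonic weight inside $N(u)$. Making this accumulation quantitative, and balancing it against the pull route for low-degree $u$ using the degree information encoded in the heaviness bound, is the technical crux I would need to carry out in full.
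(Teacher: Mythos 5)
Your Phase 1 is sound: in a diameter-2 graph $N[x]$ is indeed a dominating set, and applying Lemma~\ref{lem:push-to-S} with $S=N[u]$ brings the rumor to some node of $N[u]$ within $O(\log n)$ rounds w.h.p. The genuine gap is Phase 2, which you correctly flag as the crux, but the mechanism you sketch for it is quantitatively insufficient, not merely unfinished. Take the paper's own example of a diameter-2 graph with bad expansion: two cliques $A,B$ of size about $n/2$ sharing a single vertex $c$, with source $s\in B$ and target $u\in A$, $u\neq c$. Every node of $B$ has exactly one neighbor in $N[u]=A$, namely $c$, and $c$ is also the \emph{only} common neighbor of $u$ and any informed node at distance 2. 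So the ``steady $\Omega(1)$ influx at $N[u]$'' that heaviness certifies lands entirely on $c$, and the common-neighbor device adds nothing beyond $c$; your tools then certify at most $O(1)$ new informed neighbors of $u$ per round, hence after $O(\log n)$ rounds a harmonic weight $\sum_{v\in N(u)\cap I}\deg(v)^{-1} = O(\log n/n) = o(1)$ and an informed fraction $|N(u)\cap I|/\deg(u) = O(\log n)/(n/2) = o(1)$, so none of your three finishing routes (push from an informed neighbor, $u$'s own pull, more common-neighbor hits) succeeds with the required probability. What actually informs $u$ in this graph is the \emph{doubling} of the informed set inside the clique $A$ after $c$ is hit, i.e.\ genuinely multi-hop spreading near the target; and while in this particular example one could rescue the accounting by noting that pushes from informed $A$-nodes stay inside $N[u]$, that is an accident of $A$ being a clique --- in general the nodes of $N(u)$ have almost all their neighbors outside $N[u]$, the spread toward $u$ must alternate between $N(u)$ and distance-2 nodes, and no accumulation argument confined to $N[u]$ can capture it.

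The paper's proof never attempts a last hop into the target; that is the idea you are missing, and it is where your passing remark about Lemma~\ref{lem:ppsymmetry} needs to be upgraded from ``a tool for the last hop'' into the central device. The paper runs the heavy-set machinery of Claims~\ref{clm:tau} and~\ref{clm:degrees} from \emph{both} endpoints: a forward set $V_u$ grown by \Push{} from the source with $\sum_{u'\in V_u}\deg(u')^{-1}=\Omega(1)$, and a \emph{backward} set $V_v$ around the target, obtained by running the same push analysis started at the target and reinterpreting it via the symmetry of Lemma~\ref{lem:symmetry}, with $\sum_{v'\in V_v}\deg(v')^{-1}=\Omega(1)$ and the property that once \emph{any} node of $V_v$ is informed, the target gets informed within $O(\log n)$ further rounds. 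Diameter 2 is then used exactly once, to connect the two heavy sets: each $v'\in V_v$ either has a neighbor in $V_u$, or shares a common neighbor with every $u'\in V_u$; the $\Omega(1)$ harmonic weight of $V_u$ informs those common neighbors w.h.p., after which every $v'\in V_v$ has an informed neighbor, and the $\Omega(1)$ harmonic weight of $V_v$ makes some $v'$ pull the rumor. In the two-clique example $V_v$ consists of $\Omega(n)$ nodes of $A$, so it is easy to hit even though $u$ itself and any useful weight inside $N(u)$ are not: the backward heavy set is precisely the analyzable surrogate for the intra-clique spreading that your framework cannot handle. To repair your outline, replace ``reach $N[u]$, then hop to $u$'' by ``reach $V_v$''; the dominating-set observation, while correct, then becomes unnecessary.
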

\begin{proof}[Sketch]
We fix an arbitrary pair of nodes $u,v\in V$, and show that a rumor originated at $u$ reaches $v$ in $O(\log n)$ rounds w.h.p.
We divide the rumor spreading process into three phases, each of length $c\log n$ for a sufficiently large constant $c$.

In the first phase we consider only push operations.
We define $\tau$ as in the proof of Lemma~\ref{lem:push-to-S}, and from Claims~\ref{clm:tau} and~\ref{clm:degrees} we obtain that w.h.p.\ in $O(\log n)$ rounds the rumor has spread to all nodes of some set $V_u\subseteq V$ for which $\sum_{u'\in V_u} \deg(u')^{-1} =\Omega(1)$.

In the last (third) phase we consider just pull operations.
From the symmetry between push and pull, and the argument used for the first phase, it follows that w.h.p.\ there is a set $V_v\subseteq V$ with  $\sum_{v'\in V_v} \deg(v')^{-1} =\Omega(1)$, such that if some node from $V_v$ knows the rumor then $v$ learns the rumor as well within the next $O(\log n)$ rounds.

Suppose now that sets $V_u$ and $V_v$ as above exist (this is true w.h.p.), and fix all random choices in the first and third phases (and thus sets $V_u$ and $V_v$).
We will show that in the second phase the rumor spreads from $V_u$ to some node in $V_v$ w.h.p.
This is trivially true if $V_u\cap V_v \neq \emptyset$, hence assume $V_u\cap V_v = \emptyset$.
From the assumption that the graph has diameter 2, it follows that for every node $v'\in V_{v}$, at least one of the next two conditions holds:
\begin{enumerate}
\item Node $v'$ has a neighbor in $V_u$; or

\item For every $u'\in V_{u}$ there is a node $w_{u'v'}\notin V_{u}\cup V_{v}$ that is a neighbor of both $u'$ and $v'$.
\end{enumerate}
If Condition~1 holds for all $v'\in V_{v}$, and thus each $v'\in V_{v}$ has some informed neighbor at the beginning of the second phase, then from inequality $\sum_{v'\in V_v} \deg(v')^{-1} =\Omega(1)$ it follows that  some $v'\in V_{v}$  will pull the rumor from an informed neighbor w.h.p.\ within $O(\log n)$ rounds.
Suppose now that Condition~1 does not hold for some $v'\in V_{v}$, thus Condition~2 must hold.
Then from inequality $\sum_{u'\in V_u} \deg(u')^{-1} =\Omega(1)$ it follows that w.h.p.\ at least one of the nodes $u'\in V_{u}$ will push the rumor to its neighbor $w_{u'v'}$ within $O(\log n)$ rounds.
And thus $v'$ will \emph{have} some informed neighbor after that.
Hence, after $O(\log n)$ rounds in the second phase, w.h.p.\ all $v'\in V_{v}$ have some informed neighbor.
It follows then that some $v'\in V_{v}$ will pull the rumor in an additional $O(\log n)$ rounds w.h.p., as we argued earlier.
\end{proof}

\section{Stronger Bounds for Regular Graphs}
\label{sec:regular}

In this section we focus on $\Delta$-regular graphs $G=(V,E)$.\footnote{The results we present extend also to graphs for which the ratio of maximum over minimum degree is bounded by a constant.}
For these graphs, a variant of our analysis yields an upper bound in terms of a natural new expansion measure we define.
This bound is stronger than the $O(\log(n)/\phi)$ bound with conductance.

Inspired from the proof of Theorem~\ref{thm:ub-pushpull}, and a connection between the boundary expansion $h(S)$ of a set $S$ and the conductance $\phi(\partial S)$ of its boundary (explained later), we introduce the following expansion measure.
For a non-empty set $S\subset V$ let
\[
    \xi(S) = \alpha(S)\cdot \phi(\partial S),
\]
and define
\[
    \xi(G) = \min_{S\subset V,\ 0<|S|\leq n/2} \xi(S).
\]
Then the following  bound holds.
\begin{theorem}
    \label{thm:ub-pushpull-regular}
    For any $\Delta$-regular graph $G=(V,E)$ and any constant $\beta>0$, with probability $1-O(n^{-\beta})$ \PPull{} informs all nodes of $G$ in $O(\log(n)/\xi)$ rounds.
\end{theorem}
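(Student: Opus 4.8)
The plan is to keep the potential $\Psi_t=|I_t|+|\partial I_t|/2$ and the doubling machinery of Claim~\ref{clm:psi-double} essentially verbatim, and to replace only the two growth lemmas (Lemmata~\ref{lem:growthIt} and~\ref{lem:growthItplus}) and their summary Corollary~\ref{cor:growthPsi} by $\Delta$-regular analogues in which the factor $\log\Delta$ is sharpened to $1/\phi(\partial S)$. Concretely, I would prove a regular version of Corollary~\ref{cor:growthPsi}: for every non-empty $S\subset V$ there is a positive integer $b_S$ (with $b_S\le|\partial S|/\alpha$, and arranged so that $b_S\le|S|$) such that $\Psi_t$ increases by $b_S$ in an expected $O\big((b_S/|\partial S|)\cdot\phi(\partial S)^{-1}\big)$ rounds. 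The reason for this precise shape is the identity $\tfrac{b_S}{|\partial S|}\cdot\tfrac{1}{\phi(\partial S)}\le\tfrac{|S|}{|\partial S|\,\phi(\partial S)}=\tfrac{1}{\alpha(S)\phi(\partial S)}=\tfrac{1}{\xi(S)}\le\tfrac1\xi$, so that each invocation of the corollary costs only $O(1/\xi)$ rounds. Feeding this per-step cost into the induction of Claim~\ref{clm:psi-double} (with $\log\Delta/\alpha$ replaced by $1/\xi$) would give a doubling time of $O(1/\xi)$ and hence $O(\log n/\xi)$ overall; the high-probability amplification at the end—Chernoff over $O(\log n)$ length-$y$ intervals to reach $n/2+1$ informed nodes, then the symmetry Lemma~\ref{lem:ppsymmetry} for the second half—would be copied unchanged.

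The engine behind the regular corollary is a one-round drift estimate that makes precise the promised link between the boundary expansion $h(S)$ and the conductance $\phi(\partial S)$. Since all degrees equal $\Delta$, the edges leaving $\partial S$ split as $\phi(\partial S)\,\Delta|\partial S|=|E(\partial S,S)|+|E(\partial S,\partial(S^+))|$. Each $u\in\partial S$ pulls the rumor from $S$ in a round with probability $1-(1-1/\Delta)^{|N(u)\cap S|}\ge(1-1/e)\,|N(u)\cap S|/\Delta$, so the expected number of newly informed boundary nodes is $\ge(1-1/e)\,|E(\partial S,S)|/\Delta$; applying the same inequality inside Definition~\ref{def:bexpansion} gives $h(S)\,|\partial S|\ge(1-1/e)\,|E(\partial S,\partial(S^+))|/\Delta$. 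Summing the two contributions with the weights defining $\Psi$ yields
\[
    \Exp[\Psi_{t+1}-\Psi_t\mid I_t=S]
    \ \ge\
    \tfrac{1-1/e}{2}\,\phi(\partial S)\,|\partial S|
    \ =\
    \tfrac{1-1/e}{2}\,\xi(S)\,|S| .
\]
No case split is needed for the \emph{expected} drift: the edges of $\partial S$ that point back into $S$ drive the growth of $|I_t|$, those pointing to $\partial(S^+)$ drive the growth of $|\partial I_t|$, and jointly they contribute $\Omega(\phi(\partial S)|\partial S|)$ to $\Psi$ per round.

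To convert this drift into the stopping-time bound I would use a feature special to regular graphs: the pull choices of distinct $u\in\partial S$ are independent, so the number of $\partial S$-nodes informed in a round is a sum of independent indicators with mean $\Omega(\phi(\partial S)|\partial S|)$, and the second-moment computation underlying Claim~\ref{clm:bexpansion}(a) (with its variance bound $\sum_u\deg(u)=\Delta|\partial S|$) carries over with the now $\phi$-dependent parameter $\varepsilon=\Theta(\phi(\partial S))$ replacing the constant $\epsilon_h/3$. This gives, in one round with constant probability, an increase of $\Omega(\phi(\partial S)|\partial S|)$ whenever $|\partial S|$ is large enough relative to $\Delta/\phi(\partial S)^2$. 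Where the second moment is too weak—i.e.\ when $\alpha(S)$ is large and $\partial S$ is an internally dense but poorly connected blob—I would fall back to the degree-window and push-based arguments of Claim~\ref{clm:bexpansion}(b),(c), which for regular graphs collapse into a single regime and cost only $O(1/\phi(\partial S))\le O(\Delta)$ rounds.

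The hard part, I expect, is exactly this last regime together with the bookkeeping that closes the induction. The displayed drift uses only edges of $\partial S$ that \emph{leave} $\partial S$; it ignores edges internal to $\partial S$ and therefore badly undercounts progress when $\partial S$ is a dense blob hanging off $S$ by few edges (large $\alpha(S)$, tiny $\phi(\partial S)$). The key point is that this progress is recovered across several rounds inside the induction of Claim~\ref{clm:psi-double}: once a few blob nodes are informed they join the informed set $S'=I_{t+\tau_S}$, and the dense internal edges of the blob then appear as edges from the new boundary $\partial S'$ back into $S'$, making the drift at $S'$ large. Turning this into a rigorous bound—checking that the sequence of sets visited by the induction has amortized per-step cost $O(1/\xi)$ even though individual $\phi(\partial S_i)$ may be as small as $1/\Delta$, and in particular re-deriving the constraint $k\le 3|\partial S|/\alpha$ of Claim~\ref{clm:psi-double} under the substitution $\log\Delta\mapsto 1/\phi(\partial S)$, where the global bound $|S|\le|\partial S|/\alpha$ must be reconciled with the set-specific conductance to land at the global $\xi$—is the main obstacle and is where I would concentrate the effort.
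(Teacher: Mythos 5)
Your one-round drift estimate is correct and is in fact the same observation the paper makes (for regular graphs, $h(S)$ is within a constant factor of $E(\partial S,\partial(S^+))/(\Delta|\partial S|)$, so pulls into $\partial S$ plus growth of $\partial(S^+)$ give expected $\Psi$-gain $\Omega(\phi(\partial S)|\partial S|)$), and the Chebyshev argument in the large-$|\partial S|$ regime is sound. The gap is the step you treat as bookkeeping: the claim that feeding the per-step cost $O\big((b_S/|\partial S|)\cdot\phi(\partial S)^{-1}\big)$, with $b_S\le|S|$, into the induction of Claim~\ref{clm:psi-double} ``would give a doubling time of $O(1/\xi)$''. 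That conclusion is false, not merely unproven. The constraint $b_S\le|S|$ is exactly what makes one invocation cost $O(1/\xi)$, but it also caps the progress per invocation at $|S|$, while doubling $\Psi$ requires a gain of $\Psi_0=|S|+|\partial S|/2$; when $\alpha(S)\gg1$ this forces many invocations, and the telescoping step of Claim~\ref{clm:psi-double} now needs, at every intermediate set $S'$, an inequality of the form $k\le O(1)\cdot|\partial S'|\phi(\partial S')/\xi$, whereas the only bound available from $\xi\le\xi(S')$ is $|\partial S'|\phi(\partial S')/\xi\ge|S'|$ --- and $k\approx2\Psi_0$ can exceed $|S'|$ by a factor $\Theta(\Delta)$. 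This is a real obstruction, not an artifact: take a $\Delta$-regular graph in which a vertex $v$ has a near-clique closed neighborhood $B$ (a clique on $\Delta+1$ vertices minus a perfect matching on $B-\{v\}$, one outward edge per matched vertex, the rest an expander). Then $\xi(\{v\})=\alpha(\{v\})\cdot\phi(N(v))=\Delta\cdot\Theta(1/\Delta)=\Theta(1)$, and one can arrange $\xi(G)=\Theta(1)$; yet starting the rumor at $v$ we have $\Psi_0\approx\Delta/2$, and $\Psi$ first doubles only after $\Omega(\Delta)$ blob nodes are informed or some node outside $B$ is infected, both of which take $\Theta(\log\Delta)$ expected rounds. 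So no per-doubling bound of $O(1/\xi)$ exists; the theorem survives only through an amortization \emph{across} doublings, in which the cheap ones subsidize the expensive ones.

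This is precisely where the paper's route differs, and why. The paper proves the stronger $O(\log n/\rho)$ bound with $\rho(S)=\alpha(S)\big(\phi(\partial S)+1/\log\Delta\big)$: the additive $1/\log\Delta$ caps the per-step rate it must achieve, and in your ``dense blob'' regime $\phi(\partial S)\le1/\log\Delta$ it invokes Lemma~\ref{lem:growthIt}, whose participating-nodes machinery exploits the edges \emph{internal} to $\partial S$ --- which $\phi(\partial S)$ cannot see --- to inform a constant fraction of $\partial S$ in $O(\log\Delta)$ expected rounds. Your proposal has no substitute for this: the drift and second-moment arguments use only edges leaving $\partial S$, and the fallback you propose does not deliver what you claim, since rerunning Claim~\ref{clm:bexpansion}(b),(c) with $\varepsilon=\Theta(\phi(\partial S))$ costs $O\big(b/(\phi(\partial S)^2|\partial S|)\big)$ expected rounds per $b$ units of progress (both the success probability and the progress carry a factor $\varepsilon$), a factor $1/\phi(\partial S)$ worse than needed. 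Note that even the paper, with the $\rho$-cap and Lemma~\ref{lem:growthIt} in hand, cannot close the argument doubling-by-doubling: it isolates a window ($|\partial S|$ between $(1-\epsilon)\Delta$ and $\Delta\log(1/\phi(\partial S))$) where it loses a $\log\log\Delta$ factor and repairs it by a separate global argument deferred to the journal version. So the item you flag as ``where I would concentrate the effort'' is not a residual technicality --- it is the theorem itself, and the per-doubling accounting against $1/\xi$ that your plan is built on cannot be completed as stated.
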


This theorem follows from a stronger result to be described later.

It is easy to see that $\xi(S) \geq\phi(S)$:
For any set $S\subseteq V$ with $0<|S|\leq n/2$ we have
\begin{align*}
    \alpha(S)\cdot \phi(\partial S)
    &=
    \frac{|\partial S|}{|S|}\cdot \frac{E(\partial S, V - \partial S)}
    {|\partial S| \cdot \Delta}
    \\&
    =
    \frac{1}{|S|}\cdot \frac{E(\partial S, V - \partial S)}
    {\Delta}
    \\&
    \geq
    \frac{1}{|S|}\cdot \frac{E(\partial S, S)}
    {\Delta}
    =
    \phi(S).
\end{align*}
It follows that the $O(\log(n)/\xi)$ bound of Theorem~\ref{thm:ub-pushpull-regular} implies the $O(\log(n)/\phi)$ bound.

There are graphs for which the $O(\log(n)/\xi)$ bound is strictly stronger than both the $O(\log(n)/\phi)$ and the $O(\log n\cdot \log(\Delta)/\alpha)$ bounds.
Here is an example:
Graph $G$ consists of $\ell$ dense components, which are loosely connected with the other components.
We have $\ell = n/(c\Delta)$ components, and each component is a random $\Delta$-regular graph of order $c\Delta$, where $c\geq 2$ is a constant and $\Delta = \omega(\log n)$.
Inter-component edges are generated as follows,
from each node $u$ we draw an edge to a uniformly random other node.
For this graph we have $\phi = \Theta(1/\Delta)$, $\alpha = \Theta(1)$, and $\xi = \Theta(1)$.
Hence, Theorem~\ref{thm:ub-pushpull-regular} yields the correct bound of $O(\log n)$, whereas the bounds with  vertex expansion  and conductance are respectively $O(\log n \cdot \log \Delta)$ and $O(\Delta \log n)$.

We now sketch the proof of a stronger bound, which implies Theorem~\ref{thm:ub-pushpull-regular}.
For each non-empty set $S\subset V$ let
\begin{align*}
    \rho(S)
    &=
    \xi(S) + \alpha(S)/\log\Delta
    \\&
    =
    \alpha(S)\cdot(\phi(\partial S) + 1/\log\Delta),
\end{align*}
and define $\rho(G) = \min_{S\subset V,\ 0<|S|\leq n/2} \rho(S)$.
We show that an $O(\log(n)/\rho)$ bound holds w.h.p.\ for any regular graph $G$.

First we observe that
\[
    \frac{E(\partial S, \partial(S^+))}{2\Delta\cdot|\partial S|}
    \leq
    h(S)
    \leq
    \frac{E(\partial S, \partial(S^+))}{\Delta\cdot|\partial S|},
\]
where:
the right inequality holds because $E(\partial S, \partial(S^+))/\Delta$ is the expected number of edges between $\partial(S^+)$ and the set $U$ in the definition of $h(S)$ (Definition~\ref{def:bexpansion}), and this is larger than the expected number $h(S)\cdot |\partial S|$  of nodes from $\partial(S^+)$ with neighbors in $U$;
and the left inequality follows from the fact that the probability a given node $v\in \partial(S^+)$ has a neighbor in $U$ is
\[
    1-(1-1/\Delta)^{|N(v)\cap\partial S|}
    \geq
    1 - e^{-|N(v)\cap\partial S|/\Delta}
    \geq
    \frac{|N(v)\cap\partial S|}{2\Delta},
\]
where for the last relation we used the known inequality $e^{-x}\leq 1-x+x^2/2$.

The above yields the following relation between $\phi(\partial S)$ and $h(S)$,
\begin{align*}
    \phi(\partial S)
    &=
    \frac{E(\partial S, V-\partial S)}{\Delta\cdot|\partial S|}
    \\&
    =
    \frac{E(\partial S, S)}{\Delta\cdot|\partial S|}
    +
    \frac{E(\partial S, \partial(S^+))}{\Delta\cdot|\partial S|}
    \\&
    =
    \frac{E(\partial S, S)}{\Delta\cdot|\partial S|}
    +
    s\cdot h(S).
\end{align*}
where $1\leq s\leq 2$.

The proof of the $O(\log(n)/\rho)$ bound follows closely the proof of Theorem~\ref{thm:ub-pushpull}.
Given that $I_t = S$, with $|S|\leq n/2$, we now distinguishes three cases (instead of two):
\begin{enumerate}
\item
Case $\phi(\partial S) \leq 1/\log\Delta$: then $h(S) \leq 1/\log\Delta$ and we apply Lemma~\ref{lem:growthIt} to obtain that $\Omega(\partial S)$ nodes from $\partial S$ get informed in $O(\log\Delta)$ expected rounds.

\item
Case $\phi(\partial S) > 1/\log\Delta$ and $h(S) < \phi(\partial S)/4$: then $\frac{E(\partial S, S)}{\Delta\cdot|\partial S|} \geq \phi(\partial S)/2$ and a simple argument yields that $\Omega(\phi(\partial S)\cdot|\partial S|)$ nodes from $\partial S$ get informed via \Pull{} in $O(1)$ expected rounds.

\item
Case $\phi(\partial S) > 1/\log\Delta$ and $h(S) \geq \phi(\partial S)/4$:
in this case we would like to have a variant of Lemma~\ref{lem:growthItplus} giving an expected bound of $O\left((b/|\partial S|)\cdot(1/\phi(\partial S))\right)$ (instead of $O\left((b/|\partial S|)\cdot \log\Delta \right)$) on the number of rounds until $b$ nodes from $\partial(S^+)$ have some informed neighbor.
We prove that this bound holds, except for when $|\partial S|$ is close to $\Delta$ (between $(1-\epsilon)\cdot\Delta$ and $ \Delta\cdot\log(1/\phi(\partial S))\leq \Delta\cdot\log\log\Delta$).
For this special case, we show a slightly weaker time bound, which is larger by a factor of at most $\log\log\Delta$.
The proof of these results is similar to that of Lemma~\ref{lem:growthItplus}.
\end{enumerate}
If in Case~3 above we had the same bound for when $|\partial S|$ is close to $\Delta$ as for the other values of $|\partial S|$, then an analysis as in Section~\ref{sec:proof-main} would yield the desired $O(\log (n)/\rho)$ time to inform all nodes w.h.p.
We show that the weaker bounds we have when $|\partial S|$ is close to $\Delta$ affect only the bound we compute for the time to increase the number of informed nodes from $\Delta/(\alpha\log^{O(1)}\Delta)$ to $\Delta\log(\Delta)/\alpha$:
We show that $O((\log\log \Delta)^2/\rho)$ rounds are needed (instead of $O(\log\log(\Delta)/\rho)$) in expectation, and that $O(\log (n)/\rho)$ rounds suffice w.h.p.
The full proof will be given in the journal version of the paper.

\end{document}